
\documentclass{article}

\usepackage{microtype}
\usepackage{graphicx}
\usepackage{subfigure}
\usepackage{booktabs} 

\usepackage{hyperref}



\usepackage[accepted]{icml2024}

\usepackage{amsmath}
\usepackage{amssymb}
\usepackage{mathtools}
\usepackage{amsthm,bm}

\usepackage[capitalize,noabbrev]{cleveref}

\theoremstyle{plain}
\newtheorem{theorem}{Theorem}[section]

\newtheorem{lemma}[theorem]{Lemma}

\theoremstyle{definition}

\theoremstyle{remark}
\newtheorem{remark}[theorem]{Remark}

\newcommand{\av}{{\bf a}}

\newcommand{\xv}{{\bf x}}
\newcommand{\Xv}{{\bf X}}
\newcommand{\uv}{{\bf u}}
\newcommand{\zv}{{\bf z}}

\newcommand{\thetav}{\bm{\theta}}

\newcommand{\Prob}{\mathbb{P}}

\newcommand{\yv}{{\bf y}}
\newcommand{\diag}{{\rm diag}}
\newcommand{\Cc}{\mathcal{C}}
\newcommand{\argmin}{{\rm argmin}}
\newcommand{\E}{\mathbb{E}}
\newcommand{\Ec}{\mathcal{E}}
\newcommand{\Nc}{\mathcal{N}}


\newcommand{\wv}{{\bf w}}

\newcommand{\xvh}{\hat{\xv}}
\newcommand{\xvt}{\tilde{\xv}}
\newcommand{\Xt}{\tilde{X}}

\usepackage{multirow}

\usepackage[textsize=tiny]{todonotes}

\icmltitlerunning{Bagged Deep Image Prior for Recovering Images in the Presence of Speckle Noise}

\begin{document}
\onecolumn

\icmltitle{Bagged Deep Image Prior for Recovering Images in the Presence of Speckle Noise}



\icmlsetsymbol{equal}{*}

\begin{icmlauthorlist}
\icmlauthor{Xi Chen}{yyy}
\icmlauthor{Zhewen Hou}{comp}
\icmlauthor{Christopher A. Metzler}{sch}
\icmlauthor{Arian Maleki}{comp}
\icmlauthor{Shirin Jalali}{yyy}
\end{icmlauthorlist}

\icmlaffiliation{yyy}{Department of Electrical and Computer Engineering, Rutgers University, Piscataway, USA.}
\icmlaffiliation{comp}{Department of Statistics, Columbia University, New York, USA.}
\icmlaffiliation{sch}{Department of Computer Science,
University of Maryland, College Park, USA}

\icmlcorrespondingauthor{Shirin Jalali}{shirin.jalali@rutgers.edu}

\icmlkeywords{Machine Learning}

\vskip 0.3in



\printAffiliationsAndNotice{}  

\begin{abstract}
We investigate both the theoretical and algorithmic aspects of likelihood-based methods for recovering a complex-valued signal from multiple sets of measurements, referred to as looks, affected by speckle (multiplicative) noise. Our theoretical contributions include establishing the first existing theoretical upper bound on the Mean Squared Error (MSE) of the maximum likelihood estimator under the deep image prior hypothesis. Our theoretical results capture the dependence of MSE upon the number of parameters in the deep image prior, the number of looks, the signal dimension, and the number of measurements per look. On the algorithmic side, we introduce the concept of bagged Deep Image Priors (Bagged-DIP) and integrate  them with projected gradient descent. Furthermore, we show how employing Newton-Schulz algorithm for calculating matrix inverses within the iterations of PGD reduces the computational complexity of the algorithm. We will show that this method achieves the state-of-the-art performance. 


\end{abstract}

\section{Introduction}\label{sec:intro}
One of the most fundamental and challenging issues many coherent imaging systems face is the existence of speckle noise. An imaging system with ``fully-developed" speckle noise can be modeled as
\begin{align}\label{eq:firstmodel}
\yv = A X_o \wv + \zv.
\end{align}
Here, $X_o$ denotes a  diagonal matrix with diagonal elements $\xv_o \in \mathbb{C}^n$, representing the complex-valued signal of interest, $\wv \in \mathbb{C}^n$, which is known as speckle noise (or multiplicative noise) includes independent and identically distributed (iid) complex-valued random variables with $\wv_i \sim \mathcal{CN} (0,\sigma_w^2 I_n)$, and finally  $\zv \in \mathbb{C}^m$ denotes the additive noise that is often caused by the sensors and is modeled as iid $\mathcal{CN}(0, \sigma_z^2)$. In this paper, we explore the scenario where $m \leq n$, allowing imaging systems to capture higher resolution images than constrained by the number of sensors.\footnote{Considering $m<n$ for simpler imaging systems (with no speckle noise) has led to the development of the fields of compressed sensing and compressive phase retrieval. }

As is clear from \eqref{eq:firstmodel}, the multiplicative nature of the speckle noise poses a challenge in extracting accurate information from measurements, especially when the measurement matrix $A$ is ill-conditioned. To alleviate this issue, many practical systems employ a technique known as multilook or multishot \cite{argenti2013tutorial,bate2022experimental}. Instead of taking a single measurement of the image, multilook systems capture multiple measurements, hoping for each group of measurements to have independent speckle and additive noises. A model commonly used for multilook systems is:
\[
\yv_l = A X_o \wv_l + \zv_l,
\]
where $l=1,\ldots,L$, with $L$ denoting the number of looks. Moreover,  $\wv_1, \ldots, \wv_L \in \mathbb{C}^n$ and $\zv_1, \ldots, \zv_L \in \mathbb{C}^m$ denote  the  speckle noise vectors and additive noise vectors, respectively. In this model, we have assumed that the measurement kernel
$A$ remains constant across the looks. This assumption holds in multilooking for several imaging systems, such as when sensors' locations change slightly for different looks.

Given the fact that fully-developed noises are complex-valued Gaussian and have uniform phases, the phase of $\xv_o$ cannot be recovered. Hence, the goal of a multilook system is to obtain a precise estimate of $|\xv_o|$, where $|\cdot|$ is the absolute value function applied component-wise to the elements of $\xv_o$, based on the observations $(\yv_1, \ldots, \yv_L)$ and measurement matrix $A$. Given the fact that the phase of $\xv_o$ is not recoverable, in the rest of the paper, we assume that $\xv_o$ is a real-valued signal.

A standard approach for estimating $\xv_o$ is to minimize the negative log-likelihood function:
\begin{equation}\label{eq:formulation1}
\xvh = \arg\min_{\xv \in \Cc} f_L(\xv),
\end{equation}
 where $\Cc$ represents the set encompassing all conceivable images and $f_{L}(\xv)$ is given by:
\begin{align}
    f_{L}(\xv) 
    =&  \log \det(B(\xv)) + \frac{1}{L} \sum_{\ell=1}^L \tilde{\yv}_\ell^T (B(\xv))^{-1} \tilde{\yv}_\ell, \label{eq:ll-SL}
\end{align}
where  
\begin{align*}
    B(\xv) = \begin{bmatrix} \sigma_z^2 I_n + {\sigma_w^2} \Re (A X^2 \bar{A}^T) & -{\sigma_w^2} \Im (A X^2 \bar{A}^T) \\ {\sigma_w^2} \Im (A X^2 \bar{A}^T) & \sigma_z^2 I_n + {\sigma_w^2} \Re (A X^2 \bar{A}^T) \end{bmatrix},
\end{align*}
and
$\tilde{\yv}_\ell^T  = \begin{bmatrix}
        \Re (\yv_\ell^T) & \Im (\yv_\ell^T)
    \end{bmatrix}
$ with $\Re$ and $\Im$ denoting the real and imaginary parts of vectors. Appendix \ref{app:MLE} presents the derivation of this likelihood and its gradient.

It is important to note that the set $\Cc$ in \eqref{eq:formulation1} is not known in practice. Hence, in this paper we work with the following hypothesis that was put forward in \citep{ulyanov2018deep, heckel2018deep}.
\vspace{-.3cm}

\begin{itemize}
\item \textbf{Deep image prior (DIP) hypothesis} ~\citep{ulyanov2018deep, heckel2018deep}: Natural images can be embedded within the range of neural networks that have substantially fewer parameters than the total number of pixels, and use iid noises as inputs. 
\end{itemize}
\vspace{-.3cm}

Inspired by this hypothesis, we consider $\Cc$ in \eqref{eq:formulation1} as the range of a deep image prior. More specifically, we assume that for every $\xv \in \Cc$ there exists $\thetav \in \mathbb{R}^k$ such that $ \xv \approx g_{\thetav} (\uv)$, where $\uv$ is generated iid from distribution $\mathcal{N}(0,1)$, and $\thetav \in \mathbb{R}^k$ denotes the parameters of the neural network. There are two main challenges that we address: 
\vspace{-.3cm}

\begin{itemize}
\item Theoretical challenge: Assuming we can solve the optimization problem \eqref{eq:formulation1} under the DIP hypothesis, the question arises: can we theoretically assess the quality of the reconstruction? Additionally, what is the behavior of the reconstruction error in relation to $k$, number of parameters of the neural network, $m, n,$ and $ L$? Specifically, in the scenario where the scene is static, and we have the ability to acquire as many looks as necessary, what level of accuracy can be expected in the final estimate?
\vspace{-.2cm}

\item Practical challenge: Given the challenging nature of the likelihood and the DIP hypothesis, what is a good way to solving \eqref{eq:formulation1} under the DIP hypothesis?
\end{itemize}
\vspace{-.3cm}

Here is a summary of our contributions:

On the theoretical side, we will establish the first theoretical result regarding the performance of multilook coherent imaging systems. Our theoretical results unveil intriguing characteristics of such imaging systems. For instance, we will show that when $L$ is large, these systems offer very accurate reconstructions when $m^2 = \omega (k \log n)$. We will clarify some of our novel technical contributions for obtaining such a theoretical result in Section \ref{sec:main:theory}. 

On the applied side, we start with vanilla projected gradient descent (PGD) \cite{lawson1995solving}, but it faces two challenges diminishing its effectiveness on this problem:

\begin{enumerate}

\item[] Challenge 1: As will be described in Section \ref{sec:PGDchallenges}, in the PGD, the signal to be projected on the range of $g_{\thetav}(\uv)$ is burried in ``noise''. Hence, deep image priors with large number of parameters will overfit to the noise and will not let the PGD algorithm obtain a better estimate \cite{heckel2018deep,heckel2019denoising}. On the other hand, the low accuracy of simpler DIPs become a bottleneck as the algorithm progresses through iterations, impacting the overall performance of PGD. To alleviate this issue, we propose \textbf{Bagged-DIP}. This is a simple idea with roots in classical literature of ensemble methods \cite{breiman1996bagging}. Bagged-DIP idea enables us to use complex DIPs at every iteration and yet obtain accurate results. 

\item[] Challenge 2: As will be clarified in Section \ref{ssec:pgd:dip:intro}, PGD requires the inversion of large matrices at every iteration, which is a computationally challenging problem. We alleviate this issue by using the Newton-Schulz algorithm \cite{schulz1933iterative}, and empirically demonstrating that \textbf{only one} step of this algorithm is sufficient for the PGD algorithm. This significantly reduces the computational complexity of each iteration of PGD. 

\end{enumerate}

\section{Related work}

Eliminating speckle noise has been extensively explored in the literature~\cite{lim1980techniques,gagnon1997speckle,tounsi2019speckle}. Current technology relies on gathering enough measurements to ensure the invertibility of matrix $A$ and subsequently inverting $A$ to represent the measurements in the following form: $\yv_\ell = X\wv_\ell+ \zv_\ell$.
However, as matrix $A$ deviates from the identity, the elements of the vector $\zv$ become dependent. In practice, these dependencies are often overlooked, simplifying the likelihood. This simplification allows researchers to leverage various denoising methods, spanning from classical low-pass filtering to application of convolutional neural networks \cite{tian2020attention} and transformers \cite{fan2022sunet}. A series of papers have considered the impact of the measurement kernel in the algorithms. By using single-shot digital holography, the authors in~\cite{pellizzari2017phase, pellizzari2018optically} develop heuristic method to obtain maximum a posteriori estimate of the real-valued speckle-free object reflectance. They later extend this method to handle multi-shot measurements and incorporate more accurate image priors~\cite{pellizzari2020coherent,pellizzari2022solving, bate2022experimental}. While these methods can work with non-identity $A$'s, they still require $A$ to be well-conditioned.

Our paper is different from the existing literature, mainly because we study scenarios where the matrix $A$ is under-sampled ($m<n$). In a few recent papers, researchers have explored similar problems~\cite{zhou2022compressed,chen2023multilook}. The paper \cite{chen2023multilook} aligns closely in scope and approach with our work. The authors addressed a similar problem, albeit assuming real-valued measurements and noises, and advocated for the use of DIP-based PGD. Addressing the concerns highlighted in the last section (further elucidated in Section \ref{sec:simulation_baggedDIP}), our Bagged-DIP-based PGD employing the Newton-Schulz algorithm significantly outperforms \cite{chen2023multilook} in both reconstruction quality and computational complexity. We will provide more information in our simulation studies. 
Furthermore, we should emphasize that \cite{chen2023multilook} did not offer any theoretical results regarding the performance of DIP-based MLE.  

 The authors in in \cite{zhou2022compressed} theoretically demonstrated the feasibility of accurate estimates for $\mathbf{x}_o$ even with $m<n$ measurements. While our theoretical results build upon the contributions of \cite{zhou2022compressed}, our paper extends significantly in two key aspects: (1) We address the multilook problem and investigate the influence of the number of looks on our bounds. To ensure sharp bounds, especially when $L$ is large, we derive sharper bounds than those presented in \cite{zhou2022compressed}. These require novel technical contributions (such as using decoupling method)  as detailed in our proof. (2) In contrast to the use of compression codes' codewords for the set $\Cc$ in \cite{zhou2022compressed}, we leverage the range of a deep image prior, inspired by recent advances in machine learning. Despite presenting new challenges in proving our results, this approach enables us to simplify and establish the relationship between Mean Squared Error (MSE) and problem specification parameters such as $n, m, k, L$.

Given DIP's flexibility, it has been employed for various imaging and (blind) inverse problems, e.g., compressed sensing, phase retrieval etc.~\cite{jagatap2019algorithmic, ongie2020deep,darestani2021accelerated, ravula2022one,zhuang2022practical,zhuang2023blind}. 
To boost the performance of DIP in these applications, researchers have explored several ideas, including, introducing explicit regularization ~\citep{mataev2019deepred}, incorporating prior on network
weights by introducing a learned regularization method into the DIP structure \cite{van2018compressed}, and exploring the effect of changing DIP structures and input noise settings to speed up DIP training \cite{li2023deep}. 


Lastly, it's important to note our work can be situated within the realm of compressed sensing (CS) \cite{donoho2006compressed, candes2008introduction, davenport2012introduction, bora2017compressed, peng2020solving, joshi2021plugin, nguyen2022provable}, where the objective is to derive high-resolution images from lower-resolution measurements. However, notably, the specific challenge of recovery in the presence of speckle noise has not been explored in the literatures before, except in \cite{zhou2022compressed} that we discussed before.


\section{Main theoretical result}\label{sec:main:theory}

As we described in the last section, in our theoretical work, we consider the cases in which $m <n$. $m$ can  even be much smaller than $n$. Furthermore, for notational simplicity, in our theoretical work only, we assume that the measurements and noises are real-valued.\footnote{For the complex-valued problem, since the phases of the elements of $\xv_o$ are not recoverable, we can assume that $\xv_o$ is real-valued. Even though in this case, the problem is similar to the problem we study in this paper, given that we have to deal with real and imaginary parts of the measurement matrices and noises, they are notationally more involved.} Hence, we work with the following likelihood function:
\begin{equation}\label{eq:formulation2}
\xvh = \arg\min_{\xv \in \Cc} f(\xv),
\end{equation}
where 
\begin{align}\label{eq:updated_lik}
 f(\xv) = \log\det \left(\sigma_z^2 I_m + \sigma_w^2 AX^2A^T\right) + \frac{1}{L} \sum_{\ell} \yv_{\ell}^T \left( \sigma_z^2I_m+\sigma_w^2 AX^2A^T\right)^{-1}  \yv_{\ell}. 
\end{align}
Note that we omit subscript $L$ from the likelihood as a way to distinguish between the negative loglikelihood of real-valued measurements from the complex-valued ones. The following theorem is the main theoretical result of the paper. Consider the case of no additve noise, i.e. $\sigma_z =0$, and that for all $i$, we have $0 < x_{\min} \leq x_{o,i} \leq x_{\max}$.

\begin{theorem}\label{thm:maintheorem}
    Let the elements of the measurement matrix $A_{ij}$ be iid $\Nc(0,1)$. Suppose that $m<n$ and that the function $g_{\thetav} (\uv)$, as a function of $\thetav \in [-1,1]^{k}$, is Lipschitz with Lipschitz constant $1$. We have
    \begin{align}\label{eq:thm:MSEbound}
    \frac{1}{n} \|\xvh-\xv_o\|_2^2 =  O\left( \frac{ \sqrt{k \log n }}{m} +  \frac{n \sqrt{ k \log n}}{  m \sqrt{L m}}  \right),
    \end{align}
    with probability $1- O(e^{-\frac{m}{2}} + e^{-\frac{Ln}{8}} + e^{-k \log n}+   {\rm e}^{k \log n-\frac{n}{2}})$.
\end{theorem}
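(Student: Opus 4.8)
The plan is to compare the negative log-likelihood $f$ in \eqref{eq:updated_lik} with its average over the speckle noise. Put $\sigma_z=0$ and, after rescaling $A$, $\sigma_w=1$; write $\yv_\ell=AX_o\wv_\ell$, $\Sigma(\xv)=AX^2A^T$, $\Sigma_o=AX_o^2A^T$, and
\[
\bar f(\xv):=\E_{\wv}[f(\xv)]=\log\det\Sigma(\xv)+\mathrm{tr}\!\big(\Sigma(\xv)^{-1}\Sigma_o\big).
\]
A short computation identifies $\bar f(\xv)-\bar f(\xv_o)=\sum_{i=1}^m h(\lambda_i)$, where $h(t)=t-1-\log t\ge 0$ and $\lambda_1,\dots,\lambda_m$ are the eigenvalues of $\Sigma(\xv)^{-1}\Sigma_o$; equivalently this is twice the KL divergence of $\Nc(0,\Sigma_o)$ from $\Nc(0,\Sigma(\xv))$, hence nonnegative. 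Since $\xvh$ minimizes $f$ over $\Cc\ni\xv_o$, the usual basic inequality gives
\[
0\le\bar f(\xvh)-\bar f(\xv_o)=(\bar f-f)(\xvh)+\big(f(\xvh)-f(\xv_o)\big)+(f-\bar f)(\xv_o)\le 2\sup_{\xv\in\Cc}|f(\xv)-\bar f(\xv)|,
\]
so the argument reduces to (i) an upper bound on the uniform deviation $\sup_{\xv\in\Cc}|f-\bar f|$, and (ii) a lower bound of $\bar f(\xv)-\bar f(\xv_o)$ in terms of $\tfrac1n\|\xv-\xv_o\|_2^2$.

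For (i), condition on $A$ and set $N(\xv)=X_oA^T\Sigma(\xv)^{-1}AX_o$, so that $f(\xv)-\bar f(\xv)=\tfrac1L\sum_\ell\big(\wv_\ell^TN(\xv)\wv_\ell-\mathrm{tr}\,N(\xv)\big)$ is a centered order-two Gaussian chaos in $\wv=(\wv_1,\dots,\wv_L)$. Gaussian singular-value bounds ($\sigma_{\min}(A)^2\gtrsim n$, $\sigma_{\max}(A)^2\lesssim n$, outside an event of probability $e^{-\Omega(m)}$) together with $0<x_{\min}\le x_{o,i}\le x_{\max}$ yield $\|N(\xv)\|=O(1)$ and $\|N(\xv)\|_F=O(\sqrt m)$ uniformly over $\xv$. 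Hanson--Wright then gives, for each fixed $\xv$, $|f(\xv)-\bar f(\xv)|\lesssim\|N\|_F\sqrt{t/L}+\|N\|\,t/L\lesssim\sqrt{mt/L}+t/L$ with probability $1-2e^{-t}$. Because $g_{\thetav}$ is $1$-Lipschitz on the compact cube $[-1,1]^k$ and $\xv\mapsto f(\xv)$ has a polynomial-in-$n$ Lipschitz constant over the (bounded) range of $g_{\thetav}$, a $\delta$-net of $[-1,1]^k$ with $\log(1/\delta)=\Theta(\log n)$, of cardinality $e^{O(k\log n)}$, upgrades this to $\sup_{\xv\in\Cc}|f-\bar f|\lesssim\sqrt{mk\log n/L}$ outside an event of probability $e^{-\Omega(k\log n)}$; hence $\sum_i h\big(\lambda_i(\Sigma(\xvh)^{-1}\Sigma_o)\big)\lesssim\sqrt{mk\log n/L}$.

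For (ii), first note the range of $g_{\thetav}$ is bounded (Lipschitz on a compact set), so the entries of $\xvh$ are $O(1)$; they are also $\Omega(1)$, because an entry below a small threshold would push some $\lambda_i(\Sigma(\xvh)^{-1}\Sigma_o)$, hence $\sum_ih(\lambda_i)$, above the bound just derived. Thus all $\lambda_i$ lie in a fixed compact subinterval of $(0,\infty)$ on which $h(t)\gtrsim(t-1)^2$, giving
\[
\sqrt{\tfrac{mk\log n}{L}}\;\gtrsim\;\sum_i(\lambda_i-1)^2=\big\|\Sigma(\xvh)^{-1/2}\big(\Sigma_o-\Sigma(\xvh)\big)\Sigma(\xvh)^{-1/2}\big\|_F^2\;\ge\;\frac{\|A(\hat X^2-X_o^2)A^T\|_F^2}{\|\Sigma(\xvh)\|^2}\;\gtrsim\;\frac{\|A(\hat X^2-X_o^2)A^T\|_F^2}{n^2},
\]
where $\hat X=\diag(\xvh)$. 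It remains to invert the map $D\mapsto ADA^T$ restricted to $D=\diag(d)$, $d\in\{\,x^2-x_o^2:\xv\in\Cc\,\}$. With $a_i$ the columns of $A$, $\|ADA^T\|_F^2=\sum_{i}d_i^2\|a_i\|^4+\sum_{i\ne j}d_id_j(a_i^Ta_j)^2$; the first (diagonal) sum concentrates around $(m^2+2m)\|d\|_2^2$, while the off-diagonal sum is a genuine order-four Gaussian chaos which I control by \emph{decoupling} it into a bilinear form in two independent copies of $A$ (so it has only sub-exponential rather than sub-Weibull tails) and then union-bounding over a $\delta$-net of the $O(k\log n)$-entropy set. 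This yields $\|A(\hat X^2-X_o^2)A^T\|_F^2\gtrsim m^2\|d\|_2^2$, up to the lower-order terms that produce the first summand in \eqref{eq:thm:MSEbound}. Finally $|d_i|=(\hat x_i+x_{o,i})|\hat x_i-x_{o,i}|\ge x_{\min}|\hat x_i-x_{o,i}|$ gives $\|d\|_2^2\gtrsim x_{\min}^2\|\xvh-\xv_o\|_2^2$, and chaining the three displays produces $\tfrac1n\|\xvh-\xv_o\|_2^2\lesssim \tfrac{n}{m^2}\sqrt{mk\log n/L}\asymp\tfrac{n\sqrt{k\log n}}{m\sqrt{Lm}}$ together with the $\tfrac{\sqrt{k\log n}}{m}$ floor; collecting the failure events ($e^{-\Omega(m)}$ for $\|A\|$, $e^{-\Omega(Ln)}$ for the aggregate norm of the $\wv_\ell$, and $e^{-\Omega(k\log n)}$, $e^{k\log n-\Omega(n)}$ from the two net unions) gives the stated probability.

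The main obstacle is the last step: establishing a restricted-invertibility lower bound for $D\mapsto ADA^T$ that is uniform over a set of metric entropy $\sim k\log n$, while (a) separating the $\Theta(m^2\|d\|_2^2)$ signal from the contaminating $m(\mathbf{1}^Td)^2$ term without losing it, and (b) keeping the tail exponents sharp enough both to survive the union bound and to make the dependence of \eqref{eq:thm:MSEbound} on $L$ tight — this is exactly what the decoupling argument buys, and it is the improvement over the single-look analysis of \cite{zhou2022compressed}. A secondary subtlety, needed both to linearize $h$ near $1$ and to pass from $\|\hat X^2-X_o^2\|_F$ back to $\|\xvh-\xv_o\|_2$, is the a~priori two-sided control of the entries of $\xvh$.
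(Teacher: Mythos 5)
Your proposal follows essentially the same route as the paper: the same basic inequality comparing $f$ with its speckle-average $\bar f$, the same curvature identity $\bar f(\xv)-\bar f(\xv_o)=\sum_i h(\lambda_i)$ with $h(t)=t-1-\log t$ (the paper packages this as Lemmas \ref{lemma:1}--\ref{lemma:vector}, quoted from \cite{zhou2022compressed}), the same Hanson--Wright concentration of $f-\bar f$ conditional on $A$ with a union bound over an $e^{O(k\log n)}$-cardinality net of $[-1,1]^k$, and the same decoupling argument (Lemma \ref{lem:lowerAX2AT}) to get the restricted lower isometry $\|ADA^T\|_{\rm HS}^2\gtrsim m^2\|\mathbf d\|_2^2 - O(mn\sqrt{k\log n})$ uniformly over the net, which is exactly where the two terms of \eqref{eq:thm:MSEbound} come from. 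The only organizational difference is that you phrase the net step as a uniform deviation bound $\sup_{\xv\in\Cc}|f-\bar f|$, whereas the paper works with the specific net point $\xvt_o$ nearest to $\xvh_o$ and controls $f(\tilde\Sigma_o)-f(\hat\Sigma_o)$ separately via the Lipschitz estimate of Lemma \ref{lemma:1:upper}; these are interchangeable.

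There is one step whose justification, as you wrote it, does not go through: the claim that the entries of $\xvh$ are $\Omega(1)$ \emph{because} a small entry would inflate $\sum_i h\bigl(\lambda_i(\Sigma(\xvh)^{-1}\Sigma_o)\bigr)$. Since $m<n$, the matrix $A\hat X^2A^T=\sum_i \hat x_i^2\, a_ia_i^T$ remains well conditioned after zeroing out one (or a few) columns, so driving a single $\hat x_i$ to $0$ does not push any eigenvalue of $\Sigma(\xvh)^{-1}\Sigma_o$ away from a fixed compact interval and does not increase the likelihood gap; your contradiction argument only rules out the case where a constant fraction of the entries is small. You need the two-sided entrywise control to invoke $h(t)\gtrsim(t-1)^2$ and the spectral bounds in the chain $\sum_i(\lambda_i-1)^2\geq \|A(\hat X^2-X_o^2)A^T\|_F^2/\|\Sigma(\xvh)\|^2$, so this is load-bearing. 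The fix is simply to assume, as the paper implicitly does when it applies Lemma \ref{lemma:vector} and the bound $(\tilde x_{o,i}+x_{o,i})^2\geq 4x_{\min}^2$ to points of $\Cc_n$, that the range of $g_{\thetav}$ is contained in $[x_{\min},x_{\max}]^n$; with that hypothesis added, your argument is complete and matches the paper's.
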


Before we discuss the proof sketch and the technical novelties of our proof strategy, let us explain some of the conclusions that can be drawn from this theorem and provide some intuition. As is clear in \eqref{eq:thm:MSEbound}, there are two terms in the mean square error. One that does not change with $L$ and the other term that decreases with $L$. To understand these two terms, we provide further explanation in the following remarks.

\begin{remark}
As the number of parameters of DIP, $k$, increases (while keeping $m, n$, and $L$ fixed), both error terms in the upper bound of MSE grow. This aligns with intuition, as increasing the number of parameters in $g_{\thetav}(\uv)$ allows the DIP model to generate more intricate images. Consequently, distinguishing between these diverse alternatives  based on the measurements becomes more challenging.
\end{remark}

\begin{remark}
The main interesting feature of the second term in the mean square error, i.e. $\frac{n \sqrt{ k \log(n)}}{  m \sqrt{L m}}$  is the fact that it grows rapidly as a function of $n$. In imaging systems with additive noise only, the growth is often logarithmic in $n$ \cite{bickel2009simultaneous}, contrasting with polynomial growth observed in here. This can be associated with the fact that as we increase $n$ we increase the number of speckle noise elements present in our measurements. Hence, it is reasonable to expect the error term to grow faster in $n$ compared with additive noise models. But the exact rate at which the error terms increases is yet unclear. As will be clarified in the proof, most of the upper bounds we derive are expected to be sharp (modulo one step in which we use a union bound, and we do not expect that union bound to loosen our upper bounds). Hence, we believe $\frac{n \sqrt{ k \log(n)}}{  m \sqrt{L m}}$ is sharp too.   
\end{remark}

\begin{remark}
As $L \rightarrow \infty$, the second term in the upper bound of MSE converges to zero, and the dominant term becomes $\sqrt{k \log n}/m$. Note that since we are considering a fixed matrix $A$, accross the looks, even when $L$ goes to infinity, we should not expect to be able to recover $\xv_o$ indepedent of the value of $m$. One heuristic way to see this is to calculate 
\begin{align}
\frac{1}{L} \sum_{\ell=1}^L \yv_{\ell} \yv_{\ell}^T =  AX_o \frac{1}{L}\sum_{\ell=1}^L \wv_{\ell} \wv_{\ell}^{T} X_oA^T.
\end{align} 
If we heuristically apply the weak law of large numbers and use the approximation $\frac{1}{L}\sum_{\ell} \wv_{\ell} \wv_{\ell}^{T} \approx I$, we can see that 
\[
\frac{1}{L} \sum_{\ell} \yv_{\ell} \yv_{\ell}^T \approx AX_o^2 A^T.
\]
Under these approximations, the matirx $\frac{1}{L} \sum_{\ell} \yv_{\ell} \yv_{\ell}^T$ provides $m(m+1)/2$ (due to symmetry) linear measurements of $X_o^2$. Hence, inspired by the field of compressed sensing, intuitively speaking, we expect the accurate recovery of $\xv_o^2$ to be possible when $m^2 \gg k \log n$ \cite{jalali2016compression}. The first error term of MSE is negligible when $m^2 \gg k \log n$, which is consistent with our conclusion based on the limit of $\frac{1}{L} \sum_{\ell} \yv_{\ell} \yv_{\ell}^T$. 

\end{remark}

We next provide a brief sketch of the proof to highlight the technical novelties of our proof and also to enable the readers to navigate through the detailed proof  more easily.

\textit{Proof sketch of Theorem \ref{thm:maintheorem}}. 
Let $\hat{\xv}_o$ denote the minimizer of function $f$ defined as 
\begin{align}\label{eq:likelihood}
f(\xv)=f(\Sigma(\xv))=-\log\det \Sigma +{1\over L \sigma_w^2}\sum_{\ell =1}^L{\rm Tr}(\Sigma\yv_{\ell}\yv_{\ell}^T),
\end{align} 
with $ \Sigma=\Sigma(\xv)=(AX^2A^T)^{-1},$ where $X=\diag(\xv)$.  For the reasons that will become clear later, we consider a $\delta_n$-net\footnote{The subscript $n$ of $\delta_n$ emphasizes that $\delta_n$ depends on $n$ and is very close to zero when $n$ is large. } of the set $[-1,1]^k$ and we call the mapping of the $\delta_n$-net under $g$, $\Cc_n$. The choice of $\delta_n$ will be discussed later. Define $\xvt_o$ as the closest vector in $\Cc_n$ to $\xvh_o$, i.e.,
\[
\xvt_o=\argmin_{\xv\in\Cc_n}\|\xvh_o-\xv\|.
\]
Let  $\Xt_o={\rm diag}({\xvt_o})$. 
Define $\Sigma_o$, $\hat{\Sigma}_o$ and $ \tilde{\Sigma}_o$, as $\Sigma_o=\Sigma(\xv_o)$, $\hat{\Sigma}_o=\Sigma(\hat{\xv}_o)$, $ \tilde{\Sigma}_o=\Sigma({\xvt}_o)$, respectively. Since $\xvh_o$ is the minimizer of \eqref{eq:likelihood}, we have
\begin{align}
f(\hat{\Sigma}_o)\leq f({\Sigma}_o). \label{eq:thm1-step1}
\end{align}

On the other hand, $f$ can be written as $f(\Sigma)=-\log\det \Sigma +{1\over L\sigma_w^2} \sum_{\ell=1}^L{\rm Tr}(\Sigma AX_o\wv_\ell \wv^T_\ell X_oA^T)$. Let $\bar{f}(\Sigma)$ denote the expected value of  ${f}(\Sigma)$ with respect to $\wv_1, \ldots, \wv_{\ell}$. It is straightforward to show
\begin{align}
\bar{f}(\Sigma)=-\log\det \Sigma +{\rm Tr}(\Sigma AX_o^2A^T).\label{eq:min-f-bar}
\end{align}
As a function of $\Sigma$, $\bar{f}$ achieves its minimum at $\Sigma_o^{-1}= AX_o^2A^T$. We have
\begin{align}\label{eq:first_ineq}
\bar{f}(\tilde{\Sigma}_o)-\bar{f}(\Sigma_o)&=\bar{f}(\tilde{\Sigma}_o)- f(\tilde{\Sigma}_o)+f(\tilde{\Sigma}_o)-{f}(\hat{\Sigma}_o)+{f}(\hat{\Sigma}_o)-{f}({\Sigma}_o)+f({\Sigma}_o)-\bar{f}({\Sigma}_o) \nonumber\\
&\leq \bar{f}(\tilde{\Sigma}_o)- f(\tilde{\Sigma}_o)+f(\tilde{\Sigma}_o)-{f}(\hat{\Sigma}_o)+f({\Sigma}_o)-\bar{f}({\Sigma}_o),
\end{align}
where to obtain the last inequality we have used \eqref{eq:thm1-step1}. The roadmap of the rest of the proof is the following:
\begin{enumerate}
\item Obtaining a lower bound for $\bar{f}(\tilde{\Sigma}_o)-\bar{f}(\Sigma_o)$ in terms of $\|\xvt_o- \xv_o\|_2^2$. Note that since $\bar{f} (\Sigma)$ is a convex function of $\Sigma$ and is minimized at $\Sigma_o$ we expect to be able to obtain such bounds. Nevertheless this is the most challenging part of the proof, because of the relatively complicated dependence of $\xv$ and $\Sigma$, and the dependence of $\Sigma$ on $A$ in addition to $\xv$. Using sharp linear-algebraic bounds combined with the decoupling ideas \cite{de2012decoupling} enabled us to obtain a sharp lower bound for this quantity.  

\item Finding upper bounds for $\bar{f}(\tilde{\Sigma}_o)- f(\tilde{\Sigma}_o)$ and $f (\Sigma_o) - \bar{f}({\Sigma}_o)$. Such bounds can be obtained using standard concentration of measure results, such as Hanson-Wright inequality, and the concentration of singular values of iid Gaussian random matrices. 

\item Finding an upper bound for $f(\tilde{\Sigma}_o)-{f}(\hat{\Sigma}_o)$: note that intuitively, we expect $\|\tilde{\Sigma}_o- \hat{\Sigma}_o\|$ to be small as well. Assuming that function $f$ is a nice function, in the sense that it maps nearby points to nearby points in its range, we expect $f(\tilde{\Sigma}_o)-{f}(\hat{\Sigma}_o)$ to be small too. However, note that the function $f$ has the randomness of $\wv_1, \ldots, \wv_L$ and $A$. Hence, to make our heuristic argument work, we have to first prove that with high probability $f$ is a nice function.  
\end{enumerate}
Details of the three steps is presented in Section \ref{ssec:mainstepsappendix}.

\section{Main algorithmic  contributions}\label{ssec:mainempirical}
\subsection{Summary of projected gradient descent and DIP}\label{ssec:pgd:dip:intro}
As discussed in Section \ref{sec:intro}, we aim to solve the optimization problem \eqref{eq:updated_lik} under the DIP hypothesis. A popular heuristic for achieving this is using projected gradient descent (PGD). At each iteration $t$, the estimate $\mathbf{x}^t$ is updated as follows:
\begin{equation}
    {\xv}^{t+1} = \text{Proj} (\xv^t - \mu_t \nabla \ell({\xv}^t)),
\end{equation}
where $\text{Proj} (\cdot)$ projects its input onto the range of the function $g_{\thetav} (\uv)$, and $\mu_t$ denotes the learning rate. The details of the calculation of $\nabla \ell(\mathbf{x}^t)$ are outlined in Appendix \ref{app:MLE}.

An outstanding question in the implementation pertains to the nature of the projection operation $\text{Proj} (\cdot)$. If $g_{\theta}(\uv)$, in which $\theta$ denotes the parameters of the neural network and $\uv$ denotes the input Gaussian noise, represents the reconstruction of the DIP, during training, DIP learns to reconstruct images by performing the following two steps:
\begin{align}
&\hat{\theta}^t = \operatorname*{argmin}_{\theta} \| g_{\theta}(\uv) - (\xv^t - \mu_t \nabla \ell({\xv}^t)) \|, \nonumber \\
&{\xv}^{t+1} = g_{\hat{\theta}^t}(\uv), 
\end{align}
where to obtain a local minima in the first optimization problem, we use Adam \cite{kingma2014adam} .  
One of the main challenges in using DIPs in PGD is that the performance of DIP $g_{\thetav}(\uv)$ is affected by the structure choices, training iterations as well as the statistical properties of $(\xv^t - \mu_t \nabla \ell({\xv}^t))$~\cite{heckel2019denoising}. We will discuss this issue in the next section. 

\begin{figure}[t]
\centering
\includegraphics[width=0.4\textwidth]{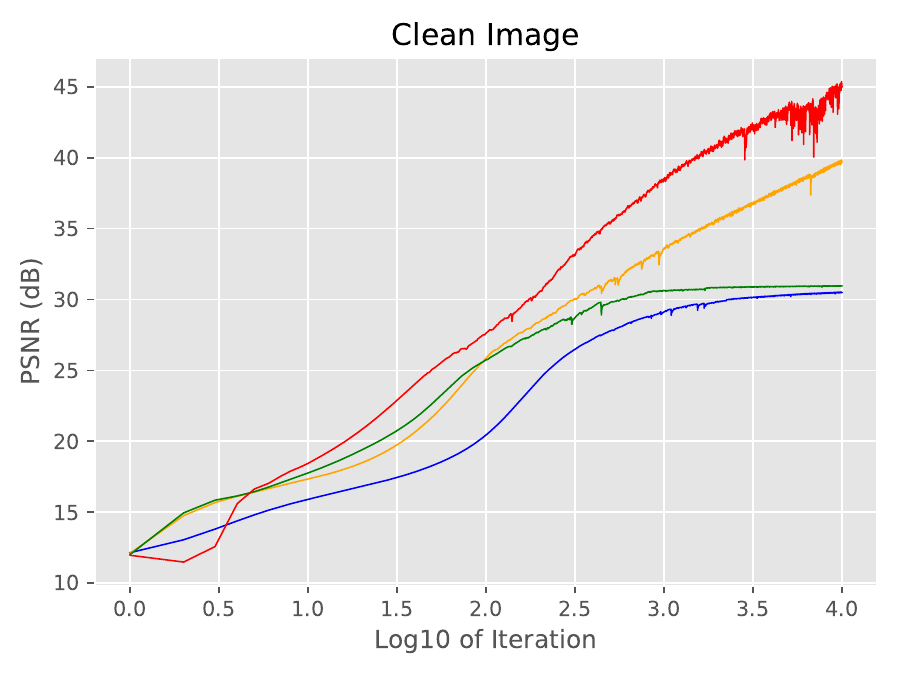}
\includegraphics[width=0.4\textwidth]{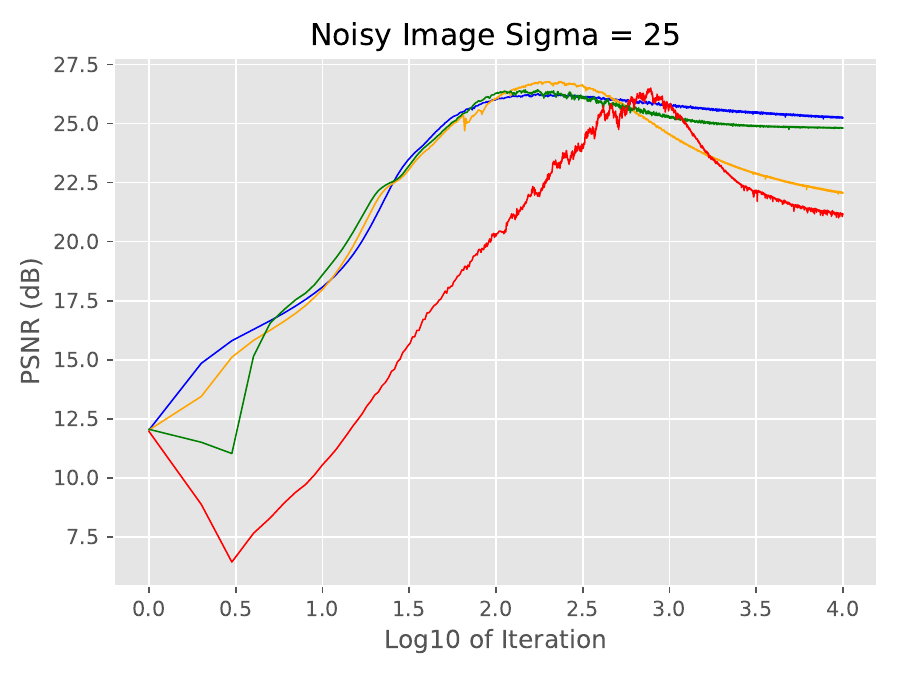}
\vspace{-.3cm}
\caption{PSNR (averaged over 8 images) versus iteration count is depicted for four DIP models fitted to both clean (left panel) and noisy images with noise level $\sigma=25$ (right panel). The 4-layer networks are specified as follows: Blue - kernel size=1, channels [100, 50, 25, 10]; Orange - kernel size=3, channels same as Blue; Green - kernel size=1, channels [128, 128, 128, 128]; Red - kernel size=3, channels same as Green.}
\label{fig:decoder_compare_clean}
\end{figure}
\subsection{Challenges of DIP-based-PGD} \label{sec:PGDchallenges}

In this section, we examine two primary challenges encountered by DIP-based PGD and present novel perspectives for addressing them.

\subsubsection{Challege 1: Right choice of DIP}

Designing PGD, as described in Section \ref{ssec:pgd:dip:intro}, is particularly challenging when it comes to selecting the appropriate network structure for DIP. Figure \ref{fig:decoder_compare_clean} clarifies the main reason. In this figure, four DIP networks are used for fitting to the clean image (left panel) and an image corrupted by the Gaussian noise (right panel). As is clear, the sophisticated networks fit the clean image very well. However, they are more susceptible to overfitting when the image is corrupted with noise. On the other hand, the networks with simpler structure do not fit to the clean image well, but are less susceptible to the noise than the sophisticated-DIPs. This issue has been observed in previous work \cite{heckel2018deep, heckel2019denoising}. 

The problem outlined above poses a challenge for the DIP-based PGD. Note that if $\mathbf{x}^t - \mu_t \nabla f(\mathbf{x}^t)$ closely approximates $\mathbf{x}_o$, fitting a highly intricate deep decoder to $\mathbf{x}^t - \mu_t \nabla f(\mathbf{x}^t)$ will yield an estimate that remains close to $\mathbf{x}_o$. Conversely, if overly simplistic networks are employed in this scenario, their final estimate may fail to closely approach $\mathbf{x}^t - \mu_t \nabla f(\mathbf{x}^t)$, resulting in a low-quality estimate. In the converse scenario, where $\mathbf{x}^t - \mu_t \nabla f(\mathbf{x}^t)$ is significantly distant from $\mathbf{x}_o$, a complex network may overfit to the noise. On the contrary, a simpler network, capable of learning only fundamental features of the image,  may generate an estimate that incorporates essential image features, bringing it closer to the true image.

The above argument suggests the following approach: initiate DIP-PGD with simpler networks and progressively shift towards more complex structures as the estimate quality improves\footnote{A somewhat weaker  approach would be to use intricate networks at every iteration, but then use some regularization approach such as early stopping to control the complexity of the estimates. }. However, finding the right complexity level of the deep image prior for each iteration of PGD, in which the statistics of the error in the estimate $\mathbf{x}^t - \mu_t \nabla f(\mathbf{x}^t)$ is not known and may be image dependent, is a challenging problem. In the next section, we propose a new approach for aliviating this problem. 

\subsubsection{Solution to Challege 1: Bagged-DIP}\label{ssec:baggedDIP}

Our new approach is based on a classical idea in statistics and machine learning: Bagging. 
Rather than finding the right complexity level for the DIP at each iteration, which is a computationally demanding and statistically challenging problem, we use  bagging. The idea of bagging is that in the case of challenging estimation problems, we create several low-bias and hopefully weakly dependent estimates  (we are overloading the phrase weakly-dependent to refer to situations in which the cross-correlations of different estimates are not close to $1$ or $-1$)  of a quantity and then calculate the average of those quantities to obtain a lower-variance estimate. In order to obtain weakly dependent estimates, a common practice in the literatire is to apply the same learning scheme to multiple datasets, each of which is a random purterbation of the original training set, see e.g. the construction of random forrests. 

While there are many ways to create Bagged-DIP estimates, in this paper, we explore a few very simple estimates, leaving other choices for future research. First we select a network that is sophisticated enough to fit well to real-world images. The details of the newtwork we use for this paper can be found in Appendix~\ref{app:add_experiment}. Using the neural network provides our initial estimate of the image from the noisy observation.
%
%
To generate a new estimate, we begin by selecting an integer number $k$, partitioning an image of size $(H \times W)$ into non-overlapping patches of sizes $(h_k \times w_k)$. Independent Deep Image Priors (DIPs), with the same structure as the main one, are then employed to reconstruct each of these $(h_k \times w_k)$ patches. Essentially, the estimation of the entire image involves learning $\frac{HW}{h_k w_k}$ DIP models. By placing these $\frac{HW}{h_k w_k}$ patches back into their original positions, we obtain the estimate of the entire image, denoted as $\check{\xv}_k$. A crucial aspect of this estimate is that the estimation of a pixel relies solely on the $h_k \times w_k$ patch to which the pixel belongs and no other pixel. By iterating this process for $K$ different values of $h_k \times w_k$, we derive $K$ estimates denoted as $\check{\xv}_1, \ldots, \check{\xv}_K$. The final sought-after estimate is obtained by averaging the individual estimates. 

The estimation of a pixel in $\check{\xv}_k$ is only dependent on the $h_k \times w_k$ patch to which the pixel belongs. As our estimates for different values of $k$ utilize distinct regions of the image to derive their pixel estimates, we anticipate these estimates to be weakly-dependent (again in the sense that the cross-correlations are not close to $1$ or $-1$).

\subsubsection{Challenge 2: Matrix inversion}

The gradient of $f_L(\xv)$ defined in \eqref{eq:formulation1} is (see Appendix \ref{app:MLE}):
\begin{align} \label{eq:GD-ML}
    \frac{\partial f_L}{\partial x_j}=  2 \xv_j \sigma_w^2 \left( \tilde{\av}_{\cdot, j}^{+T} B^{-1} \tilde{\av}_{\cdot, j}^{+} + \tilde{\av}_{\cdot, j}^{-T} B^{-1} \tilde{\av}_{\cdot, j}^{-} \right) - \frac{2 \xv_j \sigma_w^2}{L } \sum_{l=1}^L \left[ \left( \tilde{\av}_{\cdot, j}^{+T} B^{-1} \tilde{\yv}_l \right)^2 + \left( \tilde{\av}_{\cdot, j}^{-T} B^{-1} \tilde{\yv}_l \right)^2 \right],
\end{align}
where
$\Tilde{\mathbf{a}}^+_{\cdot, j} = \begin{bmatrix} \Re (\mathbf{a}_{\cdot, j}) \\ \Im (\mathbf{a}_{\cdot, j}) \end{bmatrix}$, $\Tilde{\mathbf{a}}^-_{\cdot, j} = \begin{bmatrix} -\Im (\mathbf{a}_{\cdot, j}) \\ \Re (\mathbf{a}_{\cdot, j}) \end{bmatrix}$, $\tilde{\yv}_l = \begin{bmatrix} \Re (\yv_l) \\ \Im (\yv_l) \end{bmatrix}$, $\mathbf{a}_{\cdot, j}$ denotes the $j$-th column of matrix $A$. It's important to highlight that in each iteration of the PGD, the matrix $B$ changes because it depends on the current estimate $\xv^t$. 
This leads to the computation of the inverse of a large matrix $B \in \mathbb{R}^{2m \times 2m}$ at each iteration, posing a considerable computational challenge and a major obstacle in applying DIP-based PGD for this problem.
In the next section, we present a solution to address this issue.

\subsubsection{Solution to Challenge 2}

To address the challenge mentioned in the last section, we propose to use Newton-Schulz algorithm. Newton-Schulz, is an iterative algorithm for obtaining a matrix inverse. The iterations of Newton-Schulz for finding $(B_t)^{-1}$ is given by
\begin{equation}
    M^{k} = M^{k-1} + M^{k-1} (I- B_t M^{k-1}),
\end{equation}
where $M^{k}$ is the approximation of $(B_t)^{-1}$ at iteration $k$. It is shown that if $\sigma_{\max}(I - M^0 B_{t}) < 1$, the Newton-Schulz converges to $B_{t}^{-1}$ quadratically fast \cite{gower2017randomized, stotsky2020convergence}. 


An observation to alleviate the mentioned issue in the previous section is that, given the nature of the gradient descent, we don't anticipate significant changes in the matrix $X_t^2$ from one iteration to the next. Consequently, we expect $B_t$ and $B_{t-1}$, as well as their inverses, to be close to each other.


Hence, instead of calculating the full inverse at iteration $t+1$, we can employ the Newton-Schulz algorithm with $M^0$ set to $(B_t)^{-1}$ from the previous iteration.  
Our simulations will show that \textbf{one} step of the Newton-Schulz algorithm suffices. 

\begin{table*}[t]
    \centering
    \scriptsize
\begin{tabular}{lccccccccccc}\hline
    \textbf{m/n} & \textbf{$\#$looks} & \textbf{Barbara} & \textbf{Peppers} & \textbf{House} & \textbf{Foreman} & \textbf{Boats} & \textbf{Parrots} & \textbf{Cameraman} & \textbf{Monarch} & \textbf{Average}\\\hline
\multirow{3}{*}{12.5\%}  
    & 25 & 19.91/0.443 & 19.70/0.385 & 20.15/0.377 & 19.10/0.355 & 20.20/0.368 & 17.61/0.372 & 18.19/0.426 & 19.06/0.524 & 19.24/0.406 \\
    & 50 & 20.90/0.567 & 21.69/0.535 & 22.27/0.531 & 20.51/0.577 & 21.41/0.470 & 19.23/0.486 & 19.31/0.492 & 21.33/0.642 & 20.83/0.538\\
    & 100 & 21.84/0.633 & 22.41/0.657 & 23.96/0.624 & 20.63/0.638 & 22.52/0.536 & 19.68/0.574 & 20.66/0.512 & 22.56/0.720 & 21.78/0.612\\\hline\hline
\multirow{3}{*}{25\%}  
    & 25 & 23.57/0.586 & 23.17/0.547 & 24.25/0.520 & 23.30/0.526 & 22.77/0.487 & 21.23/0.522 & 21.50/0.496 & 23.13/0.707 & 22.86/0.549\\
    & 50 & 25.38/0.689 & 25.12/0.691 & 26.84/0.652 & 25.12/0.681 & 24.50/0.601 & 23.37/0.636 & 24.30/0.642 & 24.93/0.785 & 24.95/0.672\\
    & 100 & 26.26/0.748 & 26.14/0.759 & 28.33/0.717 & 26.41/0.772 & 25.72/0.682 & 24.55/0.720 & 26.22/0.719 & 26.28/0.845 & 26.24/0.745\\\hline\hline
\multirow{3}{*}{50\%}  
    & 25 & 27.30/0.759 & 27.02/0.724 & 28.56/0.697 & 27.56/0.735 & 26.21/0.669 & 25.94/0.728 & 27.95/0.762 & 27.17/0.845 & 27.21/0.740\\
    & 50 & 28.67/0.816 & 28.52/0.804 & 30.30/0.762 & 28.88/0.827 & 27.58/0.739 & 27.23/0.799 & 30.21/0.843 & 28.86/0.898 & 28.78/0.818\\
    & 100 & 29.40/0.843 & 29.21/0.849 & 31.61/0.815 & 29.74/0.871 & 28.45/0.785 & 28.20/0.848 & 31.58/0.902 & 30.05/0.932 & 29.78/0.856\\\hline
\end{tabular}
    \vspace{-0.1cm}
    \caption{PSNR(dB)/SSIM $\uparrow$ of 8 test images with $m/n=12.5\%/25\%/50\%$, $L=25/50/100$.}
    \label{tab:main-CS-PSNR-SSIM}
\vspace{-0.3cm}
\end{table*}

\section{Simulation results}

\subsection{Study of the impacts of different modules}

\subsubsection{Newton-Schulz iterations}

In this section, we aim to answer the following questions: (1) Is the Newton-Schulz algorithm effective in our Bagged-DIP-based PGD? (2) What is the minimum number of iterations for the Newton-Schulz algorithm to have good performance in Bagged-DIP-based PGD? (3) How does the computation time differ when using the Newton-Schulz algorithm compared to exact inverse computation?

\begin{figure}[ht]
\centering
\includegraphics[width=0.4\textwidth]{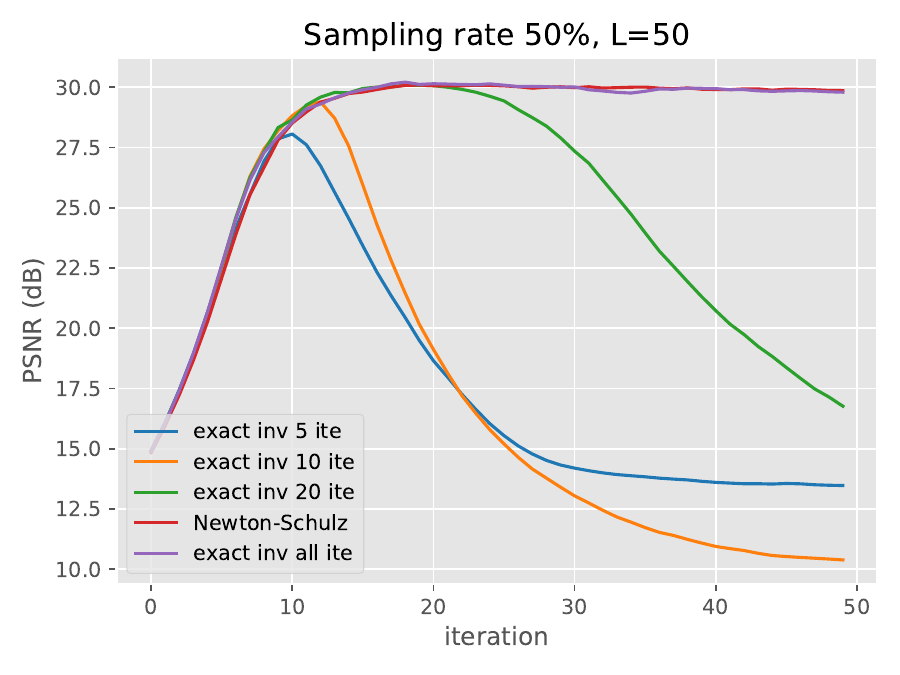}
\vspace{-.4cm}
\caption{
Newton-Schulz approximation (red) compared with computing exact inverse (purple). Blue, orange and green curves correspond to stopping the update of the inverse after the first $5$, $10$, and $20$ iterations respectively.}
\label{fig:iterative_alg}
\end{figure}

Figure \ref{fig:iterative_alg} shows one of the simulations we ran to address the first two questions. In this figure, we have chosen $L=50$ and $m/n =0.5$, and the learning rate of PGD is $0.01$. The result of Bagged-DIP-based PGD with a \textbf{single step} of Newton-Schulz is virtually identical to PGD with the exact inverse. To investigate the impact of the Newton-Schulz algorithm further, we next checked if applying even one step of Newton-Schulz is necessary. Hence, in three different simulations we stopped the matrix inverse update at iterations $5$ (blue), $10$ (orange), and $20$ (green). As is clear from Figure \ref{fig:iterative_alg}, a few iterations after stopping the update, PGD starts diverging. Hence, we conclude that a single step of the Newton-Schulz is necessary and sufficient for PGD. 

To address the last question raised above, we evaluated how much time the calculation of the gradient takes if we use one step of the Newton-Schulz compared to the full matrix inversion. Our results are reported in Table \ref{tab:Newton_Schulz_time}. Our simulations are for sampling rate $50\%,$ and number of looks $ L=50$ and three different images sizes.\footnote{Our algorithm still faces memory limitations on a single GPU when processing $256 \times 256$ images. Addressing this issue through approaches like parallelization remains subject for future research.} As is clear the Newton-Schulz is much faster.  

\vspace{-.3cm}
\begin{table}[ht]
    \centering
        \caption{Time (in seconds) required for exact matrix inversion and its Newton-Schulz approximation in PGD step.}
    \label{tab:Newton_Schulz_time}
    \scriptsize
\begin{tabular}{ccccc}\hline
    & \textbf{Image size} & \textbf{32 $\times$ 32} & \textbf{64 $\times$ 64} & \textbf{128 $\times$ 128} \\\hline
    & \textbf{GD w/ Newton-Schulz} & $\sim$ 7e-5 & $\sim$ 8e-5 & $\sim$ 1e-4\\
    & \textbf{GD w/o Newton-Schulz}  & $\sim$ 0.3 & $\sim$ 1.2 & $\sim$ 52.8\\\hline
\end{tabular}
\end{table}
\vspace{-0.1cm}

In our final algorithm, if the difference between $\|\xv^{t}-\xv^{t-1}\|_{\infty} > \delta_{\xv}$, then we use the exact inverse update. $\delta_{\xv}$ is set to 0.12 (please refer to Appendix \ref{app:add_experiment} for details) in all our simulations. Based on this updating criterion, we observe that the exact matrix inverse is only required for the first 2-3 iterations, and it is adaptive enough to guarantee the convergence of PGD.

\subsubsection{Bagged-DIP}
Intuitively speaking, the more weakly dependent estimates one generate the better the average estimate will be. In the context of DIPs, there appear to be many different ways to create weakly dependent samples. The goal of this section is not to explore the full-potential of Bagged-DIPs. Instead, we aim to demonstrate that even a few weakly dependent samples can offer noticeable improvements. Hence, unlike the classical applications of bagging in which thousands of bagged samples are generated, to keep the computations managable, we have only considered three bagged estimates. Figure \ref{fig:simple_bagged_L} shows one of our simulations. More simulations are in Appendix \ref{app:add_exp_bagging}. In this simulation we have chosen $K=3$, i.e. we have only three weakly-dependent estimates. These estimates are constructed according to the recipe presented in Section \ref{ssec:baggedDIP} with the following patch sizes: 
$h_1 = w_1 = 32$, $h_2 = w_2 = 64$, and $h_3 = w_3 = 128$.
As is clear from the left panel of Figure \ref{fig:simple_bagged_L}, even with these very few samples, Bagged-DIPs has offered between $0.5$dB and $1$dB  over the three estimates it has combined.  

\vspace{-.1cm}
\begin{figure}[ht]
\centering
\includegraphics[width=0.4\textwidth]{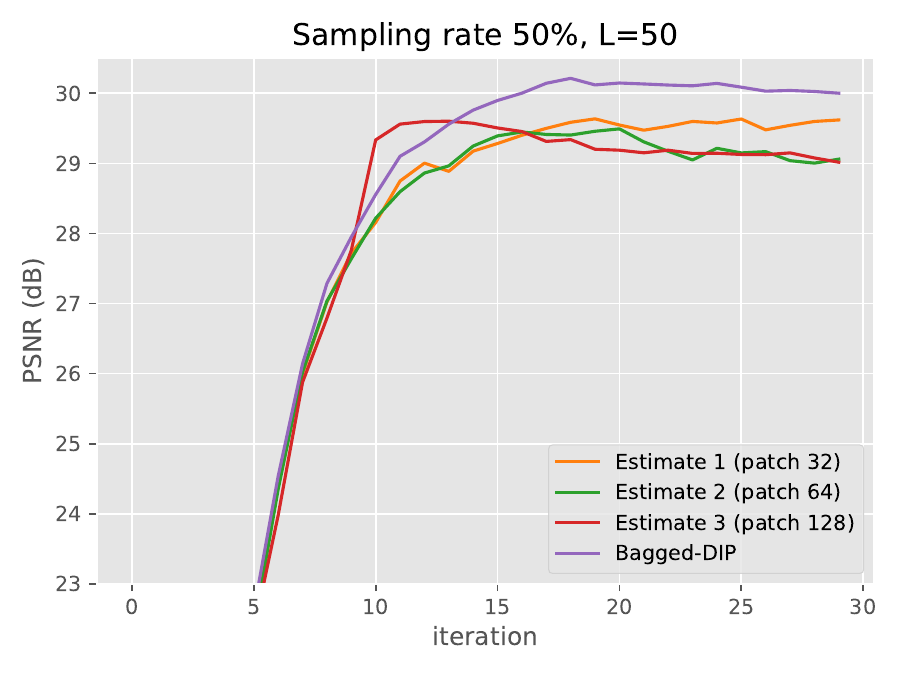}
\includegraphics[width=0.4\textwidth]{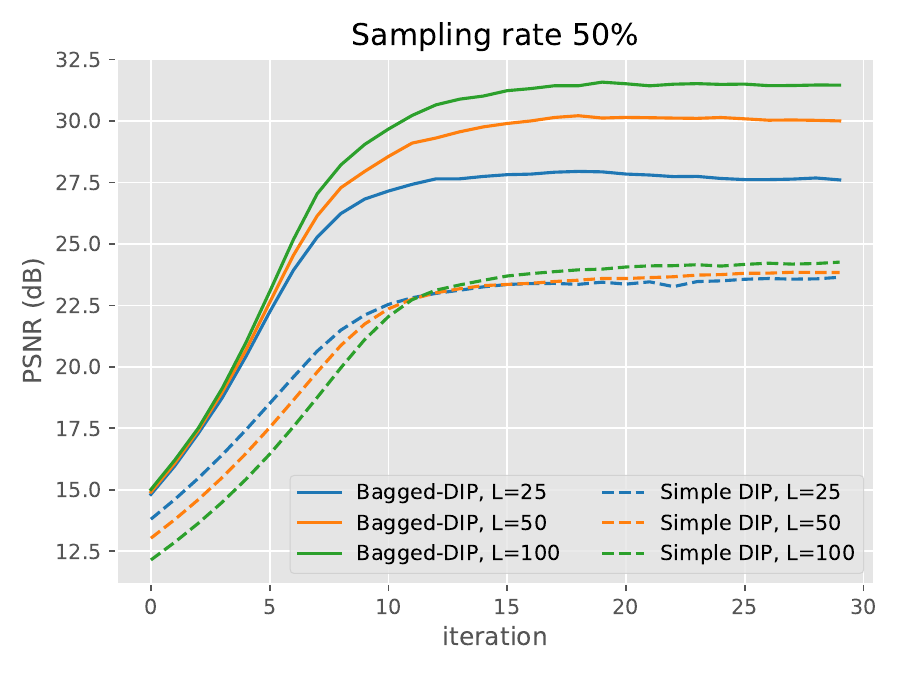}
\vspace{-.4cm}
\caption{(Left) We compare a Bagged-DIP with three sophisticated DIP estimates. (Right) We compare PGD with simple and Bagged-DIPs across different looks on image ``Cameraman".}
\label{fig:simple_bagged_L}
\end{figure}
\vspace{-.4cm}

\subsubsection{Simple architectures versus Bagged-DIPs}
So far our simulations have been focused on sophisticated networks. Are simpler networks that trade variance for the bias able to offer better performance? The right panel of Figure \ref{fig:simple_bagged_L} compares the performance of Bagged-DIP-based PGD with that of PGD with a simple DIP. Not only this figure shows the major improvement that is offered by using more complicated networks (in addition to bagging), but also it clarifies one of the serious limitations of the simple networks. Note that as $L$ increases, the performance of PGD with simple DIP is not improving. In such cases, the low-accuracy of DIP blocks the algorithm from taking advantage of extra information offered by the new looks. More results about the comparison between Bagged-DIP and simple structured DIP can be found in Table~\ref{tab:baselines_comparison} of Appendix~\ref{app:add_exp_baseline_DIP}.

\subsection{Performance of Bagged-DIP-based PGD on simulated data}
\label{sec:simulation_baggedDIP}
In this section, we offer a comprehensive simulaion study to evaluate the performance of the Bagged-DIP-based PGD on several images. We explore the following settings in our simulations:
\vspace{-.4cm}

\begin{itemize}
\item Number of looks ($L$): $L=25, 50, 100$.
\vspace{-.2cm}

\item Undersampling rate ($m\over n$): $\frac{m}{n}= 0.125, 0.25, 0.5$. 
\vspace{-.2cm}

\end{itemize}

For each combination of $L$ and $m/n$, we pick one of the $8$, $128 \times 128$ images mentioned in Table \ref{tab:main-CS-PSNR-SSIM}.\footnote{Images from the Set11~\cite{kulkarni2016reconnet} are chosen and cropped to $128 \times 128$ for computational manageability in Table~\ref{tab:main-CS-PSNR-SSIM}.} We then generate the matrix $A \in \mathbb{C}^{m \times n}$ by selecting the first $m$ rows of a matrix that is drawn from the Haar measure on the space of orthogonal matrices.  We then generate $\wv_1, \ldots, \wv_L \sim \mathcal{CN} (0,1)$, and for $l=1, 2, \ldots, L$, calculate $\yv_l = A X_o \wv_l$.

Regarding our implementations of Bagged-DIP-based PGD, we pick the following choices:
\vspace{-.4cm}

\begin{itemize}
\item Initialization: We initialize our algorithm with $\mathbf{x}_0 = \frac{1}{L} \sum^L_{l=1} |\Bar{A}^T \mathbf{y}_l|$. The final solution of DIP-based PGD seems not to depend on the initialization. (See Figure \ref{fig:init_compare} in Appendix \ref{sec:initialization:comp} for further information.)
\vspace{-.2cm}

\item Learning rate: We have chosen the learning rate of  $0.01$ for the gradient desent of the likelihood, and learning rate of  $0.001$ in the training of DIPs.
\vspace{-.2cm}

\item Number of iterations of SGD for training DIP: The details are presented in Table~\ref{tab:num_iteration_DIP} in the appendix.
\vspace{-.2cm}

\item Number of iterations of PGD: We run the outer loop (gradient descent of likelihood) for $100,200,300$ iterations when $m/n=0.5,0.25,0.125$ respectively.
\end{itemize}

The peak signal-to-noise-ratio (PSNR) and structural index similarity (SSIM) of our reconstructions are all reported in Table~\ref{tab:main-CS-PSNR-SSIM}. Qualitative results are presented in Figure~\ref{fig:qualitative_vis} of Appendix \ref{sec:qualitative_results}.

There are no other existing algorithms that can work in the undersampled regime $m<n$ we have considered in this paper. 
The sole algorithm addressing speckle noise in ill-conditioned and undersampled scenarios before our work is the vanilla PGD proposed in \cite{chen2023multilook}. Although designed for real-valued signals and measurements, we have adapted a complex-valued version of this algorithm, with results presented in Appendix~\ref{app:add_exp_baseline_DIP}. 
It can be seen that, for $L=100$, $L=50$, and $L=25$ on average (being averaged over $m/n=0.125,0.25,0.5$, and across all images) our algorithm outperforms the one presented in \cite{chen2023multilook} by $1.09$ dB, $1.47$ dB, and $1.27$ dB respectively.

\subsection{Sharpness of our theoretical results}

In this section, we would like to compare the results that we have here with the theoretical result we presented in Theorem \ref{thm:maintheorem}. Note that the dominant term in the MSE is the second term, i.e. $\frac{n \sqrt{ k \log n}}{  m \sqrt{L m}}$. There are two features of this term that we would like to confirm with our simulations:

\begin{enumerate}
\item As is clear, the decay of this term in terms of $m$ is $m^{3/2}$. Hence, if we double $m$, we expect to recieve the extra decay of $2^{3/2}$ in MSE, or the gain of $15 \log 2 \approx 4.5$dB in PSNR. The average of the gains we have received in average PSNR has been $3.99$ dB, which is within the error bounds from $4.5$dB that comes from our theoretical work. 
\vspace{-.2cm}

\item As is clear, the decay of this term in terms of $m$ is $L^{1/2}$. Hence, if we double $L$, we expect to recieve the extra decay of $\sqrt{2}$ in MSE, or the gain of $5 \log 2 \approx 1.5$dB in PSNR. The average of the gains we have received in average PSNR has been $1.42$ dB, which is within the error bounds from $1.5$dB that comes from our theoretical work. 
\vspace{-.2cm}

\end{enumerate}

\subsection{Comparison with $A= I$ case}
The goal of this section is to provide a performance comparison with cases where we have control over the matrix $A$, allowing us to design it as a well-conditioned kernel. Ideally, we can assume that the measurements are in the form of  $\yv_l = X \wv_l$, i.e. $A=I$. In this case, the task is transformed into classical despeckling problem. Since there is no wide or ill-conditioned matrix $A$ involved in the measurment process, we expect the imaging systems to outperform for instance the $50\%$ downsampled examples we presented in Table \ref{tab:main-CS-PSNR-SSIM}. Hence, the main question we aim to address here is:

\vspace{-.4cm}
\begin{itemize}
\item What is the PSNR cost incurred due to the undersampling of our measurement matrices? 
\end{itemize}
\vspace{-.2cm}

To address this question, we do the following empirical study: having access to the $L$ measurements in the form of $\yv_l = X \wv_l$, $l=1, 2, \ldots, L$, we create a sufficient statistic for estimating $X$. The sufficient statistics is the matrix $S = (\frac{1}{L} \sum^L_{l=1} \yv_l \yv^T_l)^{1/2}$. Then, this sufficient statistic is fed to DnCNN neural network~\cite{zhang2017beyond,zhang2017learning} (which we term it as DnCNN-UB), and learn the network to achieve the best denoising performance. The results of DnCNN-UB are often between 1-3dB better than the results of our Bagged-DIP based PGD. With the exception of the cameraman, where our Bagged-DIP based PGD seems to be better. Note that the 1-3dB gain is obtained for two reasons: (1) DnCNN approach uses a training data, while our Bagged-DIP-based approach does not require any training. (2) In DnCNN approach we have controlled the measurement matrix to be very well-conditioned. Details of DnCNN experimental settings and results can be found in Appendix~\ref{app:add_exp_DnCNN}.




\section{Proofs of the main results}\label{sec:theoryproof}
\subsection{Preliminaries}\label{ssec:prelim}
Before stating the proofs, we present a few lemmas that will be used later in the proof. 
\begin{lemma}\label{lem:boundeigenvalues}
Let $B$ and $C$ denote two $n \times n$ symmetric and invertible matrices. Then, if $\lambda_i$ represents the $i^{\rm th}$ eigenvalue of $B^{-1}- C^{-1}$, we have $|\lambda_i| \in [-\frac{\sigma_{\max} (B-C)}{\sigma_{\min}(B) \sigma_{\min}(C) }, \frac{\sigma_{\max} (B-C)}{\sigma_{\min}(B) \sigma_{\min}(C) }]$.
\end{lemma}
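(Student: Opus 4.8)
\textbf{Proof plan for Lemma~\ref{lem:boundeigenvalues}.}
The plan is to rewrite the difference of inverses in a factored form that exposes the matrices $B-C$, $B^{-1}$, and $C^{-1}$, and then bound operator norms term by term. First I would use the standard resolvent-type identity
\[
B^{-1} - C^{-1} = B^{-1}(C - B)C^{-1} = C^{-1}(C-B)B^{-1},
\]
which is verified by left-multiplying by $B$ and right-multiplying by $C$. Since $B^{-1}-C^{-1}$ is symmetric (being the difference of two symmetric matrices), all its eigenvalues are real, and for a symmetric matrix the magnitude of every eigenvalue is at most its operator norm $\sigma_{\max}(\cdot)$. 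Hence it suffices to bound $\|B^{-1}(C-B)C^{-1}\|_{\mathrm{op}}$.

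Next I would apply submultiplicativity of the operator norm:
\[
\|B^{-1}(C-B)C^{-1}\|_{\mathrm{op}} \le \|B^{-1}\|_{\mathrm{op}}\,\|C-B\|_{\mathrm{op}}\,\|C^{-1}\|_{\mathrm{op}}.
\]
For a symmetric invertible matrix $B$ one has $\|B^{-1}\|_{\mathrm{op}} = \sigma_{\max}(B^{-1}) = 1/\sigma_{\min}(B)$, and similarly $\|C^{-1}\|_{\mathrm{op}} = 1/\sigma_{\min}(C)$; also $\|C-B\|_{\mathrm{op}} = \sigma_{\max}(B-C)$. Combining these gives, for every eigenvalue $\lambda_i$ of $B^{-1}-C^{-1}$,
\[
|\lambda_i| \le \frac{\sigma_{\max}(B-C)}{\sigma_{\min}(B)\,\sigma_{\min}(C)},
\]
which is exactly the claimed two-sided bound (the lower endpoint $-\sigma_{\max}(B-C)/(\sigma_{\min}(B)\sigma_{\min}(C))$ is immediate from $|\lambda_i|$ being bounded by the same quantity).

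I do not expect any serious obstacle here; the only points requiring a little care are (i) noting that $B-C$ and $C-B$ have the same operator norm so the order of the factors is irrelevant, and (ii) justifying $\|B^{-1}\|_{\mathrm{op}} = 1/\sigma_{\min}(B)$, which uses symmetry and invertibility so that the singular values of $B^{-1}$ are the reciprocals of those of $B$. The statement as written phrases the conclusion as $|\lambda_i|\in[-t,t]$ with $t$ the displayed ratio, which is just a slightly redundant way of writing $|\lambda_i|\le t$, so no additional work is needed beyond the operator-norm bound.
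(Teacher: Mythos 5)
Your proof is correct and follows essentially the same route as the paper's: the identity $B^{-1}-C^{-1}=B^{-1}(C-B)C^{-1}$ that you invoke is exactly what the paper derives by tracking an eigenvector through multiplication by $B$ and $C$, and both arguments then conclude via $\|B^{-1}\|_{\mathrm{op}}=1/\sigma_{\min}(B)$, $\|C^{-1}\|_{\mathrm{op}}=1/\sigma_{\min}(C)$, and submultiplicativity. The only cosmetic difference is that you bound the operator norm of the whole matrix while the paper works pointwise with an eigenvector; no gap either way.
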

\begin{proof}
Suppose $\lambda_i$ is the $i^{\rm th}$ eigenvalue of $B^{-1}-C^{-1}$. Then, there exists a norm $1$ vector $\bm{v} \in \mathbb{R}^n$ such that 
\begin{equation*}
(B^{-1}-C^{-1}) \bm{v} = \lambda_i \bm{v}. 
\end{equation*}
Multiplying both sides by $B$, we have
\begin{equation*}
(I-BC^{-1}) \bm{v} = \lambda_i B \bm{v}. 
\end{equation*}
Define $\bm{u} = C^{-1}\bm{v}$. Then, we have
$(C-B) \bm{u} = \lambda_i BC \bm{u},$ or equivalently
\[
\lambda_i  \bm{u} = (BC)^{-1} (C-B) \bm{u}. 
\]
Hence,
\[
|\lambda_i| \leq \frac{\sigma_{\max}(C-B)}{\sigma_{\min} (B) \sigma_{\min} (C)}. 
\]
\end{proof}

\begin{lemma}\label{lem:singvalues}\cite{RudelsonVershinin2010}
Let the elements of an $m \times n$ ($m<n$) matrix $A$ be drawn independently from $\Nc(0,1)$. Then, for any $t>0$,
\begin{align}
\Prob(\sqrt{n}-\sqrt{m}- t \leq \sigma_{\min} (A) \leq \sigma_{\max}(A) &\leq \sqrt{n}+\sqrt{m}+ t) \geq 1-  2 {\rm e}^{-\frac{t^2}{2}}.
\end{align}
\end{lemma}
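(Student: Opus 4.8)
\textbf{Proof proposal for Lemma~\ref{lem:singvalues}.}

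The plan is to obtain this as a standard consequence of the concentration of the extreme singular values of a Gaussian matrix around their expected values, combined with a non-asymptotic estimate for those expected values. First I would recall that the singular value map $A \mapsto \sigma_{\max}(A)$ is $1$-Lipschitz with respect to the Frobenius norm on $\mathbb{R}^{m \times n}$, and likewise $A \mapsto \sigma_{\min}(A)$ is $1$-Lipschitz (both follow from Weyl-type perturbation inequalities, since $|\sigma_i(A) - \sigma_i(A')| \le \|A - A'\|_{\mathrm{op}} \le \|A - A'\|_F$). Since the entries of $A$ are i.i.d.\ $\Nc(0,1)$, the vector $\mathrm{vec}(A) \in \mathbb{R}^{mn}$ is a standard Gaussian vector, so the Gaussian concentration inequality for Lipschitz functions applies directly: for any $1$-Lipschitz $F$ and any $t > 0$, $\Prob(|F(A) - \E F(A)| \ge t) \le 2 e^{-t^2/2}$.

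Next I would invoke the classical Gordon / Davidson--Szarek bounds on the means, namely $\sqrt{n} - \sqrt{m} \le \E[\sigma_{\min}(A)] \le \E[\sigma_{\max}(A)] \le \sqrt{n} + \sqrt{m}$ (this is the place where the $m < n$ convention and the "tall-versus-wide" orientation must be tracked carefully). Applying Gaussian concentration to $F = \sigma_{\min}$ gives $\sigma_{\min}(A) \ge \E[\sigma_{\min}(A)] - t \ge \sqrt{n} - \sqrt{m} - t$ except on an event of probability at most $e^{-t^2/2}$ (a one-sided bound suffices, so the factor $2$ can even be dropped here), and applying it to $F = \sigma_{\max}$ gives $\sigma_{\max}(A) \le \E[\sigma_{\max}(A)] + t \le \sqrt{n} + \sqrt{m} + t$ except on an event of probability at most $e^{-t^2/2}$. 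A union bound over these two bad events yields the claimed probability $1 - 2e^{-t^2/2}$ for the simultaneous two-sided sandwich $\sqrt{n} - \sqrt{m} - t \le \sigma_{\min}(A) \le \sigma_{\max}(A) \le \sqrt{n} + \sqrt{m} + t$.

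The main obstacle is not really analytic here but bibliographic and bookkeeping: the lemma is cited to \cite{RudelsonVershinin2010}, so strictly one could simply quote it; if one wants a self-contained argument the only genuinely nontrivial input is the bound on the expected extreme singular values, whose proof (via Gordon's comparison inequality for Gaussian processes, or via an $\varepsilon$-net argument on the sphere) I would not reproduce and would instead cite. Everything else is the one-line Gaussian-concentration step plus a union bound, so I would present the proof at the level of "by Gaussian concentration of $1$-Lipschitz functions applied to $\sigma_{\min}$ and $\sigma_{\max}$, together with the standard estimates $\sqrt{n}-\sqrt{m} \le \E[\sigma_{\min}(A)] \le \E[\sigma_{\max}(A)] \le \sqrt{n}+\sqrt{m}$, followed by a union bound."
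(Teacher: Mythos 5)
Your proposal is correct and is essentially the standard argument behind the cited result: the paper itself offers no proof and simply quotes Lemma~\ref{lem:singvalues} from \cite{RudelsonVershinin2010}, where it is established exactly as you describe, via Gaussian concentration for the $1$-Lipschitz maps $\sigma_{\min}$ and $\sigma_{\max}$ combined with the Gordon/Davidson--Szarek bounds $\sqrt{n}-\sqrt{m}\leq \E[\sigma_{\min}(A)]\leq\E[\sigma_{\max}(A)]\leq\sqrt{n}+\sqrt{m}$ and a union bound. No gaps; your bookkeeping of the $m<n$ orientation and of the factor $2$ from the union bound is accurate.
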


\begin{lemma}[Concentration of $\chi^2$ \cite{jalali2014minimum}]\label{lem:conc:chisq}
Let $Z_1, Z_2, \ldots, Z_n$ denote a sequence of independent $\Nc(0,1)$ random variables. Then, for any $t\in(0,1)$, we have
\[
\Prob (\sum_{i=1}^n Z_i^2 \leq n (1-t)) \leq {\rm e}^{\frac{n}{2} (t + \log (1- t)) }. 
\]
Also, for any $t>0$,
\[
\Prob (\sum_{i=1}^n Z_i^2 \geq n (1+t)) \leq {\rm e}^{-\frac{n}{2} (t -\log (1+ t)) }. 
\]
\end{lemma}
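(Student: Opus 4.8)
\textbf{Proof proposal for Lemma \ref{lem:conc:chisq}.}

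The plan is to derive both tail bounds via the standard Chernoff/Cram\'er exponential-moment method applied to the sum $S_n = \sum_{i=1}^n Z_i^2$, which is a $\chi^2$ random variable with $n$ degrees of freedom. The key fact I would use is the moment generating function of a single $Z_i^2$: for $\lambda < 1/2$, $\E[{\rm e}^{\lambda Z_i^2}] = (1-2\lambda)^{-1/2}$, so by independence $\E[{\rm e}^{\lambda S_n}] = (1-2\lambda)^{-n/2}$.

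For the lower tail, I would fix $t \in (0,1)$ and write, for any $\lambda < 0$,
\[
\Prob(S_n \le n(1-t)) = \Prob({\rm e}^{\lambda S_n} \ge {\rm e}^{\lambda n(1-t)}) \le {\rm e}^{-\lambda n(1-t)} (1-2\lambda)^{-n/2},
\]
using Markov's inequality. Then I would minimize the exponent $-\lambda(1-t) - \tfrac12\log(1-2\lambda)$ over $\lambda < 0$; the optimal choice is $\lambda = -\tfrac{t}{2(1-t)}$ (equivalently $1-2\lambda = \tfrac{1}{1-t}$), which after substitution yields the bound ${\rm e}^{\frac{n}{2}(t + \log(1-t))}$. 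The same recipe with $\lambda > 0$ (now requiring $\lambda < 1/2$, automatically satisfied) gives the upper tail: $\Prob(S_n \ge n(1+t)) \le {\rm e}^{-\lambda n(1+t)}(1-2\lambda)^{-n/2}$, optimized at $\lambda = \tfrac{t}{2(1+t)}$ (i.e. $1-2\lambda = \tfrac{1}{1+t}$), producing ${\rm e}^{-\frac{n}{2}(t - \log(1+t))}$.

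There is no real obstacle here; the only thing to be careful about is the admissibility of the chosen $\lambda$ (we need $\lambda < 1/2$ for the MGF to be finite, and $\lambda$ of the correct sign for the Markov step), and checking that the optimizing $\lambda$ indeed lands in the valid range for every $t$ in the stated interval. For the lower tail, $\lambda = -t/(2(1-t)) < 0$ for $t \in (0,1)$, which is fine; for the upper tail, $\lambda = t/(2(1+t)) \in (0,1/2)$, also fine. Plugging these values back in and simplifying the exponents is a routine calculation that reproduces exactly the two displayed inequalities, completing the proof.
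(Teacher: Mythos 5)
Your Chernoff-bound derivation is correct: the MGF $(1-2\lambda)^{-n/2}$, the optimizing choices $\lambda=-t/(2(1-t))$ and $\lambda=t/(2(1+t))$, and the resulting exponents all check out and reproduce the stated inequalities exactly. The paper itself gives no proof and simply cites a reference, and your argument is precisely the standard one underlying that citation, so there is nothing to add.
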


%

Define

\begin{theorem}[Hanson-Wright inequality] \label{thm:HW-ineq}
Let $\Xv= (X_1,...,X_n)$ be a random vector with independent components with $\E[X_i]=0$ and $\|X_i\|_{\Psi_2}\leq K$. Let A be an $n\times n$ matrix. Then, for $t>0$, 
\begin{align}
\Prob \Big(|\Xv^T A\Xv-\E[\Xv^T A\Xv]|>t \Big) \leq 2\exp\left(-c\min\left({t^2\over K^4\|A\|_{\rm HS}^2 },{t \over K^2\|A\|_2 }\right)\right),
\end{align}
where $c$ is a constant, and $\|X\|_{\psi_2} = \inf\{t>0 : \mathbb{E} (\exp (X^2/t^2)) \leq 2\}. 
$
\end{theorem}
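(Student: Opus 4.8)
The statement is the classical Hanson--Wright inequality, so my plan is to carry out the decoupling argument of Rudelson and Vershynin. First I would reduce to symmetric $A$: since $\Xv^T A \Xv = \Xv^T\big(\tfrac{1}{2}(A+A^T)\big)\Xv$ and symmetrizing changes $\|A\|_{\rm HS}$ and $\|A\|_2$ by at most a constant factor, assume $A = A^T$. Then I split the chaos into its diagonal and off-diagonal parts,
\begin{align*}
\Xv^T A \Xv - \E[\Xv^T A \Xv] \;=\; \underbrace{\sum_{i} A_{ii}\,(X_i^2 - \E X_i^2)}_{=:\,S_{\rm diag}} \;+\; \underbrace{\sum_{i \neq j} A_{ij} X_i X_j}_{=:\,S_{\rm off}},
\end{align*}
and bound $\Prob(|S_{\rm diag}| > t/2)$ and $\Prob(|S_{\rm off}| > t/2)$ separately, finishing by a union bound (adjusting $c$). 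The diagonal piece is the easy one: its summands are independent, mean zero, and sub-exponential with $\|A_{ii}(X_i^2 - \E X_i^2)\|_{\psi_1} \lesssim K^2 |A_{ii}|$ (using $\|X_i^2\|_{\psi_1} = \|X_i\|_{\psi_2}^2$), so Bernstein's inequality gives a tail $2\exp(-c\min(t^2/(K^4\sum_i A_{ii}^2),\, t/(K^2\max_i|A_{ii}|)))$, and the bounds $\sum_i A_{ii}^2 \le \|A\|_{\rm HS}^2$ and $\max_i|A_{ii}| \le \|A\|_2$ put this in the required form.

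The substantive part is $S_{\rm off}$, which I would handle by estimating its moment generating function and then applying Markov. The chain of estimates is: (i) \emph{decoupling} --- by the standard decoupling inequality for quadratic chaos applied to the convex map $x \mapsto e^{\lambda x}$, $\E \exp(\lambda S_{\rm off}) \le \E \exp\big(4\lambda \sum_{i \neq j} A_{ij} X_i X_j'\big)$ with $\Xv'$ an independent copy of $\Xv$; (ii) \emph{conditioning} --- fixing $\Xv$, the inner double sum is the linear form $\sum_j X_j' (A\Xv)_j$ in the independent, mean-zero, sub-Gaussian$(K)$ coordinates of $\Xv'$, so $\E_{\Xv'} \exp\big(4\lambda \langle \Xv', A\Xv\rangle\big) \le \exp\big(C\lambda^2 K^2 \|A\Xv\|_2^2\big)$; (iii) \emph{Gaussian linearization} --- writing $p := C\lambda^2 K^2$ and introducing an auxiliary standard Gaussian vector $\bm{g}$, $\E_{\Xv} \exp(p\|A\Xv\|_2^2) = \E_{\bm{g}} \E_{\Xv} \exp\big(\sqrt{2p}\,\langle A\bm{g}, \Xv\rangle\big) \le \E_{\bm{g}} \exp\big(C p K^2 \|A\bm{g}\|_2^2\big)$, and since $\|A\bm{g}\|_2^2 = \bm{g}^T A^2 \bm{g}$ is a weighted $\chi^2$ with weights the eigenvalues $\mu_k$ of $A^2$, this equals $\prod_k (1 - 2CpK^2\mu_k)^{-1/2}$, which is $\le \exp\big(C' p K^2 \|A\|_{\rm HS}^2\big)$ as soon as $pK^2\|A\|_2^2$ is below a small absolute constant (using $\sum_k \mu_k = \|A\|_{\rm HS}^2$ and $\max_k \mu_k = \|A\|_2^2$). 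Composing (i)--(iii) gives $\E \exp(\lambda S_{\rm off}) \le \exp\big(c_1 \lambda^2 K^4 \|A\|_{\rm HS}^2\big)$ for all $|\lambda| \le c_2 / (K^2 \|A\|_2)$; then $\Prob(S_{\rm off} > t) \le \exp\big(-\lambda t + c_1 \lambda^2 K^4 \|A\|_{\rm HS}^2\big)$, and optimizing $\lambda$ over this interval (the unconstrained optimum when it is admissible, the endpoint otherwise) yields exactly $\exp\big(-c\min\big(t^2/(K^4\|A\|_{\rm HS}^2),\, t/(K^2\|A\|_2)\big)\big)$. Running the same argument for $-S_{\rm off}$ and combining with the diagonal estimate completes the proof.

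\textbf{Main obstacle.} The heart of the argument, and its only genuinely delicate point, is the treatment of $S_{\rm off}$: this chaos is neither sub-Gaussian nor a sum of independent terms, and it is precisely the decoupling in (i) together with the control of the MGF of $\|A\Xv\|_2^2$ for a genuinely sub-Gaussian (not Gaussian) vector $\Xv$ in (iii) that forces the admissible range of $\lambda$ to scale like $1/(K^2\|A\|_2)$ --- which is exactly what produces the sub-exponential branch $t/(K^2\|A\|_2)$ of the tail. Everything else (the symmetrization, the diagonal Bernstein bound, the final optimization of $\lambda$) is routine.
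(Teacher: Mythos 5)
The paper does not prove this statement at all: Theorem~\ref{thm:HW-ineq} is the classical Hanson--Wright inequality, imported verbatim as a known tool (the standard modern reference is Rudelson and Vershynin, \emph{Electron.\ Commun.\ Probab.} 2013), so there is no in-paper argument to compare against. Your sketch is a faithful and correct reproduction of that canonical proof: symmetrization, the diagonal/off-diagonal split with Bernstein for the diagonal part, and decoupling followed by conditioning and Gaussian comparison for the off-diagonal chaos, with the restricted range $|\lambda|\lesssim 1/(K^2\|A\|_2)$ correctly identified as the source of the sub-exponential branch. The only points needing a little care in a full write-up are bookkeeping ones: after decoupling, the conditional linear form is $\langle \Xv', A_0\Xv\rangle$ with $A_0$ the off-diagonal part of $A$ rather than $A$ itself, so one should record $\|A_0\|_{\rm HS}\le\|A\|_{\rm HS}$ and $\|A_0\|_2\le 2\|A\|_2$; and in step (iii) the identity $\E_{\bm{g}}\exp(\sqrt{2p}\langle A\bm{g},\Xv\rangle)=\exp(p\|A^T\Xv\|_2^2)$ uses the symmetry of $A$ arranged in the first reduction. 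Neither affects the conclusion.
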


\begin{theorem}[Decoupling of U-processes, Theorem 3.4.1. of \cite{de2012decoupling}]\label{thm:decoupling} Let $X_1, X_2, \ldots, X_n$ denote random variables with values in measurable space $(S, \mathcal{S})$. Let $(\tilde{X}_1, \tilde{X}_2, \ldots, \tilde{X}_n)$ denote an independent copy of $X_1, X_2, \ldots, X_n$. For $ i \neq j$ let $h_{i,j} : S^2 \rightarrow \mathbb{R}$. Then, there exists a constant $C$ such that for every $t >0$ we have
\[
\Prob \Big(|\sum_{i \neq j} h_{i,j} (X_i, X_j) | > t\Big) \leq C \Prob \Big(C |\sum_{i \neq j} h_{i,j} (X_i, \tilde{X}_j) | > t\Big). 
\] 

\end{theorem}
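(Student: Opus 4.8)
The plan is to derive the bound from the classical decoupling principle for conditionally independent tangent sequences, the framework in which such a statement lives most naturally; the independent copy $(\tilde X_i)$ is exactly the device that produces such a tangent sequence.

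First I would linearize the $U$-statistic into an adapted sum. With the filtration $\mathcal F_k=\sigma(X_1,\ldots,X_k)$, write $S:=\sum_{i\neq j}h_{i,j}(X_i,X_j)=\sum_{k=2}^{n}d_k$, where $d_k:=\sum_{j<k}\big(h_{k,j}(X_k,X_j)+h_{j,k}(X_j,X_k)\big)$ is $\mathcal F_k$-measurable. The decoupled increments are obtained by substituting the copy into the $k$-th block, $e_k:=\sum_{j<k}\big(h_{k,j}(\tilde X_k,X_j)+h_{j,k}(X_j,\tilde X_k)\big)$. Since $X_k$ is independent of $\mathcal F_{k-1}$ and $\tilde X_k\overset{d}{=}X_k$, the conditional law of $e_k$ given $\mathcal F_{k-1}$ equals that of $d_k$, so $(e_k)$ is tangent to $(d_k)$; and since $e_k$ is a function of $(X_1,\ldots,X_n)$ and of $\tilde X_k$ alone, with the $\tilde X_k$'s mutually independent given $\mathcal G:=\sigma(X_1,\ldots,X_n)\supseteq\mathcal F_{k-1}$, the sequence $(e_k)$ is conditionally independent given $\mathcal G$.

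Granting this, the second step is to apply the tail decoupling inequality for conditionally independent tangent sequences, which gives $\Prob\big(|\sum_k d_k|>t\big)\le C\,\Prob\big(C|\sum_k e_k|>t\big)$ for a universal $C$. Should one want a self-contained proof of this rather than a citation, it follows by the standard two-stage argument: first the convex-moment comparison $\E\,\Phi(|\sum_k d_k|)\le\E\,\Phi(C|\sum_k e_k|)$ for all nondecreasing convex $\Phi$, proved by conditioning on $\mathcal F_{k-1}$ one increment at a time and combining tangency with Jensen's inequality, then an upgrade to the tail form by a stopping-time/truncation argument at level proportional to $t$. It then remains only to reconcile $\sum_k e_k$ with the statistic in the statement: expanding, $\sum_k e_k=\sum_{a<b}h_{a,b}(X_a,\tilde X_b)+\sum_{a>b}h_{a,b}(\tilde X_a,X_b)$, which, since $(X_i)$ and $(\tilde X_i)$ are exchangeable and built from the same source, has tails comparable (up to the constant $C$) to those of $\sum_{i\neq j}h_{i,j}(X_i,\tilde X_j)$.

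The step I expect to be the main obstacle is everything compressed into ``apply the tail decoupling for tangent sequences.'' Unlike the convex-moment version, the tail statement does not drop out of a single Jensen step; its proof genuinely needs a stopping-time / good-event decomposition --- one stops the adapted sum the first time it exceeds a multiple of $t$ and matches the stopped fragments of the two sequences through the conditional-independence structure --- and the unspecified universal constant $C$ is produced precisely by that comparison. A secondary but real nuisance is carrying out the last reconciliation while keeping the kernels $h_{i,j}$ completely arbitrary (in particular without assuming $h_{i,j}=h_{j,i}$), which is the reason the clean statement cannot avoid an implicit constant.
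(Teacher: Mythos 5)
This statement is not proved in the paper at all: it is imported verbatim as Theorem 3.4.1 of the cited monograph of de la Pe\~na and Gin\'e, so the only meaningful comparison is against the known proof of that theorem (due to de la Pe\~na and Montgomery-Smith). Measured against that, your proposal has two genuine gaps.

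First, and most seriously, the tool you funnel everything through --- a two-sided \emph{tail} decoupling inequality $\Prob(|\sum_k d_k|>t)\le C\,\Prob(C|\sum_k e_k|>t)$ for conditionally independent tangent sequences --- is not an available theorem. What is known in that generality is only the convex-moment comparison $\E\,\Phi(|\sum_k d_k|)\le \E\,\Phi(C|\sum_k e_k|)$ (Kwapie\'n--Woyczy\'nski, Zinn, Hitczenko). The ``upgrade to the tail form by a stopping-time/truncation argument'' that you defer to is exactly the step that fails: the stopping-time/L\'evy-inequality route requires conditional symmetry of the increments, which the $d_k$ built from arbitrary kernels $h_{i,j}$ do not have, and the tail comparison for general tangent sequences is a well-known open problem rather than a lemma one can cite. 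This is precisely why the tail version for $U$-statistics required a separate paper: the actual proof does not pass through tangent sequences at all. It uses the randomization identity $\sum_{i\neq j}h_{i,j}(X_i,X_j)=4\,\E_{\delta}\sum_{i\neq j}\delta_i(1-\delta_j)h_{i,j}(X_i,X_j)$ with iid Bernoulli $\delta_i$ (equivalently, mixing $X_i$ and $\tilde X_i$ by independent coin flips), observes that each mixed sum is distributed either as the coupled or as the decoupled statistic, and then applies a conditional Paley--Zygmund / hypercontractivity argument for Rademacher chaos of degree $2$ to convert the identity in conditional expectation into a comparison of tail probabilities. None of that machinery appears in your outline, and it cannot be replaced by the tangent-sequence moment comparison.

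Second, even granting the tail inequality you invoke, the decoupled object it produces is not the one in the statement. Expanding your increments gives $\sum_k e_k=\sum_{a<b}h_{a,b}(X_a,\tilde X_b)+\sum_{a<b}h_{b,a}(\tilde X_b,X_a)$, i.e.\ the tilde always sits on the \emph{larger} index, so in half of the terms the kernel is evaluated with the tilded variable in its first slot. This is a different random variable from $\sum_{i\neq j}h_{i,j}(X_i,\tilde X_j)$, and the closing appeal to ``exchangeability'' does not repair it: you cannot swap $X_b\leftrightarrow\tilde X_b$ inside some terms without altering every other term sharing the index $b$, so the reconciliation is itself a nontrivial decoupling step, not a cosmetic one. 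If you want a correct self-contained argument, follow the de la Pe\~na--Montgomery-Smith randomization-plus-hypercontractivity route; otherwise the honest move is the one the paper makes, namely citing the theorem.
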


 \subsection{Main steps of the proof}\label{ssec:mainstepsappendix}
As we discussed in Section \ref{sec:main:theory}, the proof has three main steps:

\begin{enumerate}
\item Obtaining a lower bound for $\bar{f}(\tilde{\Sigma}_o)-\bar{f}(\Sigma_o)$ in terms of $\|\xvt_o- \xv_o\|_2^2$: This will be presented in Section \ref{ssec:lowerbd}.  

\item Finding upper bounds for $\bar{f}(\tilde{\Sigma}_o)- f(\tilde{\Sigma}_o)$ and $f (\Sigma_o) - \bar{f}({\Sigma}_o)$: This will be presented in Section \ref{ssec:upperconcentration}. 

\item Finding an upper bound for $f(\tilde{\Sigma}_o)-{f}(\hat{\Sigma}_o)$: this will be presented in Section \ref{ssec:errorsfunction}.   
\end{enumerate}

We combine these steps and finish the proof in Section \ref{ssec:proofsummary}. For notational simplicty we drop the subscript of $\delta_n$ in our proof. However, whenever we need the interpretation of the results we should not that $\delta$ depends on $n$ and behaves like $O(1/n^5)$. It becomes more clear why we have picked this choice later. Many other choices of $\delta_n$ will work as well.

\subsubsection{Obtaining lower bound for $\bar{f}(\tilde{\Sigma}_o)-\bar{f}(\Sigma_o)$}\label{ssec:lowerbd}

Define $\Delta \Sigma$ as 
\[
\Delta \Sigma = \tilde{\Sigma}_o-\Sigma_o,
\] and let $\lambda_{m}$ denote the maximum eigenvalue of $\Sigma_o^{-{1\over 2}}\Delta \Sigma\Sigma_o^{-{1\over 2}}$. As the first step the following lemma obtains a lower bound $\bar{f}(\tilde{\Sigma}_o)-\bar{f}(\Sigma_o)$ in terms of $\tilde{\Sigma} - \Sigma_o$. 

\begin{lemma}\label{lemma:1} \cite{zhou2022compressed}
For $\xvt \in\mathbb{R}^n$ and  $\xv_o\in\mathbb{R}^n$, let $\tilde{X}=\diag(\xvt)$, $X_o=\diag(\xv_o)$.  Assume that $A\tilde{X}^2A^T$ and $AX_o^2A^T$ are both invertible, and define $\tilde{\Sigma}=(A\tilde{X}^2A^T)^{-1}$, and $\Sigma_o=(AX_o^2A^T)^{-1}$.  Then,
\begin{align}\label{eq:thm1-step2_0}
\bar{f}(\tilde{\Sigma})-\bar{f}(\Sigma_o)&\geq {1\over 2(1+\lambda_{m})^2}{\rm Tr}(\Sigma_o^{-1}\Delta \Sigma\Sigma_o^{-1}\Delta \Sigma),
\end{align}
\end{lemma}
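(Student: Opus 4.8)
\textbf{Proof plan for Lemma \ref{lemma:1}.}

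The plan is to reduce the statement to a one-dimensional convexity estimate along the line segment joining $\Sigma_o$ to $\tilde\Sigma$ in the space of positive definite matrices. Recall from \eqref{eq:min-f-bar} that $\bar f(\Sigma) = -\log\det\Sigma + {\rm Tr}(\Sigma\Sigma_o^{-1})$, which is a smooth strictly convex function of $\Sigma$ whose unique minimizer is $\Sigma_o$ (since $\nabla\bar f(\Sigma) = -\Sigma^{-1} + \Sigma_o^{-1}$ vanishes exactly at $\Sigma=\Sigma_o$). First I would write $\Sigma(t) = \Sigma_o + t\,\Delta\Sigma$ for $t\in[0,1]$ and expand $\phi(t):=\bar f(\Sigma(t))$ by Taylor's theorem with integral remainder around $t=0$: since $\phi'(0)=0$, we get $\bar f(\tilde\Sigma)-\bar f(\Sigma_o) = \int_0^1 (1-t)\,\phi''(t)\,dt$. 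A direct computation of the Hessian of $-\log\det$ gives $\phi''(t) = {\rm Tr}\big(\Sigma(t)^{-1}\Delta\Sigma\,\Sigma(t)^{-1}\Delta\Sigma\big)$, which is manifestly nonnegative.

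Next I would lower-bound $\phi''(t)$ uniformly in $t\in[0,1]$. The key is to control $\Sigma(t)^{-1}$ from below in the positive-semidefinite order. Conjugating by $\Sigma_o^{-1/2}$, write $\Sigma_o^{-1/2}\Sigma(t)\Sigma_o^{-1/2} = I + t\,M$ where $M := \Sigma_o^{-1/2}\Delta\Sigma\,\Sigma_o^{-1/2}$ has largest eigenvalue $\lambda_m$. Then $\Sigma_o^{-1/2}\Sigma(t)\Sigma_o^{-1/2} \preceq (1+t\lambda_m) I \preceq (1+\lambda_m)I$ for $t\in[0,1]$ (the eigenvalues of $M$ are all at most $\lambda_m$, though one must also note they are bounded below so that $I+tM$ stays positive definite — this holds because $\Sigma(t)$ is a convex combination of the two invertible PSD matrices $\Sigma_o$ and $\tilde\Sigma$ up to rescaling, hence positive definite). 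Inverting, $\Sigma_o^{1/2}\Sigma(t)^{-1}\Sigma_o^{1/2} \succeq \frac{1}{1+\lambda_m} I$. Plugging this into $\phi''(t) = {\rm Tr}\big(\Sigma(t)^{-1}\Delta\Sigma\,\Sigma(t)^{-1}\Delta\Sigma\big)$ — after inserting $\Sigma_o^{1/2}\Sigma_o^{-1/2}$ factors and using that ${\rm Tr}(N^{1/2}PN^{1/2}\cdot N^{1/2}PN^{1/2}) \geq c^2\,{\rm Tr}(NPNP)$ when $N\succeq cI$ and $P$ is symmetric — yields $\phi''(t) \geq \frac{1}{(1+\lambda_m)^2}{\rm Tr}(\Sigma_o^{-1}\Delta\Sigma\,\Sigma_o^{-1}\Delta\Sigma)$. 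Finally, $\int_0^1(1-t)\,dt = \tfrac12$ gives the claimed constant $\tfrac{1}{2(1+\lambda_m)^2}$.

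The main obstacle is the PSD-order manipulation in the last step: ${\rm Tr}(\Sigma(t)^{-1}\Delta\Sigma\,\Sigma(t)^{-1}\Delta\Sigma)$ is a trace of a product of \emph{two} copies of $\Sigma(t)^{-1}$ with $\Delta\Sigma$ interleaved, so one cannot simply apply a single operator inequality; one needs the fact that for symmetric $P$ and $Q\succeq cI\succ 0$ one has ${\rm Tr}(QPQP) \geq c^2\,{\rm Tr}(P^2)$ — which follows by writing ${\rm Tr}(QPQP) = \|Q^{1/2}PQ^{1/2}\|_{\rm HS}^2$ and bounding the eigenvalues of $Q^{1/2}PQ^{1/2}$ appropriately, or by a careful diagonalization argument — and then relating ${\rm Tr}((\Sigma_o^{-1/2}\Delta\Sigma\,\Sigma_o^{-1/2})^2) = {\rm Tr}(\Sigma_o^{-1}\Delta\Sigma\,\Sigma_o^{-1}\Delta\Sigma)$. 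Care is also needed to ensure $I+tM\succ 0$ throughout $[0,1]$ so that all inverses are well-defined; this uses the invertibility hypotheses on $A\tilde X^2 A^T$ and $AX_o^2A^T$ together with convexity of the PSD cone. Since this lemma is quoted from \cite{zhou2022compressed}, I would present this as the argument and cite that reference for the details.
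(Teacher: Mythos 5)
The paper does not actually prove this lemma --- it is imported verbatim from \cite{zhou2022compressed} with no argument given --- so there is nothing in the text to compare your proof against; what you have written is a self-contained proof, and it is essentially correct. The route (Taylor expansion with integral remainder of $\phi(t)=\bar f(\Sigma_o+t\,\Delta\Sigma)$, using $\phi'(0)=0$, the Hessian identity $\phi''(t)={\rm Tr}(\Sigma(t)^{-1}\Delta\Sigma\,\Sigma(t)^{-1}\Delta\Sigma)$, and the conjugation trick reducing everything to ${\rm Tr}(N\tilde PN\tilde P)\geq c^2\,{\rm Tr}(\tilde P^2)$ for $N\succeq cI$) is the standard one for such local strong-convexity bounds, and each step checks out: the trace inequality you flag as the main obstacle follows by writing $N=cI+S$ with $S\succeq 0$ and expanding, since both cross terms ${\rm Tr}(\tilde PS\tilde P)$ and ${\rm Tr}\bigl((S^{1/2}\tilde PS^{1/2})^2\bigr)$ are nonnegative. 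One caveat worth recording: your step $1+t\lambda_m\leq 1+\lambda_m$ for $t\in[0,1]$ requires $\lambda_m\geq 0$. If $\lambda_m$ is taken literally as the largest eigenvalue of $\Sigma_o^{-1/2}\Delta\Sigma\,\Sigma_o^{-1/2}$ (as the lemma's surrounding text defines it) and that eigenvalue is negative, the claimed bound can actually fail --- in one dimension with $u=(\tilde\sigma-\sigma_o)/\sigma_o=-0.1$ one gets $-\log(1+u)+u\approx 0.00536 < u^2/(2(1+u)^2)\approx 0.00617$. The paper silently resolves this later by treating $\lambda_m$ as $\|\Sigma_o^{-1/2}\Delta\Sigma\,\Sigma_o^{-1/2}\|_2$, under which interpretation both the lemma and your argument go through unchanged; you should state that convention explicitly.
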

 The next step of the proof is to connect the quantity ${\rm Tr}(\Sigma_o^{-1}\Delta \Sigma\Sigma_o^{-1}\Delta \Sigma)$ appearing in  \eqref{eq:thm1-step2_0} to the difference $\xvt-\xv_o$. The subsequent lemma brings us one step closer to this goal.

\begin{lemma}\label{lemma:vector} \cite{zhou2022compressed}
Consider two $m\times m$ matrices $\tilde{\Sigma}=(A\tilde{X}^2A^T)^{-1}$ and $\Sigma=(AX^2A^T)^{-1}$ and define $\Delta \Sigma=\tilde{\Sigma}-\Sigma$.  Then,
\begin{eqnarray}\label{eq:lowerbound:2nd}
{\rm Tr}(\Sigma^{-1}\Delta \Sigma\Sigma^{-1}\Delta \Sigma) 
\geq {x^4_{\min} \lambda^2_{\min}(AA^T) \over x^8_{\max} \lambda^4_{\max}(AA^T)  }\|A(\tilde{X}^2-X^2)A^T\|^2_{\rm HS}
\end{eqnarray}
\begin{eqnarray}
{\rm Tr}(\Sigma^{-1}\Delta \Sigma\Sigma^{-1}\Delta \Sigma)\leq {x^4_{\max} \lambda^2_{\max}(AA^T) \over x_{\min}^8 \lambda^4_{\min}(AA^T)  }\|A(\tilde{X}^2-X^2)A^T\|^2_{\rm HS}.   
\end{eqnarray}
\end{lemma}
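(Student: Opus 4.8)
The statement is purely linear-algebraic and no probability enters; the plan is to reduce ${\rm Tr}(\Sigma^{-1}\Delta\Sigma\Sigma^{-1}\Delta\Sigma)$ to $\|D\|_{\rm HS}^2$, where $D := A(\tilde{X}^2-X^2)A^T = \tilde{\Sigma}^{-1}-\Sigma^{-1}$, through three reductions, each of which costs only a bounded multiplicative factor, and then to control those factors through the extreme eigenvalues of $AX^2A^T$ and $A\tilde{X}^2A^T$.

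First I would strip off the two copies of $\Sigma^{-1}$. Since $\Sigma^{-1/2}\Delta\Sigma\Sigma^{-1/2}$ is symmetric, cycling the trace gives ${\rm Tr}(\Sigma^{-1}\Delta\Sigma\Sigma^{-1}\Delta\Sigma)=\|\Sigma^{-1/2}\Delta\Sigma\Sigma^{-1/2}\|_{\rm HS}^2$. Using $\|MYN\|_{\rm HS}\le\|M\|_2\|Y\|_{\rm HS}\|N\|_2$ once in each direction (for the reverse direction one applies it to the rewriting $\Delta\Sigma=\Sigma^{1/2}(\Sigma^{-1/2}\Delta\Sigma\Sigma^{-1/2})\Sigma^{1/2}$), I obtain
\[
\lambda_{\min}^2(\Sigma^{-1})\,\|\Delta\Sigma\|_{\rm HS}^2 \;\le\; {\rm Tr}(\Sigma^{-1}\Delta\Sigma\Sigma^{-1}\Delta\Sigma) \;\le\; \lambda_{\max}^2(\Sigma^{-1})\,\|\Delta\Sigma\|_{\rm HS}^2 .
\]
Next I would pass from $\Delta\Sigma$ to $D$ via the resolvent identity $\Delta\Sigma=\tilde{\Sigma}-\Sigma=\tilde{\Sigma}(\Sigma^{-1}-\tilde{\Sigma}^{-1})\Sigma=-\tilde{\Sigma} D\Sigma$, which gives $\|\Delta\Sigma\|_{\rm HS}\le\|\tilde{\Sigma}\|_2\|D\|_{\rm HS}\|\Sigma\|_2$; reading the same identity backwards as $D=-\tilde{\Sigma}^{-1}\Delta\Sigma\Sigma^{-1}$ gives the reverse bound $\|\Delta\Sigma\|_{\rm HS}\ge\|D\|_{\rm HS}/(\|\tilde{\Sigma}^{-1}\|_2\|\Sigma^{-1}\|_2)$.

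Finally I would bound the spectral quantities. From $x_{\min}^2 I\preceq X^2\preceq x_{\max}^2 I$ (and likewise for $\tilde{X}^2$) we get $x_{\min}^2 AA^T\preceq AX^2A^T\preceq x_{\max}^2 AA^T$, hence $\lambda_{\min}(AX^2A^T)\ge x_{\min}^2\lambda_{\min}(AA^T)$ and $\lambda_{\max}(AX^2A^T)\le x_{\max}^2\lambda_{\max}(AA^T)$, and the same with $\tilde{X}$. This bounds $\lambda_{\min}(\Sigma^{-1})$ below, $\lambda_{\max}(\Sigma^{-1})=\|\Sigma^{-1}\|_2$ and $\|\tilde{\Sigma}^{-1}\|_2$ above, and $\|\Sigma\|_2=1/\lambda_{\min}(AX^2A^T)$, $\|\tilde{\Sigma}\|_2=1/\lambda_{\min}(A\tilde{X}^2A^T)$ above, all in terms of $x_{\min},x_{\max},\lambda_{\min}(AA^T),\lambda_{\max}(AA^T)$. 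Chaining the lower estimates from the two previous steps yields ${\rm Tr}(\Sigma^{-1}\Delta\Sigma\Sigma^{-1}\Delta\Sigma)\ge \frac{x_{\min}^4\lambda_{\min}^2(AA^T)}{x_{\max}^8\lambda_{\max}^4(AA^T)}\|D\|_{\rm HS}^2$ and chaining the upper estimates yields ${\rm Tr}(\Sigma^{-1}\Delta\Sigma\Sigma^{-1}\Delta\Sigma)\le \frac{x_{\max}^4\lambda_{\max}^2(AA^T)}{x_{\min}^8\lambda_{\min}^4(AA^T)}\|D\|_{\rm HS}^2$, which is exactly \eqref{eq:lowerbound:2nd} and its companion. (A shorter variant: substituting $\Delta\Sigma=-\tilde{\Sigma} D\Sigma$ into both factors and cycling collapses the expression to ${\rm Tr}((\tilde{\Sigma} D)^2)=\|\tilde{\Sigma}^{1/2}D\tilde{\Sigma}^{1/2}\|_{\rm HS}^2$, whence $\sigma_{\min}^2(\tilde{\Sigma})\|D\|_{\rm HS}^2\le{\rm Tr}(\Sigma^{-1}\Delta\Sigma\Sigma^{-1}\Delta\Sigma)\le\sigma_{\max}^2(\tilde{\Sigma})\|D\|_{\rm HS}^2$; this gives a strictly sharper constant and a fortiori the claimed one.)

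I do not expect a genuinely hard step here; the work is bookkeeping. The one delicate point is the lower bound: submultiplicativity of the Hilbert--Schmidt norm runs only in the direction $\|MYN\|_{\rm HS}\le\|M\|_2\|Y\|_{\rm HS}\|N\|_2$, so both the $\Sigma^{-1}$-stripping and the $\Delta\Sigma\to D$ step must be ``inverted'' by first rewriting the quantity to be lower bounded as a product with the appropriate matrices ($\Sigma^{1/2}$, respectively $\tilde{\Sigma}^{-1}$ and $\Sigma^{-1}$) on the outside, and only then applying submultiplicativity forward --- it is easy to flip an inequality or lose a factor if one is not deliberate about which matrix stays in the middle. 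A minor caveat is that the constants $x_{\min},x_{\max}$ implicitly presuppose that the diagonals of both $X$ and $\tilde{X}$ lie in $[x_{\min},x_{\max}]$; in the application this is inherited from the boundedness of $\xv_o$ and the restriction of $g_{\thetav}$ to the relevant net.
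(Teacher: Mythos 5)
Your argument is correct, and each step checks out: the identity ${\rm Tr}(\Sigma^{-1}\Delta\Sigma\Sigma^{-1}\Delta\Sigma)=\|\Sigma^{-1/2}\Delta\Sigma\Sigma^{-1/2}\|_{\rm HS}^2$, the resolvent identity $\Delta\Sigma=-\tilde{\Sigma}D\Sigma$ with $D=A(\tilde{X}^2-X^2)A^T=\tilde{\Sigma}^{-1}-\Sigma^{-1}$, the two-sided use of $\|MYN\|_{\rm HS}\le\|M\|_2\|Y\|_{\rm HS}\|N\|_2$ (correctly inverted for the lower bound), and the spectral estimates $x_{\min}^2\lambda_{\min}(AA^T)\le\lambda(AX^2A^T)\le x_{\max}^2\lambda_{\max}(AA^T)$ chain to exactly the constants ${x_{\min}^4\lambda_{\min}^2/(x_{\max}^8\lambda_{\max}^4)}$ and ${x_{\max}^4\lambda_{\max}^2/(x_{\min}^8\lambda_{\min}^4)}$ claimed. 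Note that this paper does not prove the lemma at all; it is imported verbatim from \cite{zhou2022compressed}, so there is no in-paper proof to compare against, and your write-up serves as a correct self-contained derivation. Your parenthetical ``shorter variant'' is a genuinely nicer route: substituting $\Delta\Sigma=-\tilde{\Sigma}D\Sigma$ into both factors and cycling the trace makes the $\Sigma$-dependence cancel entirely, leaving ${\rm Tr}((\tilde{\Sigma}D)^2)=\|\tilde{\Sigma}^{1/2}D\tilde{\Sigma}^{1/2}\|_{\rm HS}^2$, which yields the strictly sharper two-sided bound $\lambda_{\min}^2(\tilde{\Sigma})\|D\|_{\rm HS}^2\le{\rm Tr}(\Sigma^{-1}\Delta\Sigma\Sigma^{-1}\Delta\Sigma)\le\lambda_{\max}^2(\tilde{\Sigma})\|D\|_{\rm HS}^2$ and hence the stated one a fortiori; since only the order of the resulting bound matters downstream (the ratio $\lambda_{\min}(AA^T)/\lambda_{\max}(AA^T)$ is $\Theta(1)$ on the event $\Ec_4$), nothing is lost and the bookkeeping is halved. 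Your closing caveat is also apt: the constants implicitly require the diagonal entries of both $X$ and $\tilde{X}$ to lie in $[x_{\min},x_{\max}]$, which the paper assumes for $\xv_o$ but only tacitly for the points of $\Cc_n$ at which the lemma is applied.
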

 Combining \eqref{eq:thm1-step2_0} and \eqref{eq:lowerbound:2nd} enables us to obtain a lower bound for $\bar{f}(\tilde{\Sigma}_o)-\bar{f}(\Sigma_o)$ in terms of $\|A(\tilde{X}_o^2-X_o^2)A^T\|^2_{\rm HS}$ and $\lambda_{\min} (AA^T)$ and $\lambda_{\max} (AA^T)$. Lower bounding the quantity ${ \lambda^2_{\min}(AA^T) \over \lambda^4_{\max}(AA^T)  }$ is straightforward. Define the event:
\[
\Ec_4=\{\sqrt{n}-2\sqrt{m}\leq \sigma_{\min} (A) \leq \sigma_{\max}(A) \leq \sqrt{n}+2\sqrt{m}\},
\]
From Lemma \ref{lem:singvalues},  we have
\begin{equation}\label{eq:probE4}
\Prob(\Ec_4^c)\leq 2\exp(-\frac{m}{2}).
\end{equation} 
Hence,
\begin{align}\label{eq:lowerAAT}
\Prob& \left({ \lambda^2_{\min}(AA^T) \over  \lambda^4_{\max}(AA^T)} \leq \frac{(\sqrt{n} - 2 \sqrt{m})^4}{(\sqrt{n} + 2 \sqrt{m})^8}\right)\leq 2\exp(-\frac{m}{2}). 
\end{align}

The only remaining step in obtaining the lower bound we are looking for is that, the term $\|A(\tilde{X}_o^2-X_o^2)A^T\|^2_{\rm HS}$  in the lower bound is not in the form of $\|\xvt_o-\xv_o\|_2^2$ yet. Hence, the final step in obtaining the lower bound is to obtain a lower bound of the form $\|\xvt_o - \xv_o\|_2^2$ for $\|A(\tilde{X}_o^2-X_o^2)A^T\|^2_{\rm HS}$. Towards this goal, for $\gamma>0$ define $\Ec_1 (\gamma)$ as the event that 
\[
 \|A(\tilde{X}_o^2-X_o^2)A^T\|^2_{\rm HS}\geq m(m-1) \|\xv_o^2-\xvt_o^2\|_2^2 - m^2 n \gamma,
\]
Our goal is to show that for an appropriate value of $\gamma$ this event holds with high probability. Towards this goal we use the following lemma:
\begin{lemma} \label{lem:lowerAX2AT}
Let the elements of $m\times n$ matrix $A$ be drawn  i.i.d.~ $\Nc(0,1)$. For any given $\bf{d} \in \mathbb{R}^n$, define $D = \diag (\bf{d})$. Then,
\begin{eqnarray}
\Prob (\|A DA^T\|^2_{HS} \leq m(m-1) \sum_{i=1}^n d_i^2 - t )
\leq 2C \exp \left( -c \min \Big( \frac{t^2}{C^2   \| \mathbf{d}\|_{\infty}^{{4}}  q_{m,n} }, \frac{t}{C \| \mathbf{d}\|_{\infty}^{2} \tilde{q}_{m,n}} \Big)\right) + 2 {\rm e}^{-\frac{n}{2}},
 \end{eqnarray}
 where $C$ and $c$ are the constants that appeared in Lemmas \ref{thm:decoupling} and \ref{thm:HW-ineq}, and 
 \begin{eqnarray}
 q_{m,n} &\triangleq& m^2(2 \sqrt{n} + \sqrt{m})^4, \nonumber \\
 \tilde{q}_{m,n} &\triangleq& (2 \sqrt{n} + \sqrt{m})^2. 
 \end{eqnarray} 
\end{lemma}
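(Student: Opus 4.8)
\textit{Proof proposal.} The plan is to discard the (nonnegative) diagonal of $ADA^T$ and then prove a lower‑tail concentration inequality for the off‑diagonal mass around its mean $m(m-1)\sum_i d_i^2$. Writing $X_1,\dots,X_m\in\mathbb{R}^n$ for the rows of $A$, we have $(ADA^T)_{kl}=X_k^TDX_l$, hence
\[
\|ADA^T\|_{HS}^2\;\ge\;S:=\sum_{k\ne l}\big(X_k^TDX_l\big)^2 .
\]
A direct second‑moment computation, using that the rows are independent and mean zero, gives $\E[S]=m(m-1)\sum_i d_i^2$ (the cross terms $i\ne j$ vanish). It therefore suffices to bound $\Prob\big(S<\E[S]-t\big)\le\Prob\big(|S-\E[S]|>t\big)$.

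The obstacle is that $S-\E[S]=\sum_{k\ne l}h_{k,l}(X_k,X_l)$ with $h_{k,l}(x,y)=(x^TDy)^2-\|\mathbf{d}\|_2^2$ is a degree‑four Gaussian chaos, not a quadratic form, so Hanson--Wright (Theorem~\ref{thm:HW-ineq}) does not apply directly. This is exactly where the decoupling inequality of Theorem~\ref{thm:decoupling} is used: taking $\tilde X_1,\dots,\tilde X_m$ to be an independent copy of the rows (the rows of an independent Gaussian matrix $\tilde A$) and applying the theorem with the $h_{k,l}$ above gives
\[
\Prob\big(|S-\E[S]|>t\big)\;\le\;C\,\Prob\!\Big(C\,\big|\,W-m(m-1)\|\mathbf{d}\|_2^2\,\big|>t\Big),\qquad W:=\sum_{k\ne l}\big(X_k^TD\tilde X_l\big)^2 .
\]
The gain is that, conditionally on $\tilde A$, the variable $W$ becomes the quadratic form $W=\sum_k X_k^TM_kX_k=\mathrm{vec}(A)^T\,\mathrm{blkdiag}(M_1,\dots,M_m)\,\mathrm{vec}(A)$ in the iid Gaussian vector $\mathrm{vec}(A)$, with the PSD matrices $M_k=\sum_{l\ne k}D\tilde X_l\tilde X_l^TD$.

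Next I would apply Hanson--Wright conditionally on $\tilde A$, on the event $\mathcal{E}=\{\sigma_{\max}(\tilde A)\le 2\sqrt n+\sqrt m\}$, which by Lemma~\ref{lem:singvalues} (with $t=\sqrt n$) has probability at least $1-2e^{-n/2}$; this is the source of the additive $e^{-n/2}$ term. On $\mathcal{E}$, writing $M=\mathrm{blkdiag}(M_1,\dots,M_m)$, one bounds the two relevant norms: $\|M\|_2=\max_k\|M_k\|_2\le\|D\tilde A^T\tilde A D\|_2\le\|\mathbf{d}\|_\infty^2\sigma_{\max}(\tilde A)^2\le\|\mathbf{d}\|_\infty^2\,\tilde q_{m,n}$, and, using $\mathrm{Tr}(M_k^2)\le\|M_k\|_2\,\mathrm{Tr}(M_k)$ for PSD $M_k$ together with $\sum_k\mathrm{Tr}(M_k)=(m-1)\sum_i d_i^2\|\tilde A_{\cdot,i}\|_2^2\le (m-1)\,\|\mathbf{d}\|_\infty^2\,m\,\sigma_{\max}(\tilde A)^2$, one gets $\|M\|_{HS}^2\le\|M\|_2\sum_k\mathrm{Tr}(M_k)\le\|\mathbf{d}\|_\infty^4\,m^2(2\sqrt n+\sqrt m)^4=\|\mathbf{d}\|_\infty^4\,q_{m,n}$. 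These are precisely the quantities in the exponent of the claimed bound, so Theorem~\ref{thm:HW-ineq} yields $\Prob\big(|W-\E_X[W\mid\tilde A]|>t/2\mid\tilde A\big)\le 2\exp(-c\min(\cdots))$ on $\mathcal{E}$. Since $\E_X[W\mid\tilde A]=(m-1)\|\tilde AD\|_{HS}^2$ differs from $m(m-1)\|\mathbf{d}\|_2^2$, I would absorb the gap with a second, unconditional Hanson--Wright applied to the centered quadratic form $\|\tilde AD\|_{HS}^2-m\|\mathbf{d}\|_2^2=\sum_{l,i}d_i^2(\tilde A_{li}^2-1)$, whose Hilbert--Schmidt and operator norms are $m\sum_i d_i^4\le mn\|\mathbf{d}\|_\infty^4$ and $\|\mathbf{d}\|_\infty^2$; because $m<n$, the resulting tail is dominated by the one already obtained, and splitting $t$ across the two deviations and taking a union bound gives the stated inequality (the decoupling constant $C$ is absorbed in the usual way).

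The main obstacle, and the step that is genuinely new relative to the analysis of \cite{zhou2022compressed}, is the decoupling argument that converts the degree‑four chaos $\sum_{k\ne l}(X_k^TDX_l)^2$ into a conditionally quadratic form amenable to Hanson--Wright; everything else is careful linear‑algebraic bookkeeping to identify the operator and Hilbert--Schmidt norms of $M$ with $\tilde q_{m,n}$ and $q_{m,n}$.
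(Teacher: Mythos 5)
Your proposal follows essentially the same route as the paper's proof: drop the diagonal of $ADA^T$, decouple the off-diagonal degree-four chaos via Theorem~\ref{thm:decoupling}, apply Hanson--Wright conditionally on $\tilde A$ to the block-diagonal quadratic form, and control $\|F\|_2$ and $\|F\|_{HS}$ on the event $\sigma_{\max}(\tilde A)\le 2\sqrt n+\sqrt m$ from Lemma~\ref{lem:singvalues}, which produces the $2e^{-n/2}$ term. Your extra step recentering $\E_X[W\mid\tilde A]=(m-1)\|\tilde A D\|_{HS}^2$ at $m(m-1)\|\mathbf{d}\|_2^2$ via a second Hanson--Wright application is in fact more careful than the paper, which identifies the conditional and unconditional means without comment.
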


The proof of this lemma uses decoupling and is presented in Section \ref{sec:proof:lem:lowerAX2AT}. Using this lemma, we have
\begin{align}
\Prob& (\Ec_1^c) \overset{(a)}{\leq} 2C e^{k \log \frac{2k}{\delta}}   \exp \left( -c \min \Big( \frac{\check{\alpha}_{m,n} \gamma^2}{x_{\max}^8}, \frac{\check{\beta}_{m,n} \gamma}{x_{\max}^4} \Big)\right)+ 2 {\rm e}^{k \log \frac{2k}{\delta} - n/2} \nonumber \\
&\leq 2C e^{k \log \frac{2k}{\delta}} \left(  {\rm e}^{-c \frac {\check{\alpha}_{m,n} \gamma^2}{x_{\max}^8}}+  {\rm e}^{ -c  \frac{\check{\beta}_{m,n} \gamma}{ x_{\max}^4}}  \right) + 2 {\rm e}^{k \log \frac{2k}{\delta} - \frac{n}{2}} \nonumber \\
&=2C {\rm e}^{k \log \frac{2k}{\delta}}   {\rm e}^{ -  \frac{cm^2 \gamma^2}{C^2   x^8_{\max}  (2  + \sqrt{m/n})^4 }} +2C {\rm e}^{k \log \frac{2k}{\delta}}   {\rm e}^{ - \frac{cm^2  \gamma}{C x_{\max}^4 (2 + \sqrt{m/n})^2} } + 2 {\rm e}^{k \log \frac{2k}{\delta} - n/2},
 \end{align}
 where 
 \begin{eqnarray}
 \check{\alpha}_{m,n} &\triangleq& \frac{m^4n^2 }{C^2  m^2  (2 \sqrt{n} + \sqrt{m})^4 } = \frac{m^2n^2 }{C^2   (2 \sqrt{n} + \sqrt{m})^4 }, \nonumber \\
 \check{\beta}_{m,n}  &\triangleq&  \frac{m^2 n }{C  (2 \sqrt{n} + \sqrt{m})^2}. 
\end{eqnarray}
In order to obtain Inequality (a), we should first note that by a simple counting argument we conclude that  $|\mathcal{C}_{\delta}| \leq (\frac{2k}{\delta})^k$ \cite{shalev2014understanding}. Hence, by combining this result, the union bound on the choice of $\tilde{x}_o$ and Lemma \ref{lem:lowerAX2AT} we reach Inequality (a).

 By setting 
\begin{equation}
\gamma= 2C \frac{x_{\max}^4 (2+ \sqrt{m/n})^2 }{m \sqrt{c}} \sqrt{k \log \frac{2k}{\delta}}, 
\end{equation}
we have
\begin{align}\label{eq:lower:axat}
\|A (\tilde{X}_o ^2 - X_o^2 )A^T\|_{HS}^2 &\geq m(m-1) \sum_{i}^n (\tilde{x}_{o,i}^2 - x_{o,i}^2)^2\nonumber - \tilde{C} mn \sqrt{k \log \frac{2k}{\delta}} \nonumber \\
&= m(m-1) \sum_{i}^n (\tilde{x}_{o,i} - x_{o,i})^2 (\tilde{x}_{o,i} + x_{o,i})^2 - \tilde{C} mn \sqrt{k \log \frac{2k}{\delta}} \nonumber \\
&\geq 4m(m-1) x_{\min}^2 \sum_{i}^n (\tilde{x}_{o,i} - x_{o,i})^2 \nonumber - \tilde{C} mn \sqrt{k \log \frac{2k}{\delta}} \nonumber \\
&= 4m(m-1) x_{\min}^2 \|\xvt_o- \xv_o\|_2^2  - \tilde{C} mn \sqrt{k \log \frac{2k}{\delta}}
\end{align}
with probability
\begin{eqnarray}\label{eq:probE1}
\Prob (\Ec_1^c) &\leq& O(e^{-k \log \frac{k}{\delta}}   +  {\rm e}^{k \log \frac{k}{\delta}-\frac{n}{2}}), 
\end{eqnarray}
In the above equations $\tilde{C}$ is a constant that does not depend on $m,n$ or $\delta$. Furthermore, in \eqref{eq:probE1} we have assumed that $m$ is large enough (and hence $\gamma$ is small enough) to make the inequality $\frac{m^2 \gamma^2}{C^2   x^8_{\max}  (2  + \sqrt{m/n})^4 }< \frac{m^2  \gamma}{C x_{\max}^4 (2 + \sqrt{m/n})^2}$ true.

Combining \eqref{eq:thm1-step2_0}, \eqref{eq:lowerbound:2nd}, \eqref{eq:lowerAAT}, \eqref{eq:lower:axat}, and  \eqref{eq:probE1} we conclude the lower bound we were looking for:

\begin{align}\label{eq:lower:semifinal}
\bar{f}(\tilde{\Sigma}_o)-\bar{f}(\Sigma_o) \geq \frac{\check{C}}{(1+ \lambda_m)^2} \frac{(\sqrt{n} - 2\sqrt{m})^4}{(\sqrt{n} + 2\sqrt{m})^8}  \left(4m(m-1) x_{\min}^2 \|\xvt_o- \xv_o\|_2^2  - \tilde{C} mn \sqrt{k \log \frac{2k}{\delta}} \right)
\end{align}
with probability 
\[
P(\mathcal{E}_1^c \cap \mathcal{E}_4^c) \geq 1  - O\left(e^{-\frac{m}{2}}+ e^{-k \log \frac{k}{\delta}}   +  {\rm e}^{k \log \frac{k}{\delta}-\frac{n}{2}}\right).
\]

Our last step for the lower bound is to simplify the expression of \eqref{eq:lower:semifinal}.  Recall that $\lambda_{m}$ is defined as the maximum eigenvalue of $\Sigma_o^{-{1\over 2}}\Delta \Sigma\Sigma_o^{-{1\over 2}}$. On the other hand, $\lambda_m=\|\Sigma_o^{-{1\over 2}}\Delta \Sigma\Sigma_o^{-{1\over 2}}\|_2 \leq \|\Delta \Sigma\|_2 \|\Sigma_o^{-{1\over 2}}\|_2^2 = \|\Delta \Sigma\|_2 \|\Sigma_o^{-1}\|_2$. But, $ \|\Sigma_o^{-1}\|_2=\|AX_o^2A^T\|_2\leq x^2_{\max}\lambda_{\max}(AA^T)$. Similarly, $\|\Delta \Sigma\|_2 =\|\hat{\Sigma}_o-{\Sigma}_o\|_2 \leq \|\hat{\Sigma}_o\|_2 + \|{\Sigma}_o\|_2 \leq {1\over \|AX_o^2A^T\|_2}+{1\over \|A\hat{X}_o^2A^T\|_2}\leq {2\over x^2_{\min}\lambda_{\min}(AA^T)}$. So overall, $\lambda_m\leq {2x^2_{\max}\lambda_{\max}(AA^T)\over x^2_{\min}\lambda_{\min}(AA^T)}$, and conditioned on $\Ec_4$, we have
\begin{align}
\lambda_m\leq {2x^2_{\max}(1+2\sqrt{m/n})^2\over x^2_{\min}(1-2\sqrt{m/n})^2}.\label{eq:def-lambda-m}
\end{align}
Hence, since $\lambda_m>1$, we have
\begin{align}\label{eq:bound:lambdam}
(1+\lambda_m)^2\leq {16x^4_{\max}(1+2\sqrt{m/n})^4\over x^4_{\min}(1-2\sqrt{m/n})^4}= O(1). 
\end{align}
Hence,
\begin{eqnarray}
\frac{\check{C}}{(1+ \lambda_m)^2} \frac{(\sqrt{n} - 2\sqrt{m})^4}{(\sqrt{n} + 2\sqrt{m})^8}
= \frac{\check{C}}{n^2(1+ \lambda_m)^2} \frac{(1 - 2\sqrt{m/n})^4}{(1 + 2\sqrt{m/n})^8} \geq \frac{\check{\check C}}{n^2}. 
\end{eqnarray}
In summary, there exist two absolute constants $\bar{C}$ and $\bar{\bar{C}}$ such that
\begin{align}\label{eq:lower:final}
\bar{f}(\tilde{\Sigma}_o)-\bar{f}(\Sigma_o)
\geq   \frac{ \bar{C} m(m-1) }{n^2}\|\xvt_o- \xv_o\|_2^2  - \bar{\bar{C}} \frac{m \sqrt{k \log \frac{2}{\delta}}}{n}
\end{align}
with probability 
\[
P(\mathcal{E}_1^c \cap \mathcal{E}_4^c) \geq 1  - O\left(e^{-\frac{m}{2}}+ e^{-k \log \frac{1}{\delta}}   +  {\rm e}^{k \log \frac{k}{\delta}-\frac{n}{2}}\right).
\]

\subsubsection{Finding upper bounds for $\bar{f}(\tilde{\Sigma}_o)- f(\tilde{\Sigma}_o)$ and $f (\Sigma_o) - \bar{f}({\Sigma}_o)$}\label{ssec:upperconcentration}

Given $t_2>0$ define events $\Ec_2$ and $\Ec_3$ 
\[
\Ec_2(t_2)=\{ |\delta f({\Sigma}_o)| \leq t_2\},  \;\;\;\Ec_3=\{ |\delta f(\tilde{\Sigma}_o)|\leq t_2\}.
\]

The following Lemma enables to calculate the probability of $\Ec_2 \cap \Ec_3$. 

\begin{lemma}\label{lemma:concent-z}
Given $\Sigma=(AX^2A^T)^{-1}$, let $\delta f({\Sigma})={f}({\Sigma})-\bar{f}({\Sigma})$. Then, for $t>0$, there exists a constant $c$ independent of $m,n, x_{\rm min},$ and $x_{\rm max}$, such that
\[
\Prob(|\delta f(\Sigma)|\geq t|A)\leq  2\exp\left( -\frac{cLt}{2x_{\max}^2 \| A^T\Sigma A\|_2 } \min \left(1, \frac{t}{2m x_{\max}^2 \| A^T\Sigma A\|_2} \right) \right).
\]
Also, $\|A^T\Sigma A\|^2\leq {\lambda^2_{\max}(AA^T)\over \lambda^2_{\min}(AA^T) x^4_{\min}}$. 
\end{lemma}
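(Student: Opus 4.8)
The plan is to fix $A$ (hence $\Sigma$) and recognize $\delta f(\Sigma)$ as a centered quadratic form in the speckle vectors, to which the Hanson--Wright inequality (Theorem~\ref{thm:HW-ineq}) applies directly. First I would plug the noiseless model $\yv_\ell = A X_o \wv_\ell$ into \eqref{eq:likelihood} and subtract \eqref{eq:min-f-bar}; because $\E[\wv_\ell\wv_\ell^T]=\sigma_w^2 I_n$, the deterministic $-\log\det\Sigma$ terms cancel and
\[
\delta f(\Sigma)=\frac{1}{L\sigma_w^2}\sum_{\ell=1}^L\Big({\rm Tr}\big(\Sigma A X_o\wv_\ell\wv_\ell^T X_o A^T\big)-\sigma_w^2\,{\rm Tr}\big(\Sigma A X_o^2 A^T\big)\Big).
\]
Writing $\wv_\ell=\sigma_w\mathbf{g}_\ell$ with $\mathbf{g}_\ell\sim\Nc(0,I_n)$ i.i.d.\ and $M\triangleq X_o A^T\Sigma A X_o\succeq 0$, this collapses to $\delta f(\Sigma)=\frac1L\sum_{\ell=1}^L\big(\mathbf{g}_\ell^T M\mathbf{g}_\ell-\E[\mathbf{g}_\ell^T M\mathbf{g}_\ell]\big)$.

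Next I would stack the $L$ independent blocks into $\mathbf{g}=(\mathbf{g}_1,\dots,\mathbf{g}_L)\sim\Nc(0,I_{Ln})$ and write $\delta f(\Sigma)=\mathbf{g}^T\tilde M\mathbf{g}-\E[\mathbf{g}^T\tilde M\mathbf{g}]$ with $\tilde M=\tfrac1L\,(I_L\otimes M)$, so that $\|\tilde M\|_2=\|M\|_2/L$ and $\|\tilde M\|_{\rm HS}^2=\|M\|_{\rm HS}^2/L$. Theorem~\ref{thm:HW-ineq}, applied conditionally on $A$ with the $\psi_2$-norm of a standard Gaussian coordinate absorbed into the constant, then gives a bound of the form $2\exp\!\big(-c\min(Lt^2/\|M\|_{\rm HS}^2,\,Lt/\|M\|_2)\big)$. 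To match the claimed shape I would use $\|M\|_2\le\|X_o\|_2^2\|A^T\Sigma A\|_2=x_{\max}^2\|A^T\Sigma A\|_2$ together with the rank estimate $\|M\|_{\rm HS}^2\le m\|M\|_2^2$ (valid since $A$ is $m\times n$, so $\mathrm{rank}(M)\le m$); substituting and pulling the common factor out of the minimum produces $\frac{Lt}{x_{\max}^2\|A^T\Sigma A\|_2}\min\!\big(1,\frac{t}{m x_{\max}^2\|A^T\Sigma A\|_2}\big)$, and after rescaling $c$ to absorb the factors of $2$ this is exactly the stated inequality.

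For the auxiliary bound I would estimate $\|A^T\Sigma A\|_2\le\|A\|_2^2\|\Sigma\|_2=\lambda_{\max}(AA^T)/\lambda_{\min}(AX^2A^T)$ and note that $X^2\succeq x_{\min}^2 I_n$ forces $AX^2A^T\succeq x_{\min}^2 AA^T$, whence $\lambda_{\min}(AX^2A^T)\ge x_{\min}^2\lambda_{\min}(AA^T)$; squaring gives $\|A^T\Sigma A\|^2\le\lambda_{\max}^2(AA^T)/(\lambda_{\min}^2(AA^T)x_{\min}^4)$. I expect the only non-routine point to be the rank bound $\|M\|_{\rm HS}^2\le m\|M\|_2^2$: it is what inserts the factor $m$ inside the minimum and hence yields the sharp small-$t$ (order $Lt^2/m$) regime that the lemma needs downstream, whereas the cancellation of $\log\det\Sigma$, the reduction to a single Gaussian vector, and the operator-norm manipulations are all bookkeeping. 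I would also keep careful track of conditioning: once $A$ is fixed the only randomness in $\delta f(\Sigma)$ is $\wv_1,\dots,\wv_L$, so Hanson--Wright applies verbatim to the conditional law, which is exactly the conditioning in the statement.
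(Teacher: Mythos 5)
Your proposal is correct and follows essentially the same route as the paper: both reduce $\delta f(\Sigma)$ to a centered quadratic form in the stacked speckle vector with the block-diagonal matrix built from $X_oA^T\Sigma A X_o$, apply Hanson--Wright conditionally on $A$, and convert to the stated form via $\|M\|_2\le x_{\max}^2\|A^T\Sigma A\|_2$ and the rank-$m$ estimate $\|M\|_{\rm HS}^2\le m\|M\|_2^2$. The auxiliary operator-norm bound is also derived the same way, so there is nothing to add.
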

The proof of this lemma is presented in Section \ref{sec:proof:lemma:concent-z}. 

Our goal is to use Lemma \ref{lemma:concent-z} for obtaining  $\Prob((\Ec_2\cap\Ec_3)^c)$. 

First note that we have
\begin{eqnarray}\label{eq:breakingProb}
\Prob((\Ec_2\cap\Ec_3)^c) = \Prob((\Ec_2\cap\Ec_3)^c\cap\Ec_4)
+ \Prob((\Ec_2\cap\Ec_3)^c\cap\Ec_4^c) \leq \Prob((\Ec_2\cap\Ec_3)^c\cap\Ec_4) +\Prob(\Ec_4^c).
\end{eqnarray}

Furthermore, using Lemma \ref{lemma:concent-z}, for $t_2 \leq 2m x_{\max}^2 \| A^T\Sigma A\|_2 $,
\begin{align}
\Prob&((\Ec_2\cap\Ec_3)^c\cap\Ec_4)\leq  e^{k \log \frac{2k}{\delta}}\exp(-{c_3L t_2^2/2m})\nonumber\\
&=\exp((k \log \frac{2k}{\delta}) -{c_3L t_2^2/2m}).
\end{align}
To obtain the first inequality we have used Lemma \ref{lemma:concent-z} and used a union bound on all the possible choices of $\tilde{X}_o \in \Cc_n$. As discussed before, $|\Cc_\delta| \leq {\rm e}^{k \log \frac{2k}{\delta}}$. 
Let $t_2=\sqrt{4m k \log( \frac{2k}{\delta}) /Lc_3}$. Note that since we are interested in the regime that $m$ is much bigger than $k \log \frac{2k}{\delta}$, we can conclude that $t_2 \leq 2m x_{\max}^2 \| A^T\Sigma A\|_2 $. Hence, 
\begin{align}\label{eq:probcomp}
\Prob((\Ec_2\cap\Ec_3)^c\cap\Ec_4)&\leq  e^{-k \log \frac{2k}{\delta}}.
\end{align}
Hence, combining \eqref{eq:breakingProb}  and \eqref{eq:probcomp} we have
\begin{equation}
\Prob((\Ec_2\cap\Ec_3)^c) \leq e^{-k \log \frac{2k}{\delta}}+ 2e^{-\frac{m}{2}}.
\end{equation}

In summary, we have that for $t_2 = \sqrt{4m k \log( \frac{2k}{\delta}) /Lc_3}$, 

\begin{equation}\label{eq:probE2E3}
\Prob((\Ec_2\cap\Ec_3)^c) = O (e^{-k \log \frac{k}{\delta}}+ e^{-\frac{m}{2}}). 
\end{equation}

\subsubsection{Finding an upper bound for $f(\tilde{\Sigma}_o)-{f}(\hat{\Sigma}_o)$:}\label{ssec:errorsfunction}

The following lemma help us obtain an upper bound for $f(\tilde{\Sigma}_o)-{f}(\hat{\Sigma}_o)$.\

\begin{lemma}\label{lemma:1:upper}
 Assume that $A\tilde{X}_o^2A^T$ and $A\hat{X}_o^2A^T$ are both invertible. Define $\Delta \Sigma=\tilde{\Sigma}_o-\hat{\Sigma}_o$. Then,
\begin{equation}
|\lambda_i (\tilde{\Sigma}_o^{-{1\over 2}}\Delta \Sigma \tilde{\Sigma}_o^{-{1\over 2}}) | \in [0, \frac{x_{\max}^2\lambda^2_{\max} (A A^T) \| {\xvh}_o^2 - {\xvt_o}^2\|_\infty}{x_{\min}^4 \lambda^2_{\min} (AA^T)} ].
\end{equation}
Furthermore, if all the eigenvalues of $\tilde{\Sigma}_o^{-{1\over 2}}\Delta \Sigma\tilde{\Sigma}_o^{-{1\over 2}}$ fall in the range $[-0.5, 0.5]$. Then,
\begin{eqnarray}\label{eq:upperconvexity}
{f}(\tilde{\Sigma}_o)-{f}(\hat{\Sigma}_o)\leq  \frac{x_{\max}^2 \lambda^2_{\max}(AA^T)  \|{\xvh_o}^2 - {\xvt}_o^2\|_\infty}{x^4_{\min} \lambda_{\min}^2 (AA^T)} \Big(2m + \frac{1}{L \sigma_w^2} \sum_{\ell=1}^L \wv_{\ell}^T \wv_{\ell}\Big).  
\end{eqnarray}
\end{lemma}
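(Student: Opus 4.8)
\textbf{Proof proposal for Lemma \ref{lemma:1:upper}.}

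The plan is to proceed in two independent parts, mirroring the two assertions of the lemma. For the first part I would exploit Lemma \ref{lem:boundeigenvalues}, which bounds the eigenvalues of a difference of inverses. Here the two matrices are $\tilde{\Sigma}_o^{-1} = A\tilde{X}_o^2 A^T$ and $\hat{\Sigma}_o^{-1} = A\hat{X}_o^2 A^T$, so I would write $\Delta\Sigma = \tilde{\Sigma}_o - \hat{\Sigma}_o = (A\tilde{X}_o^2A^T)^{-1} - (A\hat{X}_o^2A^T)^{-1}$ and apply Lemma \ref{lem:boundeigenvalues} to get $|\lambda_i(\Delta\Sigma)| \le \sigma_{\max}(A\tilde{X}_o^2A^T - A\hat{X}_o^2A^T) / (\sigma_{\min}(A\tilde{X}_o^2A^T)\sigma_{\min}(A\hat{X}_o^2A^T))$. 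The numerator is $\sigma_{\max}(A(\tilde{X}_o^2 - \hat{X}_o^2)A^T) \le \lambda_{\max}(AA^T)\|\tilde{X}_o^2 - \hat{X}_o^2\|_2 = \lambda_{\max}(AA^T)\|\tilde{\xv}_o^2 - \hat{\xv}_o^2\|_\infty$, and each factor in the denominator is at least $x_{\min}^2\lambda_{\min}(AA^T)$. This gives a bound on $|\lambda_i(\Delta\Sigma)|$; to convert it to a bound on the eigenvalues of the symmetrized version $\tilde{\Sigma}_o^{-1/2}\Delta\Sigma\,\tilde{\Sigma}_o^{-1/2}$, I would note this matrix is similar (via $\tilde{\Sigma}_o^{1/2}$) to $\tilde{\Sigma}_o^{-1}\Delta\Sigma$, so its spectral radius is at most $\|\tilde{\Sigma}_o^{-1}\|_2 \|\Delta\Sigma\|_2 \le x_{\max}^2\lambda_{\max}(AA^T)$ times the previous bound — alternatively, carry the extra factor $\|\tilde{\Sigma}_o^{-1}\|_2 \le x_{\max}^2\lambda_{\max}(AA^T)$ through directly, which produces exactly the claimed $x_{\max}^2\lambda_{\max}^2(AA^T)\|\hat{\xv}_o^2-\tilde{\xv}_o^2\|_\infty / (x_{\min}^4\lambda_{\min}^2(AA^T))$.

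For the second part I would use a Taylor-type expansion of $f$ along the segment joining $\tilde{\Sigma}_o$ and $\hat{\Sigma}_o$. Recall $f(\Sigma) = -\log\det\Sigma + \frac{1}{L\sigma_w^2}\sum_\ell {\rm Tr}(\Sigma\yv_\ell\yv_\ell^T)$ with $\yv_\ell = AX_o\wv_\ell$. The linear term is exactly linear in $\Sigma$, so the only curvature comes from $-\log\det\Sigma$. Writing $\Sigma(s) = \tilde{\Sigma}_o + s(\hat{\Sigma}_o - \tilde{\Sigma}_o) = \tilde{\Sigma}_o - s\Delta\Sigma$ and changing variables to $M(s) = \tilde{\Sigma}_o^{-1/2}\Sigma(s)\tilde{\Sigma}_o^{-1/2} = I - s\,\tilde{\Sigma}_o^{-1/2}\Delta\Sigma\,\tilde{\Sigma}_o^{-1/2}$, I would expand $\log\det$ in the eigenvalues $\mu_i = \lambda_i(\tilde{\Sigma}_o^{-1/2}\Delta\Sigma\,\tilde{\Sigma}_o^{-1/2})$: the difference $-\log\det\hat{\Sigma}_o + \log\det\tilde{\Sigma}_o = -\sum_i \log(1-\mu_i)$, and under the hypothesis $\mu_i \in [-1/2,1/2]$ one has $|-\log(1-\mu_i) - \mu_i| \le \mu_i^2$ (since for $|x|\le 1/2$, $|-\log(1-x)-x| \le x^2$), hence $|-\log\det\hat{\Sigma}_o+\log\det\tilde{\Sigma}_o| \le |{\rm Tr}(\tilde{\Sigma}_o^{-1/2}\Delta\Sigma\,\tilde{\Sigma}_o^{-1/2})| + \sum_i\mu_i^2 \le (m + m)\max_i|\mu_i| = 2m\max_i|\mu_i|$ using $|\mu_i|\le 1/2 < 1$. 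For the linear part, $\frac{1}{L\sigma_w^2}\sum_\ell {\rm Tr}((\tilde{\Sigma}_o - \hat{\Sigma}_o)\yv_\ell\yv_\ell^T) = \frac{1}{L\sigma_w^2}\sum_\ell \yv_\ell^T\Delta\Sigma\,\yv_\ell$, and I would bound $|\yv_\ell^T\Delta\Sigma\,\yv_\ell| = |(\tilde{X}_o^{-1}A^{-?}\cdots)|$ — more cleanly, $\yv_\ell^T\Delta\Sigma\,\yv_\ell = (\tilde{\Sigma}_o^{-1/2}\yv_\ell)^T(\tilde{\Sigma}_o^{1/2}\Delta\Sigma\,\tilde{\Sigma}_o^{1/2})(\tilde{\Sigma}_o^{-1/2}\yv_\ell)$, wait, that is not quite the symmetrized form; instead bound directly $|\yv_\ell^T\Delta\Sigma\,\yv_\ell| \le \|\Delta\Sigma\|_2\|\yv_\ell\|_2^2$ and then relate $\|\Delta\Sigma\|_2 \le \max_i|\mu_i|\cdot\|\tilde{\Sigma}_o\|_2 \le \max_i|\mu_i|$ (using $\|\tilde{\Sigma}_o\|_2\le 1$ appropriately) — actually the cleanest route is to factor $\yv_\ell = AX_o\wv_\ell$ and combine with the eigenvalue bound from part one so that $\|\yv_\ell\|_2^2$ contributes the $\sum_\ell \wv_\ell^T\wv_\ell$ factor and the operator-norm bound on the sandwiched $\Delta\Sigma$ contributes the $x_{\max}^2\lambda_{\max}^2(AA^T)\|\hat{\xv}_o^2-\tilde{\xv}_o^2\|_\infty/(x_{\min}^4\lambda_{\min}^2(AA^T))$ prefactor. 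Collecting the $2m$ from the $\log\det$ part and the $\frac{1}{L\sigma_w^2}\sum_\ell\wv_\ell^T\wv_\ell$ from the linear part, both multiplied by that common prefactor, yields \eqref{eq:upperconvexity}.

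The main obstacle I anticipate is bookkeeping the various operator norms so that the $\yv_\ell$-dependent term comes out exactly as $\frac{1}{L\sigma_w^2}\sum_\ell\wv_\ell^T\wv_\ell$ times the stated prefactor: one must be careful about whether to bound $\yv_\ell^T\Delta\Sigma\,\yv_\ell$ via $\|\Delta\Sigma\|_2\|\yv_\ell\|_2^2$ directly (giving $\|\yv_\ell\|_2^2 = \wv_\ell^T X_o A^T A X_o \wv_\ell \le x_{\max}^2\lambda_{\max}(AA^T)\wv_\ell^T\wv_\ell$, which overshoots the claimed prefactor's $\lambda_{\max}^2$) or via the change of variables to absorb one power of $A$ into $\tilde{\Sigma}_o$. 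The honest reading of the statement is that $\|\Delta\Sigma\|_2$ itself is bounded using Lemma \ref{lem:boundeigenvalues} as $\lambda_{\max}(AA^T)\|\hat{\xv}_o^2-\tilde{\xv}_o^2\|_\infty / (x_{\min}^4\lambda_{\min}^2(AA^T))$ (one power of $\lambda_{\max}$), and $\|\yv_\ell\|_2^2 \le x_{\max}^2\lambda_{\max}(AA^T)\wv_\ell^T\wv_\ell$ supplies the second power of $\lambda_{\max}$ and the $x_{\max}^2$ — so the product is exactly right, and the segment expansion of $\log\det$ is the only genuinely analytic step, the rest being norm inequalities. I would also double-check the elementary inequality $|-\log(1-x) - x| \le x^2$ on $[-1/2,1/2]$ and the passage from $\sum_i\mu_i^2 \le \max_i|\mu_i|\sum_i|\mu_i|$ — but a cleaner bound just uses $\sum_i \mu_i^2 \le m\max_i\mu_i^2 \le m\max_i|\mu_i|$ since $|\mu_i|\le 1$, combined with $|{\rm Tr}(\cdot)| \le m\max_i|\mu_i|$, to get the $2m$ constant cleanly.
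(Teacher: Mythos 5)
Your proposal is correct and follows essentially the same route as the paper: part one applies Lemma \ref{lem:boundeigenvalues} to $\Delta\Sigma$ and multiplies by $\|\tilde{\Sigma}_o^{-1}\|_2 \leq x_{\max}^2\lambda_{\max}(AA^T)$, and part two splits $f$ into the $\log\det$ term (controlled via the eigenvalues $\mu_i$ of the sandwiched matrix, giving the $2m$) and the linear trace term (controlled via $\sigma_{\max}(\Delta\Sigma)\cdot\frac{1}{L\sigma_w^2}\sum_\ell \yv_\ell^T\yv_\ell$ with $\yv_\ell^T\yv_\ell \leq x_{\max}^2\lambda_{\max}(AA^T)\wv_\ell^T\wv_\ell$), exactly as in the paper. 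The only cosmetic difference is that you bound $|\log(1-\mu_i)|$ via $|-\log(1-x)-x|\leq x^2$ plus the trace, where the paper uses $|\log(1+x)|\leq 2|x|$ directly; both yield the same $2m\max_i|\mu_i|$.
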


The proof of this lemma can be found in Section \ref{sec:proof:lemma:1:upper}. The main objective of this section is to obtain upper bound for the following three terms:
\begin{itemize}
\item $\frac{1}{L \sigma_w^2} \sum_{\ell} \wv_{\ell}^T \wv_{\ell}$: 

Consider the following event:
\[
\Ec_5 = \left\{ \frac{1}{L \sigma_w^2} \sum_{\ell=1}^{L} \wv_{\ell}^T \wv_{\ell} \geq 2n \right\}
\]
It is straightforward to use Lemma \ref{lem:conc:chisq} to see that
\[
\Prob(\Ec_5) \leq {\rm e}^{-\frac{Ln}{8}}. 
\]
Hence we conclude that
\begin{eqnarray}
 \frac{1}{L \sigma_w^2} \sum_{\ell=1}^{L} \wv_{\ell}^T \wv_{\ell} \leq 2n. 
\end{eqnarray}
with probability
\begin{equation}
    \Prob(\Ec_5^c) \geq 1 - {\rm e}^{-\frac{Ln}{8}}.
\end{equation}.

\item $\frac{ \lambda^2_{\max}(AA^T)  }{ \lambda_{\min}^2 (AA^T)} $:

Based on the calculation of $\Prob(\Ec_4)$ in \eqref{eq:probE4}, it is straightforward to see that
\begin{align}
&\frac{ \lambda^2_{\max}(AA^T)  }{ \lambda_{\min}^2 (AA^T)}  \leq \frac{(\sqrt{n} + 2\sqrt{m})^2}{(\sqrt{n} - 2\sqrt{m})^2}, \nonumber
\end{align}
with probability larger than
\begin{equation}
\Prob(\Ec_4) \geq 1 - 2\exp(-\frac{m}{2}).
\end{equation} 

\item $\|{\xvh_o}^2 - {\xvt}_o^2\|_\infty$: Before we start let us prove a simple claim. For every $\xvt$ in the range of $g_{\thetav} (\uv)$, there exists a vector $\mathbf{v}$ such that,
\[
\|\xvt- \mathbf{v}\|_2 \leq  \delta. 
\]
To prove this claim, let's assume that $\xvt = g_{\tilde{\thetav}} (\uv)$. Suppose that $\hat{\thetav}$ is a vector in the $\delta_n$-net of $[-1,1]^k$that is closest to $\tilde{\thetav}$. By the definition of $\delta_n$-net, we have 
\[
\|\tilde{\thetav} - \hat{\thetav} \|_2 \leq \delta_n.
\]
Hence, it is straightforward to use Lipschitzness of $g_{\tilde{\thetav}} (\uv)$ to prove that
\begin{equation}\label{eq:xvt_xvh}
\|\xvt-\hat{\xv}\|_2 \leq \delta, 
\end{equation}
and
\[
\|\xvh_o^2- \xvt_o^2\|_\infty \leq 2 x_{\max} \|\xvh_o- \xvt_o\|_\infty \leq 2 x_{\max} \|\xvh_o- \xvt_o\|_2 \leq  2 x_{\max} \delta,
\]

\end{itemize}

Summarizing the discusions of this section, we can conclude that there exist a constant $\tilde{C}$ such that
\begin{align}
|f(\tilde{\Sigma}_o)-{f}(\hat{\Sigma}_o) | \leq \tilde{C} n \delta 
\end{align}
with probability larger than
\[
\Prob (\Ec_4 \cap \Ec_5) \geq 1 - O(e^{-\frac{m}{2}} + e^{-\frac{Ln}{8}}). 
\]

\subsubsection{Summary of the bounds}\label{ssec:proofsummary}

As mentioned in \eqref{eq:first_ineq} we have
\begin{align}\label{eq:mainf:equation}
\bar{f}(\tilde{\Sigma}_o)-\bar{f}(\Sigma_o)\leq \bar{f}(\tilde{\Sigma}_o)- f(\tilde{\Sigma}_o)+f(\tilde{\Sigma}_o)-{f}(\hat{\Sigma}_o)+f({\Sigma}_o)-\bar{f}({\Sigma}_o),
\end{align}
Furthermore, we proved the following:

\begin{itemize}
\item According to \eqref{eq:lower:final} there exist two constants $\bar{C}$ and $\bar{\bar{C}}$ such that
\begin{align}\label{eq:lb:2}
\bar{f}(\tilde{\Sigma}_o)-\bar{f}(\Sigma_o) \geq   \frac{ \bar{C} m(m-1) }{n^2}\|\xvt_o- \xv_o\|_2^2  - \bar{\bar{C}} \frac{m \sqrt{k \log \frac{2k}{\delta}}}{n}
\end{align}
with probability 
\[
P(\mathcal{E}_1^c \cap \mathcal{E}_4^c) \geq 1  - O\left(e^{-\frac{m}{2}}+ e^{-k \log \frac{k}{\delta}}   +  {\rm e}^{k \log \frac{k}{\delta}-\frac{n}{2}}\right).
\]
\item According to the discussion of Section \ref{ssec:upperconcentration}, we have 
\begin{align}
|\bar{f}(\tilde{\Sigma}_o)- f(\tilde{\Sigma}_o)|  &\leq \sqrt{4m k \log( \frac{2k}{\delta}) /Lc_3}, \nonumber \\
|\bar{f} ({\Sigma}_o) - f (\Sigma_o)| &\leq \sqrt{4m k \log( \frac{2k}{\delta}) /Lc_3}, 
\end{align}
with probability $1- O (e^{-k \log \frac{k}{\delta}}+ e^{-\frac{m}{2}})$. 

\item According to the discussion of Section \ref{ssec:errorsfunction}, there exists a constant $\tilde{C}$ such that 
\begin{align}
|f(\tilde{\Sigma}_o)-{f}(\hat{\Sigma}_o) | \leq \tilde{C} n \delta 
\end{align}
with probability larger than
\[
\Prob (\Ec_4 \cap \Ec_5) \geq 1 - O(e^{-\frac{m}{2}} + e^{-\frac{Ln}{8}}). 
\]
\end{itemize}

Combining all these three results with \eqref{eq:mainf:equation} we notice that
\begin{align}
\frac{ \bar{C} m(m-1) }{n^2}\|\xvt_o- \xv_o\|_2^2 \leq \bar{\bar{C}} \frac{m \sqrt{k \log \frac{2}{\delta}}}{n} + 2 \sqrt{4m k \log( \frac{2}{\delta}) /Lc_3} + \tilde{C} n \delta ,
\end{align}
with probability $1- O(e^{-\frac{m}{2}} + e^{-\frac{Ln}{8}} + e^{-k \log \frac{k}{\delta}}+   {\rm e}^{k \log \frac{k}{\delta}-\frac{n}{2}})$. 
Hence, we can conclude that
\begin{align}
\frac{1}{n}\|\xvt_o- \xv_o\|_2^2 \leq O\left( \frac{ \sqrt{k \log \frac{k}{\delta} }}{m} +  \frac{n \sqrt{ k \log \frac{k}{\delta} }}{  m \sqrt{L m}} + \frac{ n^2 \delta } {m^2} \right),
\end{align}
with probability $1- O(e^{-\frac{m}{2}} + e^{-\frac{Ln}{8}} + e^{-k \log \frac{1}{\delta}}+   {\rm e}^{k \log \frac{1}{\delta}-\frac{n}{2}})$.

Set $\delta = \frac{1}{n^5}$, we have
\begin{align}\label{eq:thm:MSEbound2}
  \frac{1}{n} \|\xvt_o-\xv_o\|_2^2 =  O\left( \frac{ \sqrt{k \log n }}{m} +  \frac{n \sqrt{ k \log n}}{  m \sqrt{L m}}  \right),
\end{align}
with probability $1- O(e^{-\frac{m}{2}} + e^{-\frac{Ln}{8}} + e^{-k \log n}+   {\rm e}^{k \log n-\frac{n}{2}})$.

So far, we have found an upper bound for the error between $\|\tilde{\xv}_o-\xv_o\|_2$. However, our final estimate is $\hat{\xv}_o$. Note that we have
\begin{equation}
\|\hat{\xv}_o - \xv_o\|_2 \leq \|\hat{\xv}_o - \tilde{\xv}_o\|_2 + \|\tilde{\xv}_o- \xv_o\|_2 \leq  \delta + \|\tilde{x}_o -x_o\|_2.
\end{equation}
Hence, we have
\begin{align}\label{eq:thm:MSEbound_last}
  \frac{1}{n} \|\xvh_o-\xv_o\|_2^2 =  O\left( \frac{ \sqrt{k \log n }}{m} +  \frac{n \sqrt{ k \log n}}{  m \sqrt{L m}}  \right),
\end{align}
with probability $1- O(e^{-\frac{m}{2}} + e^{-\frac{Ln}{8}} + e^{-k \log n}+   {\rm e}^{k \log n-\frac{n}{2}})$.

\subsection{Proof of auxiliary lemmas}



\subsubsection{Proof of Lemma \ref{lem:lowerAX2AT} }\label{sec:proof:lem:lowerAX2AT}

Let $\bm{a}_i^T$ denote the $i^{\rm th}$ row of matrix $A$. We have
\begin{equation}
\|A DA^T\|^2_{HS} = \sum_{i=1}^m \sum_{j=1}^m |\bm{a}_i^T D {\bm a}_j|^2 \geq \sum_{i=1}^m \sum_{j \neq i} |{\bm a}_i^T D {\bm a}_j|^2. 
\end{equation}
Note that
\[
\E (\sum_{i=1}^m \sum_{j \neq i} |{\bm a}_i^T D {\bm a}_j|^2) = m(m-1) \sum_{i=1}^n d_i^2.  
\]
Using Theorem \ref{thm:decoupling} we conclude that there exists a constant $C$ such that 
\begin{align}\label{eq:lowerbounddec1}
&{\Prob (| \sum_{i=1}^m \sum_{j \neq i} |{\bm a}_i^T D {\bm a}_j|^2 -m(m-1) \sum_{i=1}^n d_i^2|> t )}\nonumber \\ 
&\leq C \Prob (C | \sum_{i=1}^m \sum_{j \neq i} |{\bm a}_i^T D \bm{\tilde{a}}_j|^2 -m(m-1) \sum_{i=1}^n d_i^2|> t ) \nonumber \\
&=C \Prob (C | \sum_{i=1}^m  {\bm a}_i^T D \sum_{j \neq i} \bm{\tilde{a}}_j  \bm{\tilde{a}}_j^T D {\bm a}_i -m(m-1) \sum_{i=1}^n d_i^2|> t ), 
\end{align}
where $\bm{\tilde{a}}_1, \bm{\tilde{a}}_2, \ldots, \bm{\tilde{a}}_m$ denote independent copies of $\bm{a}_1, \bm{a}_2, \ldots, \bm{a}_m$. Define $\tilde{A}$ as the matrix whose rows are $\bm{\tilde{a}}_1^T, \bm{\tilde{a}}_2^T, \ldots, \bm{\tilde{a}}_m^T$ . Also, let $\tilde{A}_{\backslash i}$ denote the matrix that is constructed by removing the $i^{\rm th}$ row of $\tilde{A}$. Define
\[
F\triangleq\left[\begin{array}{cccc} D \tilde{A}_{\backslash 1}^T \tilde{A}_{\backslash 1} D&0&\ldots&0 \\ 0 & D\tilde{A}_{\backslash 2}^T \tilde{A}_{\backslash 2}D&\ldots&0 \\ 0 &0&\ldots& D\tilde{A}_{\backslash m}^T {\tilde{A}}_{\backslash m}D \end{array} \right].
\]
and 
\[
\bm{v}^T = [\bm{a}_1^T, \bm{a}_2^T, \ldots, \bm{a}_m^T].  
\]
Using Theorem \ref{thm:HW-ineq} we have 
\begin{align}\label{eq:lowerbounddec2}
&{ \Prob (C | \sum_{i=1}^m  {\bm a}_i^T D \sum_{j \neq i} \bm{\tilde{a}}_j  \bm{\tilde{a}}_j^T D {\bm a}_i -m(m-1) \sum_{i=1}^n d_i^2|> t   \ | \ \tilde{A} )} \nonumber \\
&=  \Prob (C |\bm{v}^T F \bm{v} - \mathbb{E}\bm{v}^T F \bm{v}  |> t \ | \ \tilde{A}) \nonumber \\
&\leq 2 \exp \left( -c \min ( \frac{t^2}{C^2 \|F\|_{HS}^2 }, \frac{t}{C \|F\|_2 }) \right)
\end{align}
Hence, in order to obtain a more explicit upper bound, we have to find upper bounds for $\|F\|_2$ and $\|F\|_{HS}^2$. First note that 
\begin{align}\label{eq:uppedspecnormF}
\lambda_{\max}(F) &= \max_i (\lambda_{\max}(D\tilde{A}_{\backslash i}^T \tilde{A}_{\backslash i} D) )\nonumber \\
& \leq \lambda_{\max}(D\tilde{A}^T \tilde{A} D) \leq \| \mathbf{d}\|_{\infty}^2  \lambda_{\max}(\tilde{A}^T \tilde{A}).
\end{align}
Similarly,
\begin{eqnarray}
\|F\|_{HS}^2 &=& \sum_{i=1}^m \|D \tilde{A}_{\backslash i}^T \tilde{A}_{\backslash i} D\|_{HS}^2 \nonumber \\
& \overset{(a)}{\leq}& \sum_{i=1}^m  m \lambda_{\max}^2 (D\tilde{A}_{\backslash i}^T \tilde{A}_{\backslash i} D) \nonumber \\
& \overset{(b)}{\leq}& m^2 \| \mathbf{d}\|_{\infty}^{{4}}  \lambda^2_{\max}(\tilde{A}^T \tilde{A}),
\end{eqnarray}
where Inequality (a) uses the fact that the rank of matrix $D \tilde{A}_{\backslash i}^T \tilde{A}_{\backslash i} D$ is $m-1$, and Inequality (b) uses \eqref{eq:uppedspecnormF}. Finally, using Lemma \ref{lem:singvalues} we have
\begin{equation}
\Prob(\sigma_{\max}(\tilde{A}) > 2\sqrt{n}+\sqrt{m}) \leq 2 {\rm e}^{-\frac{n}{2}},
\end{equation}
and hence
\begin{equation}
\Prob(\lambda_{\max}(\tilde{A}^T A) > (2\sqrt{n}+\sqrt{m})^2) \leq 2 {\rm e}^{-\frac{n}{2}}.
\end{equation}

By combining \eqref{eq:lowerbounddec1} and \eqref{eq:lowerbounddec2} we obtain
\begin{align}\label{eq:lowerboundADA}
&\Prob (| \sum_{i=1}^m \sum_{j \neq i} |{\bm a}_i^T D {\bm a}_j|^2 -m(m-1) \sum_{i=1}^n d_i^2|> t \ | \ \tilde{A})  \nonumber \\
 &\leq 2C \mathbb{E} \left(\exp \left( -c \min \Big( \frac{t^2}{C^2  \|F\|_{HS}^2 }, \frac{t}{C \|F\|_2 } \Big)\right)\right),
\end{align}
where the expected value is with respect to the randomness in $F$ or equivalently $\tilde{A}$. 

Let the event $\mathcal{E}$ denote the event of $\sigma_{\max}(\tilde{A}) \leq 2\sqrt{n}+\sqrt{m}$, and $\mathbb{I}_{\mathcal{E}}$ denote the indicator function of the event $\mathcal{E}$. Then, using \eqref{eq:lowerboundADA} we have
\begin{align}\label{eq:finalADAT}
&{\Prob (| \sum_{i=1}^m \sum_{j \neq i} |{\bm a}_i^T D {\bm a}_j|^2 -m(m-1) \sum_{i=1}^n d_i^2|> t )} \nonumber \\
&= \Prob (\{ | \sum_{i=1}^m \sum_{j \neq i} |{\bm a}_i^T D {\bm a}_j|^2 -m(m-1) \sum_{i=1}^n d_i^2|> t \} \cap \mathcal{E})\nonumber \\
&+ \Prob (\{ | \sum_{i=1}^m \sum_{j \neq i} |{\bm a}_i^T D {\bm a}_j|^2 -m(m-1) \sum_{i=1}^n d_i^2|> t \} \cap \mathcal{E}^c)\nonumber \\
&\leq \mathbb{E} \left( \Prob (| \sum_{i=1}^m \sum_{j \neq i} |{\bm a}_i^T D {\bm a}_j|^2 -m(m-1) \sum_{i=1}^n d_i^2|> t \ | \ \tilde{A}) \mathbb{I}_{\mathcal{E}} \right) + \mathbb{P} (\mathcal{E}^c)  \nonumber \\
&\leq 2C \mathbb{E} \left(\exp \left( -c \min \Big( \frac{t^2}{C^2  \|F\|_{HS}^2 }, \frac{t}{C \|F\|_2 }  \Big)\right) \mathbb{I}_{\mathcal{E}}\right) + \mathbb{P} (\mathcal{E}^c)  \nonumber \\
&\leq 2C \exp \left( -c \min \Big( \frac{t^2}{C^2  \| \mathbf{d}\|_{\infty}^4  q_{m,n} }, \frac{t}{C \| \mathbf{d}\|_{\infty}^2 \tilde{q}_{m,n}} \Big)\right) + 2 {\rm e}^{-\frac{n}{2}}. 
\end{align}


\subsubsection{Proof of Lemma \ref{lemma:concent-z}} \label{sec:proof:lemma:concent-z}
By definition, 
\begin{align}
&\delta f({\Sigma})={f}({\Sigma})-\bar{f}({\Sigma}) \nonumber \\
&= {1\over L \sigma_w^2} \sum_{\ell=1}^L \wv_{\ell}^TX_oA^T\Sigma AX_o\wv_{\ell}-{\rm Tr}(\Sigma AX_o^2A^T).
\end{align}
Define matrix $B\in\mathbb{R}^{Ln\times Ln}$ as  $B=X_oA^T\Sigma AX_o$ and $\tilde{B}$ as
\[
\tilde{B} = \begin{bmatrix}
  B & 0 & \cdots & 0 \\
  0 & B & \cdots & 0 \\
  0 & 0 & \cdots & 0 \\
  0 & 0 & \cdots & B \\
\end{bmatrix}.
\]
Furthermore, define
\[
\tilde{\wv}^\top = [\wv_1, \wv_2, \ldots, \wv_L]. 
\]
Then, by the Hanson-Wright inequality (Theorem \ref{thm:HW-ineq}), we have 
\begin{align}
&\Prob(|{1\over L \sigma_w^2}\tilde{\wv}^T\tilde{B}\tilde{\wv}- {\rm Tr}(\Sigma AX_o^2A^T)|>t)\nonumber \\
&\leq 2\exp\Big(-c\min({L^2 t^2\over 4\|\tilde{B}\|_{\rm HS}^2 },{L t \over 2\|\tilde{B}\|_2 })\Big).
\end{align}
First note that, 
\begin{equation}
\|\tilde{B}\|_2  = \|B\|_2. 
\end{equation}

Furthermore,
\begin{align}
\|\tilde{B}\|_{\rm HS}^2&=L {\rm Tr}(B^2)=L\sum_{i=1}^m\lambda_i^2(B)\nonumber \\
&\leq L m\lambda^2_{\max}(B)=Lm\|B\|_2^2,
\end{align}
where the second equality is due to the fact that ${\rm rank}(B)= m$. On the other hand, $\| B\|_2=\|X_oA^T\Sigma AX_o\|_2\leq x^2_{\max}\|A^T\Sigma A\|_2$. Moreover,
\begin{align}
\|A^T\Sigma A\|_2^2 &=\max_{\uv\in\mathbb{R}^n}{ \uv^T A^T \Sigma A A^T\Sigma A\uv \over \|\uv\|_2^2} \nonumber \\
&\leq \lambda_{\max}(A^TA)\lambda_{\max}(AA^T)\lambda_{\max}^2(\Sigma)
\end{align}
But $\Sigma=(AX^2A^T)^{-1}$ and $X={\rm diag}(\xv)$.  Therefore, $\lambda_{\max}(\Sigma)=(\lambda_{\min}(AX^2A^T))^{-1}\leq (\lambda_{\min}(AA^T)x^2_{\min})^{-1}$ and
\begin{align*}
\|A^T\Sigma A\|_2^2\leq {\lambda_{\max}(AA^T)\lambda_{\max}(A^TA)\over \lambda^2_{\min}(AA^T) x^4_{\min}}.
\end{align*}


\subsubsection{Proof of Lemma \ref{lemma:1:upper}} \label{sec:proof:lemma:1:upper}
To prove $|\lambda_i (\tilde{\Sigma}_o^{-{1\over 2}}\Delta \Sigma \tilde{\Sigma}_o^{-{1\over 2}}) | \in [0, \frac{x_{\max}^2\lambda^2_{\max} (A A^T) \| {\xvh}_o^2 - {\xvt_o}^2\|_\infty}{x_{\min}^4 \lambda^2_{\min} (AA^T)} ]$, first note that
\begin{align}
|\lambda_i| &\leq   \frac{\sigma_{\max} (\Delta \Sigma)}{\sigma_{\min}(\tilde{\Sigma}_o)} \nonumber \\
&= \frac{\sigma_{\max} ((A\hat{X}_o^2 A^T)^{-1} -(A\tilde{X}_o^2 A^T)^{-1} )}{\sigma_{\min}(\tilde{\Sigma}_o)} \nonumber \\
&\overset{(a)}{\leq} \frac{\sigma_{\max} ((A\hat{X}_o^2 A^T)-(A\tilde{X}_o^2 A^T) )}{\sigma_{\min}(\tilde{\Sigma}_o) \sigma_{\min} (A\hat{X}_o^2 A^T) \sigma_{\min} (A\tilde{X}_o^2 A^T) }   \nonumber \\
&= \frac{\sigma_{\max}(A\tilde{X}_o^2 A^T)  \sigma_{\max} ((A\hat{X}_o^2 A^T)-(A\tilde{X}_o^2 A^T) )}{\sigma_{\min} (A\hat{X}_o^2 A^T) \sigma_{\min} (A\tilde{X}_o^2 A^T) }.  \nonumber \\
&\leq \frac{x_{\max}^2 \lambda^2_{\max}(AA^T)  \|{\xvh_o}^2 - {\xvt}_o^2\|_\infty}{x^4_{\min} \lambda_{\min}^2 (AA^T)}.
\end{align}
To obtain inequality (a) we have used Lemma \ref{lem:boundeigenvalues}. 

To prove \eqref{eq:upperconvexity} we start with
\begin{align}
{f(\hat{\Sigma}_o) -f (\tilde{\Sigma}_o) \leq |\log\det (\hat{\Sigma}_o) - \log\det (\tilde{\Sigma}_o) |} +
\frac{1}{L\sigma_w^2}|\sum_{\ell =1}^L \yv_\ell^T ((A \hat{X}_o^2A^T)^{-1} - (A \tilde{X}_o^2A^T)^{-1} ) \yv_{\ell} |.
\end{align}
We have
\begin{align}
 |&\log\det (\hat{\Sigma}_o) - \log\det (\tilde{\Sigma}_o) |\hspace{1cm} \nonumber \\
 &=|\log\det (I+\tilde{\Sigma}_o^{-{1\over 2}}\Delta \Sigma\tilde{\Sigma}_o^{-{1\over 2}}) |\nonumber\\
&\leq\sum_{i=1}^n |\log(1+\lambda_i(\tilde{\Sigma}_o^{-{1\over 2}}\Delta \Sigma \tilde{\Sigma}_o^{-{1\over 2}}))| \nonumber \\
& \overset{(b)}{\leq} 2 \sum_{i=1}^m |\lambda_i(\tilde{\Sigma}_o^{-{1\over 2}}\Delta \Sigma \tilde{\Sigma}_o^{-{1\over 2}})| \nonumber \\
&\leq 2m \frac{x_{\max}^2 \lambda^2_{\max}(AA^T)  \|{\xvh_o}^2 - {\xvt}_o^2\|_\infty}{x^4_{\min} \lambda_{\min}^2 (AA^T)},
\end{align}
where Inequality (b) uses the assumption that $\lambda_i(\tilde{\Sigma}_o^{-{1\over 2}}\Delta \Sigma \tilde{\Sigma}_o^{-{1\over 2}}) \in [-0.5, 0.5]$ and $|\log(1+x)| <= 2|x|$ when $x \in [0.5, +\infty)$.

Furthermore, note that 
\begin{align}
&\frac{1}{L\sigma_w^2}|\sum_{\ell =1}^L \yv_\ell^T ( (A \hat{X}_o^2A^T)^{-1} - (A \tilde{X}_o^2A^T)^{-1} ) \yv_{\ell} |  \nonumber \\
&\leq  \sigma_{\max } ( (A \hat{X}_o^2A^T)^{-1} - (A \tilde{X}_o^2A^T)^{-1} ) \frac{1}{L \sigma_w^2} \sum_{\ell=1}^L \yv_{\ell}^T \yv_{\ell} \nonumber \\
&\leq  \frac{ \lambda_{\max}(AA^T)  \|{\xvh_o}^2 - {\xvt}_o^2\|_\infty}{x^4_{\min} \lambda_{\min}^2 (AA^T)} \frac{1}{L \sigma_w^2} \sum_{\ell=1}^L \yv_{\ell}^T \yv_{\ell}  \nonumber \\
&\leq  \frac{ \lambda_{\max}(AA^T)  \|{\xvh_o}^2 - {\xvt}_o^2\|_\infty }{L \sigma_w^2 x^4_{\min} \lambda_{\min}^2 (AA^T)} \sum_{\ell=1}^L \yv_{\ell}^T \yv_{\ell}  \nonumber \\
&\leq \frac{x_{\max}^2 \lambda^2_{\max}(AA^T)  \|{\xvh_o}^2 - {\xvt}_o^2\|_\infty}{L \sigma_w^2 x^4_{\min} \lambda_{\min}^2 (AA^T)} \sum_{\ell=1}^L \wv_{\ell}^T \wv_{\ell}. \nonumber
\end{align}



%



\section*{Acknowledgements}
X.C., S.J., Z.H. and A.M. were supported in part by ONR award no. N00014-23-1-2371. S.J. was suppoerted in part by NSF CCF-2237538. C.A.M. was supported in part by SAAB, Inc., AFOSR Young Investigator Program Award no. FA9550-22-1-0208, and ONR award no. N00014-23-1-2752.

\bibliography{example_paper}
\bibliographystyle{icml2024}

\appendix
\onecolumn


\section{Likelihood function and its gradient} \label{app:MLE}
\subsection{Caculation of the likelihood function}
The aim of this section is to derive the loglikelihood for our model, 
\[
\yv_{\ell} = A X\wv_{\ell} + \zv_{\ell},  \ \ \ \ \ \ \ \  \ \ \ {\rm for}   \  \ \  \ell=1, \ldots, L, 
\]
where $\wv_1, \wv_2, \ldots, \wv_{L}$, and  $\zv_1, \zv_2, \ldots, \zv_L$ are independent and identically distributed $\mathcal{CN} (0, \sigma_w^2 I_n)$ and $\mathcal{CN} (0, \sigma_z^2 I_p)$ respectively. Since the noises are indepednet across the looks, we can write the loglikelihood for one of the looks, and then add the loglikelihoods to obtain the likelihood for all the looks. For notational simplicity, we write the measurements of one of the looks as: 
\[
\yv=AX\wv+\zv
\]
Note that $\yv$ is a linear combination of two Gaussian random vectors and is hence Gaussian. Hence, by writing the real and imaginary parts of $\yv$ seperately we will have
\[
\Re(\yv) + \Im(\yv) = (\Re(AX) + i \Im(AX))(\wv^{(1)} + i\wv^{(2)}) + (\zv^{(1)} + i\zv^{(2)}),
\]
and
\begin{align*}
\tilde{\yv} \triangleq \begin{bmatrix} \Re (\yv) \\ \Im (\yv) \end{bmatrix} &\sim \mathcal{N} \left( \begin{bmatrix} 0 \\ 0 \end{bmatrix},  B  \right),
\end{align*}
where
\begin{align*}
    B = \begin{bmatrix} \sigma_z^2 I_n + {\sigma_w^2}\Re (A X^2 \bar{A}^T) & -\sigma_w^2 \Im (A X^2 \bar{A}^T) \\ \sigma_w^2 \Im (A X^2 \bar{A}^T) & \sigma_z^2 I_n + \sigma_w^2 \Re (A X^2 \bar{A}^T) \end{bmatrix}.
\end{align*}

Hence, the log-likelihood of our data $\yv$ as a function of $\xv$ is
\begin{align}\label{eq:ll-SL1}
    l(\xv) =& -\frac{1}{2} \log \det \left( B \right) - \frac{1}{2} 
    \begin{bmatrix}
        \Re (\yv^T) & \Im (\yv^T)
    \end{bmatrix} \left(  B \right)^{-1} \begin{bmatrix} \Re (\yv) \\ \Im (\yv) \end{bmatrix} + C. 
\end{align}

Note that equation~\eqref{eq:ll-SL1} is for a single look. Hence the loglikelihood of $\yv_1, \yv_2, \ldots, \yv_L$ as a function of $\xv$ is:
\begin{align}
    l(\xv) = -\frac{L}{2}  \log \det(B) -\frac{1}{2} \sum_{l=1}^L
    \tilde{\yv}_l^T B^{-1} \tilde{\yv}_l + C, 
\end{align}
Since we would like to maximize $l(\xv)$ as a function of $\xv$, for notational simplicty we define the cost function $f_L(\xv): \mathbb R^n \to \mathbb R$:
\begin{align}
    f_L(\xv) = \log \det (B) + \frac{1}{L } \sum_{l=1}^L \tilde{\yv}_l^T B^{-1} \tilde{\yv}_l,
\end{align}
that we will minimize to obtain the maximum likelihood estimate. 

\subsection{Calculation of the gradient of the likelihood function}

As discussed in the main text, to execute the projected gradient descent, it is necessary to compute the gradient of the negative log-likelihood function $\partial f_L$.
The derivatives of $f_L$ with respect to each element $\xv_j$ of $\xv$ is given by:
\begin{align}
    \frac{\partial f_L}{\partial \xv_j} =& {2 \xv_j \sigma_w^2} \left( \begin{bmatrix} \Re (\av_{\cdot, j}^T) & \Im (\av_{\cdot, j}^T) \end{bmatrix}
    B^{-1} 
    \begin{bmatrix} \Re (\av_{\cdot, j}) \\ \Im (\av_{\cdot, j}) \end{bmatrix} \right. + \left.
    \begin{bmatrix} -\Im (\av_{\cdot, j}^T) & \Re (\av_{\cdot, j}^T) \end{bmatrix}
    B^{-1} 
    \begin{bmatrix} -\Im (\av_{\cdot, j}) \\ \Re (\av_{\cdot, j}) \end{bmatrix} \right) \nonumber \\
    &- \frac{2 \xv_j \sigma_w^2}{L } \sum_{l=1}^L \left[ \left(
    \begin{bmatrix} \Re (\av_{\cdot,j}^T) & \Im (\av_{\cdot,j}^T) \end{bmatrix} B^{-1}
    \begin{bmatrix} \Re (\yv_l) \\ \Im (\yv_l) \end{bmatrix} \right)^2 \right. + \left.
    \left(
    \begin{bmatrix} -\Im (\av_{\cdot,j}^T) & \Re (\av_{\cdot, j}^T) \end{bmatrix}
    B^{-1} 
    \begin{bmatrix} \Re (\yv_l) \\ \Im (\yv_l) \end{bmatrix}
    \right)^2 \right] \nonumber \\
    =& 2 \xv_j \sigma_w^2 \left( \tilde{\av}_{\cdot, j}^{+T} B^{-1} \tilde{\av}_{\cdot, j}^{+} + \tilde{\av}_{\cdot, j}^{-T} B^{-1} \tilde{\av}_{\cdot, j}^{-} \right) - \frac{2 \xv_j \sigma_w^2}{L } \sum_{l=1}^L \left[ \left( \tilde{\av}_{\cdot, j}^{+T} B^{-1} \tilde{\yv}_l \right)^2 + \left( \tilde{\av}_{\cdot, j}^{-T} B^{-1} \tilde{\yv}_l \right)^2 \right], \label{eq:derivative}
\end{align}
where $\av_{\cdot,j}$ denotes the $j$-th column of matrix $A$, $\tilde{\av}_{\cdot, j}^+ = \begin{bmatrix}
    \Re (\av_{\cdot,j}) \\ \Im (\av_{\cdot, j})
\end{bmatrix}$ and $\tilde{\av}_{\cdot, j}^- = \begin{bmatrix}
    -\Im (\av_{\cdot, j}) \\ \Re (\av_{\cdot, j})
\end{bmatrix}$.

\subsection{More simplification of the gradient} \label{ssec:matrixinversioncalc}
The special form of the matrix $B$ enables us to do the calculations more efficiently. To see this point, define:
\begin{align*}
    U + iV \triangleq \left( \sigma_z^2 I_n + {\sigma_w^2} A X^2 \bar{A}^T \right)^{-1},
\end{align*}
where $U, V \in R^{m \times m}$. These two matrices should satisfy:
\begin{align*}
    \left( \sigma_z^2 I_n + {\sigma_w^2} \Re (A X^2 \bar{A}^T) \right) U -{\sigma_w^2} \Im (A X^2 \bar{A}^T) V = I_n \\
    {\sigma_w^2} \Im (A X^2 \bar{A}^T) U + \left( \sigma_z^2 I_n + {\sigma_w^2}\Re (A X^2 \bar{A}^T) \right) V = 0.
\end{align*}
These two equations imply that:
\begin{align}\label{eq:Bsymmetric}
    B^{-1} = \begin{bmatrix} U & -V \\ V & U \end{bmatrix}.
\end{align}

This simple observation, enables us to reduce the number of multiplications required for the Newton-Schulz algorithm. More specifically, instead of requiring to multiply two $2m \times 2m$ matrices, we can do $4$ multiplications of $m \times m$ matrices. This helps us have a factor of $2$ reduction in the cost of matrix-matrix multiplication in our Newton-Schulz algorithm. 

In cases the exact inverse calculation is required, again this property enables us to reduce the inversion of matrix $B \in \mathbb{R}^{2m \times 2m}$ to the inversion of two $m \times m$ matrices (albeit a few $m\times m$ matrix multiplications are required as well). 

Plugging \eqref{eq:Bsymmetric} into~\eqref{eq:derivative}, we obtain a simplified form for the gradient of $f_L(\xv)$:
\begin{align}
    \frac{\partial f_L}{\partial \xv_j} =& {4 \xv_j \sigma_w^2}  \Re\left(\bar{\av}_{\cdot, j}^T (U + iV) \av_{\cdot, j}\right) - \frac{2 \xv_j \sigma_w^2}{L} \sum_{l=1}^L \big[ \Re^2 \left( \bar{\av}_{\cdot, j}^T (U + iV) \yv_l \right) + \Im^2 \left( \bar{\av}_{\cdot, j}^T (U + iV) \yv_l \right) \big] \nonumber \\
    =& {4 \xv_j \sigma_w^2} \Re\left(\bar{\av}_{\cdot, j}^T (U + iV) \av_{\cdot, j} \right) - \frac{2 \xv_j \sigma_w^2}{L } \sum_{l=1}^L \left\lVert \bar{\av}_{\cdot, j}^T (U + iV) \yv_l \right\rVert_2^2.
\end{align}

\section{Details of our Bagged-DIP-based PGD}
\label{app:add_experiment}
 Algorithm~\ref{alg:PGD} shows a detailed version of the final algorithm we execute for recovering images from their multilook, speckle-corrupted, undersampled measurements. In one of the steps of the algorithm we ensure that all the pixel values of our estimate are within the range $[0,1]$. This is because we have assumed that the image pixels take values within $[0,1]$. 
 
\begin{algorithm}[ht]
    \small
  \caption{Iterative PGD algorithm}
  \label{alg:PGD}
  \begin{algorithmic}
    \STATE {\bfseries Input:} $\{\mathbf{y}_l\}^L_{l=1}, A, \mathbf{x}_0 = \frac{1}{L} \sum^L_{l=1} |A^T \mathbf{y}_l|, g_{\theta}(\cdot)$.
    \STATE {\bfseries Output:} Reconstructed $\hat{\mathbf{x}}$.
    \FOR{$t=1,~\ldots, ~T$}
        \STATE \textbf{[Gradient Descent Step]}
        \IF {$t=1$ \textbf{or} $\|\xv^{t}-\xv^{t-1}\|_{\infty} > \delta_{\xv}$}
            \STATE Calculate exact $B_{t} = (AX^2_{t}A^T)^{-1}$.
        \ELSE
            \STATE Approx $\Tilde{B}_{t} = B_{t-1} + B_{t-1}(I_m - AX^2_{t}A^T B_{t-1})$.
        \ENDIF
        \STATE Gradient calculation at coordinate $j$ as $\nabla f(\xv_{t-1,j})$ using $B_{t}$ or $\Tilde{B}_{t}$, and update $\xv^G_{t,j}$: $\xv^{G}_{t,j} \leftarrow \xv_{t-1,j} - \mu_t \nabla f(\xv_{t-1,j})$. 
        \STATE Save matrix inverse $B_t$ or $\Tilde{B}_{t}$.
        \STATE Truncate $\xv^G_{t}$ into range $(0,1)$, $\xv^G_{t} = \text{clip}(\xv^G_{t}, 0, 1)$.
        \STATE \textbf{[Bagged-DIPs Projection Step]}
        \STATE Generate random image given randomly generated noise $\uv \sim \mathcal{N}(0,1)$ as $g_{\theta}(\uv)$.
        \STATE Update $\theta_t$ by optimizing over $\| g_{\theta}(\uv) - \xv^G_t \|^2_2$: $ \theta_t \leftarrow \operatorname*{argmin}_{\theta} \| g_{\theta}(\uv) - \mathbf{x}^G_t \|^2_2$ till converges.
        \STATE Generate $\mathbf{x}^P_t$ using trained $g_{\hat{\theta}_{t}}(\cdot)$ as $\xv^P_t \leftarrow g_{\theta_{t}}(\uv)$.
        \STATE Obtain $\xv_t = \xv^{P}_t $.
    \ENDFOR
    \STATE Reconstruct image as $\hat{\mathbf{x}} = \mathbf{x}_T$.
  \end{algorithmic}
\end{algorithm}

The only remaining parts of the algorithm we need to clarify are, (1) our hyperparameter choices, and (2) the implementation details of the Bagged-DIP module. As described in the main text, in each (outer) iteration of PGD, we learn three DIPs and then take the average of their outputs. Let us now consider one of these DIPs that is applied to one of the $h_k \times w_k$ patches. 

Inspired by the deep decoder paper \cite{heckel2018deep}, we construct our neural network, using four blocks: we call the first three blocks DIP-blocks and the last one output block. 
The strctures of the blocks are shown in Figure~\ref{fig:DIP_block_structure}. As is clear from the figrue, each DIP block is composed of the following components:
\begin{itemize}
\item Up sample: This unit increases the hight and width of the datacube that receives by a factor of 2. To interpolate the missing elements, it uses the simple bilinear interpolation. Hence, if the size of the image is $128 \times 128$, then the height and width of the input to DIP-block3 will be $64 \times 64$, the input of DIP-Block2 will be $32 \times 32$, and so on.
\item ReLU: this module is quite standard and does not require further explanation. 

\item Convolution: For all our simulations we have either used $1\times 1$ or $3 \times 3$ convolutions. Additionally, we provide details on the number of channels for the data cubes entering each block in our simulations. The channel numbers are [128, 128, 128, 128] for the four blocks.

\end{itemize}

The output block is simpler than the other three blocks. It only have a 2D convolution that uses the same size as the convolutions of the other DIP blocks. The nonlinearity used here is sigmoid, since we assume that the pixel values are between $[0,1]$.

Finally, we should mention that each element of the input noise $\uv$ of DIP (as described before DIP function is $g_{\thetav} (\uv)$) is generated independently from Normal distribution $\mathcal{N}(0,1)$. 

\begin{figure*}[ht]
\centering
\includegraphics[width=0.99\textwidth]{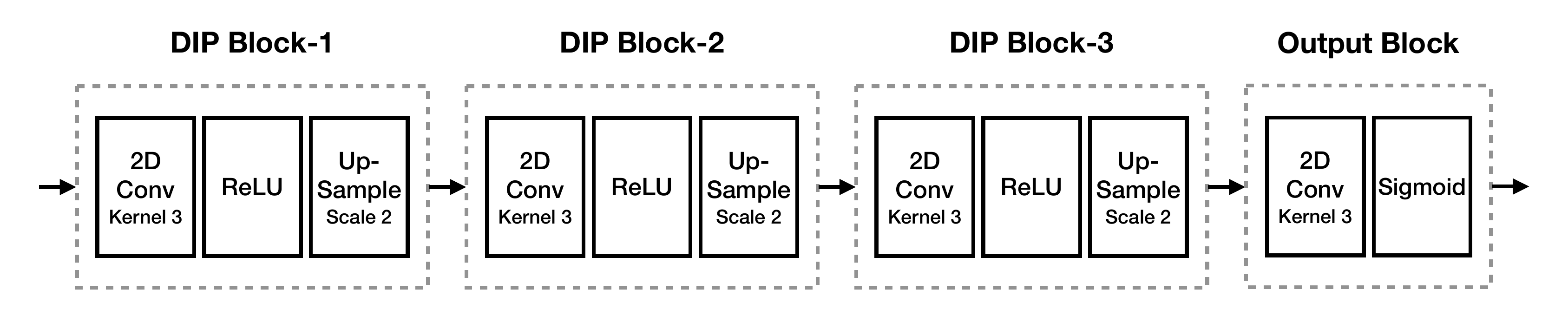}
\caption{The structure of DIP and Output Blocks.}
\label{fig:DIP_block_structure}
\end{figure*}

The other hyperparameters that are used in the DIP-based PGD algorithm are set in the following way: The learning rate of the loglikelihood gradient descent step (in PGD) is set to $\mu=0.01$. For training the Bagged-DIPs, we use Adam~\cite{kingma2014adam} with the learning rate set to $0.001$ and weight decay  set to $0$. The number of iterations used for training Bagged-DIPs for different estimates on images are mentioned in Table~\ref{tab:num_iteration_DIP}. We run the outer loop (gradient descent of likelihood) for $100,200,300$ iterations when $m/n=0.5,0.25,0.125$ respectively. For ``Cameraman`` only, when $m/n=0.125$, since the convergence rate is slow, we run $800$ outer iterations.

\begin{table*}[ht]
    \centering
    \scriptsize
\begin{tabular}{lccccccccc}\hline
     &\textbf{Patch size of estimates} & \textbf{Barbara} & \textbf{Peppers} & \textbf{House} & \textbf{Foreman} & \textbf{Boats} & \textbf{Parrots} & \textbf{Cameraman} & \textbf{Monarch}\\\hline

    & 128 &400  &400  &400  &400  &400  &800  &4k  &800  \\
    & 64 &300  &300  &300  &300  &300  &600  &2k  &600  \\
    & 32 &200  &200  &200  &200  &200  &400  &1k  &400  \\\hline
\end{tabular}
    \caption{Number of iterations used in training Bagged-DIPs for different estimates.}
    \label{tab:num_iteration_DIP}
\vspace{-0.1in}
\end{table*}

The Newton-Schulz algorithm, utilized for approximating the inverse of matrix \( B_t \), has a quadratic convergence when the maximum singular value \( \sigma_{\max}(I - M^0 B_{t}) < 1 \). Hence, ideally, if this condition does not hold, we do not want to use the Newton-Schulz algorithm, and may prefer the exact inversion. Unfortunately, checking the condition \( \sigma_{\max}(I - M^0 B_{t}) < 1 \) is also computationally demanding. However, the special form of $B_t$ enables us to check have easier heuristic evaluation of this condition. 

For our problems, we establish an empirical sufficient condition for convergence: \( \|\xv_t - \xv_{t-1}\|_{\infty} < \delta_{\xv} \), where \( \delta_{\xv} \) is a predetermined constant. To determine the most robust value for \( \delta_{\xv} \), we conducted simple experiments. We set \( n = 128 \times 128 \) and \( m / n = 0.5 \). The sensing matrix \( A \) is generated as described in the main part of the paper (see Section \ref{sec:simulation_baggedDIP}). Each element of \( {\xv}_o \) is independently drawn from a uniform distribution \( U[0.001, 1] \). Furthermore, each element of \( \Delta {\xv}_o \) is independently sampled from a two-point distribution. In this distribution, the probability of the variable \( X \) being \( \delta_{\xv} \) is equal to the probability of \( X \) being \( -\delta_{\xv} \), both with a probability of 0.5, ensuring \( \|\Delta \xv_o\|_{\infty} = \delta_{\xv} \). We define \( B \) as \( A (X + \Delta X_o)^2 \bar{A}^T \), and \( M^0 \) as \( (A X^2 \bar{A}^T)^{-1} \). We then assess the convergence of the Newton-Schulz algorithm for calculating \( B^{-1} \). For various values of \( \delta_{\xv} \), we ran the simulation 100 times each, recording the convergence success rate. As indicated in Table \ref{tab:threshold}, the algorithm demonstrates instability when \( \delta_{\xv} \geq 0.13 \). Consequently, we set \( \delta_{\xv} \) to 0.12 in all our simulations to ensure the reliable convergence of the Newton-Schulz algorithm.

\begin{figure}[t]
    \centering
    \includegraphics[width=0.5\textwidth]{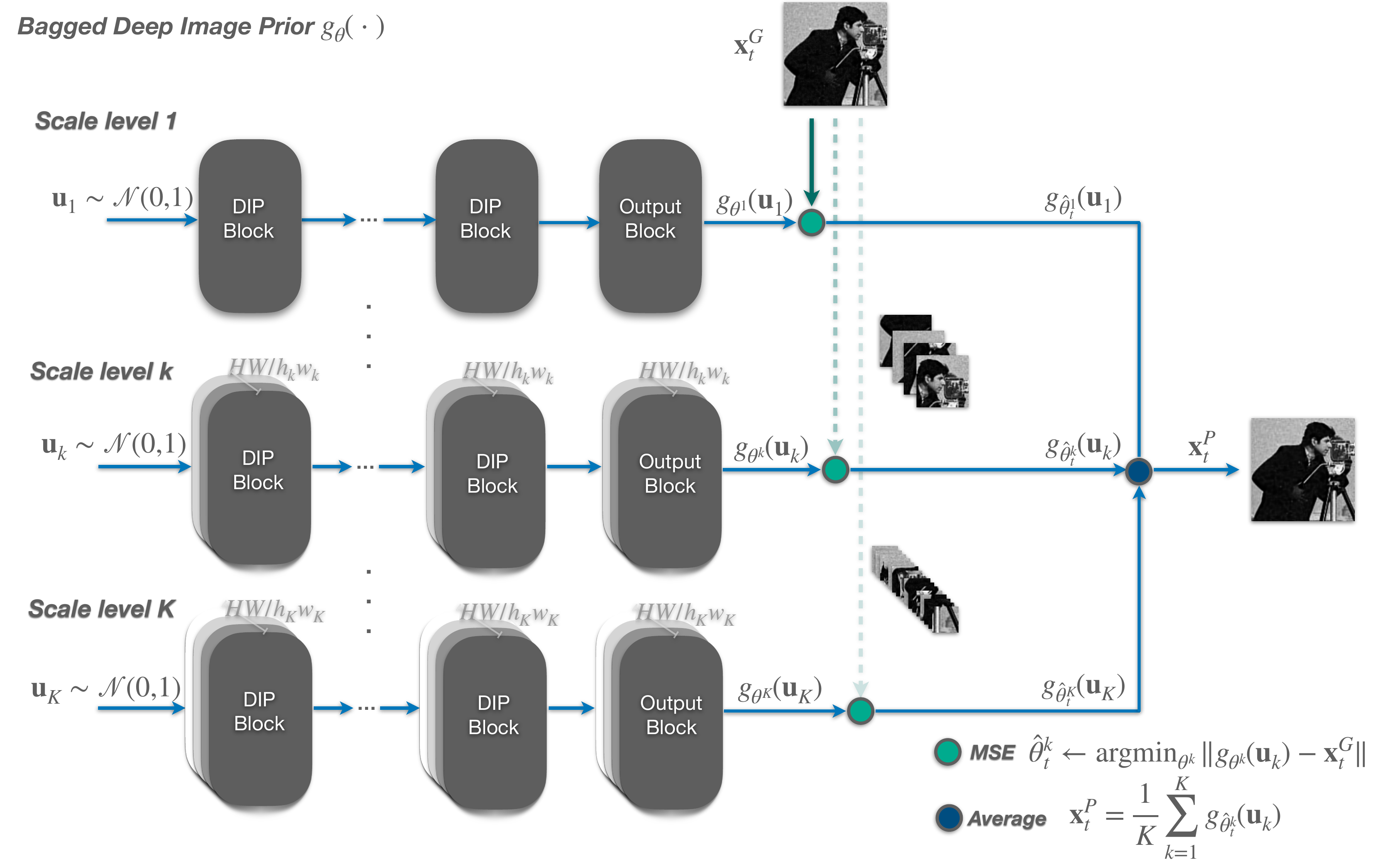}
    \includegraphics[width=0.43\textwidth]{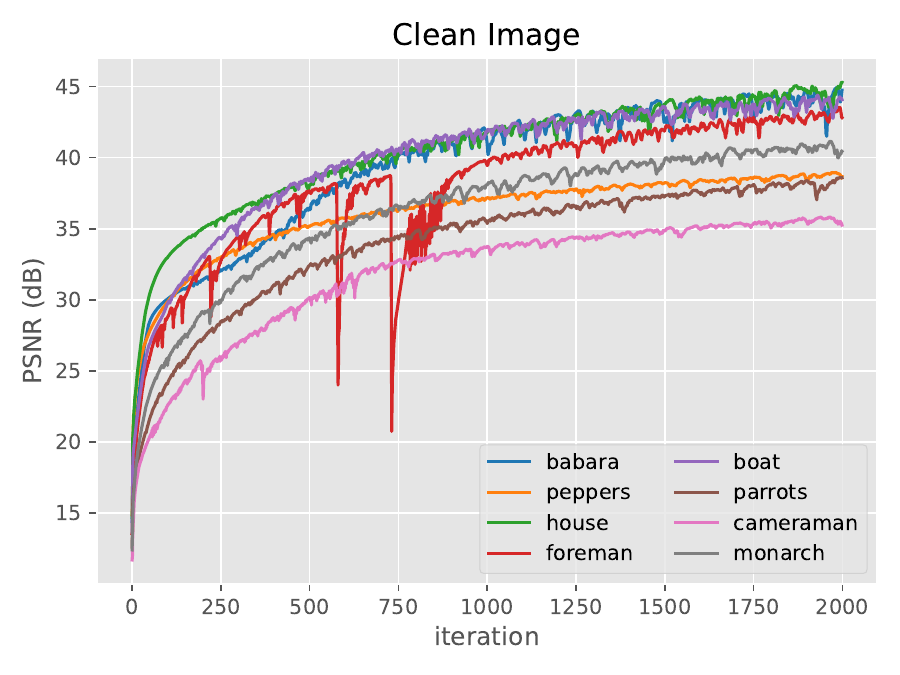}
    \caption{(Left) The structure of Bagged-DIPs with $K$ estimates. (Right) Performance of fitting Bagged-DIP to clean images.}
    \label{fig:model_structure}
\end{figure}

\begin{table}[ht]
    \centering
    \scriptsize
\begin{tabular}{ccc}\hline
    & $\delta_{\xv}$ & \textbf{convergence success rate} \\\hline
    & 0.1 & 100\% \\
    & 0.11 & 100\% \\
    & 0.12 & 100\% \\
    & 0.13 & 38\% \\
    & 0.14 & 0\% \\
    & 0.15 & 0\% \\\hline
\end{tabular}
\caption{Convergence success rate for different thresholds.}
\label{tab:threshold}
\end{table}

\section{Additional experiments.}
\subsection{Comparison with classical despeckling.}
\label{app:add_exp_DnCNN}

We use DnCNN~\cite{zhang2017learning,zhang2017beyond} as the neural networks for despeckling task. The DnCNN structure we use consists of one input block, eight DnCNN block and one output block. The details are shown in Figure~\ref{fig:DnCNN_block_structure}. The number of channels for each convolutional layer is 64.

\begin{figure*}[ht]
\centering
\includegraphics[width=0.99\textwidth]{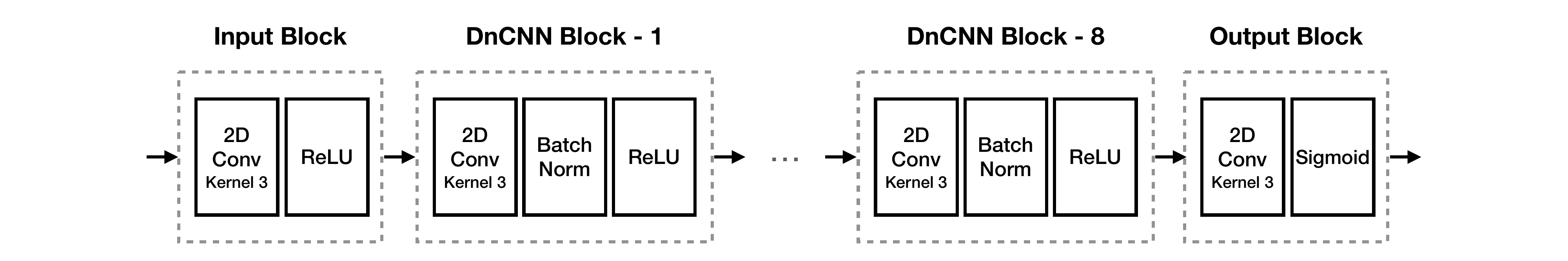}
\caption{The structure of DnCNN, Input and Output Blocks.}
\label{fig:DnCNN_block_structure}
\end{figure*}
Since there is no sensing matrix $A$ in this simulation, we consider real-valued speckle noise $\wv$ (similar to what we considered in our theoretical work) to make the number of measurements the same (with the same number of looks), and make the comparisons simpler. The training set we use for DnCNN is BSD400~\cite{martin2001database}, we divide the images in training set to be $128 \times 128$, with stride step 32. The learning rate is 1e-4, batch size is 64, it takes 20 epochs for the training to converge. Table~\ref{tab:DnCNN_UB} compares the results of this simulation, with the results of the simulation we presented in the main text for fifty percent downsampled measurement matrix. The results of DnCNN-UB are often between 1-3dB better than the results of our Bagged-DIPs-based PGD. With the exception of the cameraman, where our Bagged-DIPs-based PGD seems to be better. It should be noted that the gain obtained by DnCNN here should not be only associated to the fact that the matrix $A$ is undersampled. There is one major differences that may be contributing to the improvements that we see in Table~\ref{tab:DnCNN_UB}, and that is we are also using a training set, while our DIP-based method does not use any training data.

\begin{table*}[ht]
    \centering
    \scriptsize
\begin{tabular}{lccccccccccc}\hline
    \textbf{m/n} & \textbf{$\#$looks} & \textbf{Barbara} & \textbf{Peppers} & \textbf{House} & \textbf{Foreman} & \textbf{Boats} & \textbf{Parrots} & \textbf{Cameraman} & \textbf{Monarch} & \textbf{Average}\\\hline
\multirow{3}{*}{50\%}  
    & 25 & 27.30/0.759 & 27.02/0.724 & 28.56/0.697 & 27.56/0.735 & 26.21/0.669 & 25.94/0.728 & 27.95/0.762 & 27.17/0.845 & 27.21/0.740\\
    & 50 & 28.67/0.816 & 28.52/0.804 & 30.30/0.762 & 28.88/0.827 & 27.58/0.739 & 27.23/0.799 & 30.21/0.843 & 28.86/0.898 & 28.78/0.818\\
    & 100 & 29.40/0.843 & 29.21/0.849 & 31.61/0.815 & 29.74/0.871 & 28.45/0.785 & 28.20/0.848 & 31.58/0.902 & 30.05/0.932 & 29.78/0.856\\\hline\hline
\multirow{3}{*}{DnCNN-UB}
    & 25 & 28.77/0.832 & 29.42/0.830 & 29.90/0.799 & 31.26/0.820 & 28.51/0.788 & 28.19/0.858 & 28.77/0.911 & 29.40/0.925 & 29.28/0.845\\
    & 50 & 30.30/0.873 & 30.93/0.868 & 30.92/0.832 & 31.54/0.859 & 29.87/0.839 & 29.07/0.888 & 29.84/0.925 & 30.63/0.940 & 30.40/0.878\\
    & 100 & 32.18/0.903 & 32.38/0.899 & 32.45/0.866 & 34.28/0.900 & 31.59/0.865 & 29.77/0.902 & 29.83/0.896 & 31.27/0.955 & 31.72/0.898\\\hline
\end{tabular}
    \caption{PSNR(dB)/SSIM $\uparrow$ of $m/n=50\%$, $L=25/50/100$. DnCNN despeckling tasks are used for showing the performance gap between our method with $50\%$ downsampled complex valued measurements and the corresponding empirical uppper bound.}
    \label{tab:DnCNN_UB}
\vspace{-0.1in}
\end{table*}

\subsection{Comparison with \cite{chen2023multilook}}
\label{app:add_exp_baseline_DIP}

As discussed before, a recent paper has also considered the problem of recovering an image from undersampled/ill-conditioned measurement kernels in the presence of the speckle noise \cite{chen2023multilook} and they also considered a DIP-based approach. The goal of this section is to provide some comparison in the performance of our DIP-based method and the one presented in \cite{chen2023multilook}. 

While \cite{chen2023multilook} considered real-valued speckle noises and measurements, we adapt their approch to the complex-valued settings under which we have run our experiments. The results of our comparisons are presented in Table~\ref{tab:baselines_comparison}. As is clear in this table \cite{chen2023multilook}  considered two different algorithms DIP-simple and DIP-$M^3$. DIP-simple is the DIP-based PGD with filter size = 1 and the number of channels were chosen as $[100,50,25,10]$. In DIP-$M^3$, the same DIP was chosen. But the authors also used $\lambda$ residual connection to balance the contribution from gradient descent and projection outputs as follows:
\begin{align*}
    \xv_t = \lambda \xv^P_t + (1-\lambda)\xv^G_t,
\end{align*}
where $\xv^P_t$ and $\xv^G_t$ are gradient descent and projection results respectively. Similar to the setting of that paper we consider, where smaller $\lambda$ is used when $L$ increase, so we set the hyperparameter $\lambda=0.3,0.2,0.1$ for $L=25,50,100$. 

We should note that choosing optimal $\lambda$ is tricky for different $m/n$ and $L$. Setting a small $\lambda$ means that a large portion of projection has been bypassed, which indicates the limit on learning capacity of simple DIP. To verify the statement that a simple DIP has very limited learning capacity on some images, we also provide the baseline, DIP-simple, which distingushes from DIP-$M^3$ by setting $\lambda=1.0$. This means we use the projection results fully from the DIP-simple, and it fails on several tasks as we can see from Table~\ref{tab:baselines_comparison}.

It's important to highlight that Bagged-DIPs do not incorporate residual connections to bypass the projection, essentially representing the case where $\lambda=1.0$. Specifically, we solely rely on the projection from Bagged-DIPs. The outcomes presented in Table~\ref{tab:baselines_comparison} demonstrate the robust projection capabilities of Bagged-DIPs, leading to superior performance compared to DIP-simple and DIP-$M^3$.

\begin{table}[ht]
    \centering
    \scriptsize
\begin{tabular}{lccccc}\hline
    \textbf{m/n} & \textbf{$\#$looks} & \textbf{DIP-simple} & \textbf{DIP-$M^3$} & \textbf{Bagged-DIPs} \\\hline
\multirow{3}{*}{12.5\%}
    & 25 & 18.03/0.336 & 17.81/0.316 & \textbf{19.24/0.406}\\
    & 50 & 19.25/0.408 & 18.96/0.382 & \textbf{20.83/0.538}\\
    & 100 & 20.15/0.497 & 19.94/0.464 & \textbf{21.78/0.612}\\\hline\hline
\multirow{3}{*}{25\%}
    & 25 & 22.00/0.493 & 21.69/0.474 & \textbf{22.86/0.549}\\
    & 50 & 23.42/0.572 & 23.39/0.551 & \textbf{24.95/0.672}\\
    & 100 & 24.79/0.656 & 25.08/0.629 & \textbf{26.24/0.745}\\\hline\hline
\multirow{3}{*}{50\%}
    & 25 & 25.62/0.683 & 26.01/0.668 & \textbf{27.21/0.740}\\
    & 50 & 26.81/0.749 & 27.81/0.733 & \textbf{28.78/0.818}\\
    & 100 & 27.53/0.799 & 29.52/0.779 & \textbf{29.78/0.856}\\\hline
\end{tabular}
    \caption{Average PSNR(dB)/SSIM $\uparrow$ comparison of baseline methods, $m/n=12.5\%/25\%/50\%$, $L=25/50/100$.}
    \label{tab:baselines_comparison}
\vspace{-0.1in}
\end{table}

\subsection{Bagging performance.}
\label{app:add_exp_bagging}
We show the comparison of Bagged-DIPs with three sophisticated DIP estimates in Figure~\ref{fig:baselines_S50_L50}. In most of the test images, the bagging of three estimates yields a performance improvement over all individual estimates. As described in the main part of the paper, this is expected when all the estimates are low-bias and are weakly-dependent. 

However, there are exceptions, such as the case of ``Foreman." In such instances, one of the estimates appears to surpass our average estimate. This occurs when certain individual estimates are affected by large biases. While these biases are mitigated to some extent in our average estimate, a residual portion persists, affecting the overall performance of the average estimate. There are a few directions one can explore to resolve this issue and we leave them for future research. For instance, we can create more bagging samples, and then use more complicated networks without worrying about the overfitting. That will alleviate the issue of high bias that exists in a few images.


\begin{figure}[t]
\centering
\includegraphics[width=0.23\textwidth]{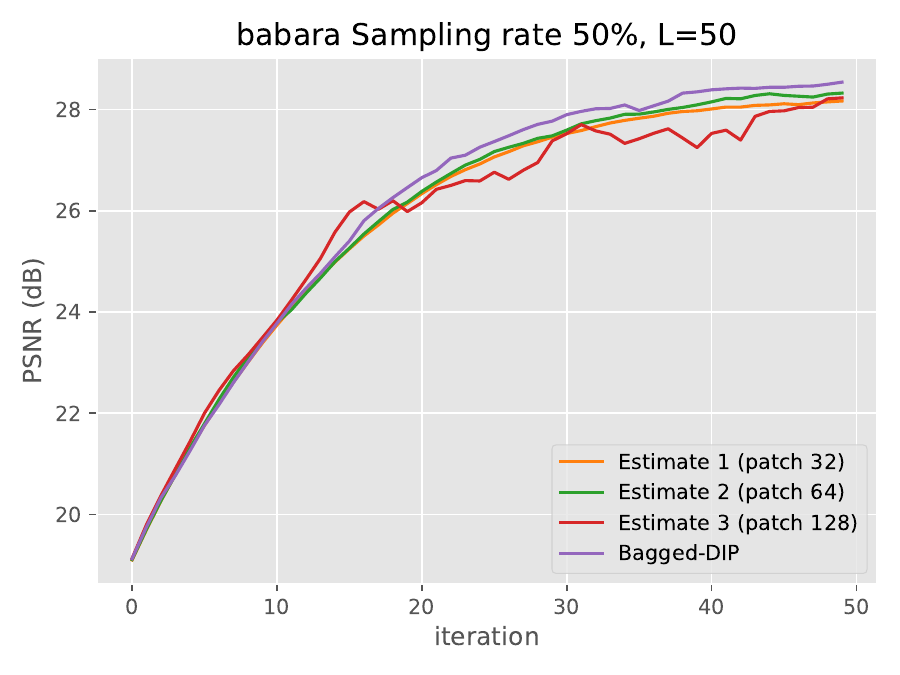}
\includegraphics[width=0.23\textwidth]{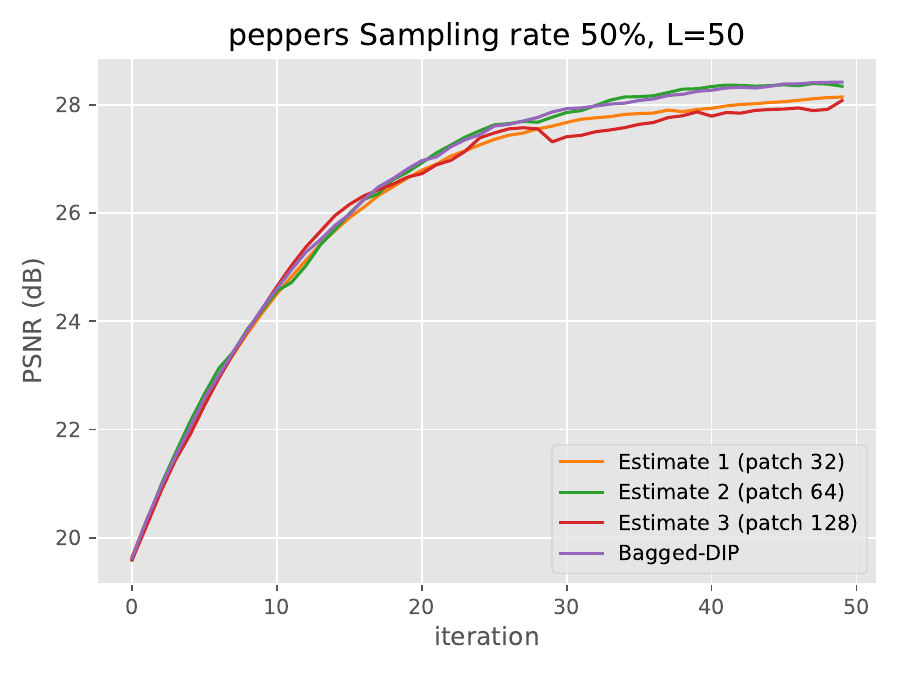}
\includegraphics[width=0.23\textwidth]{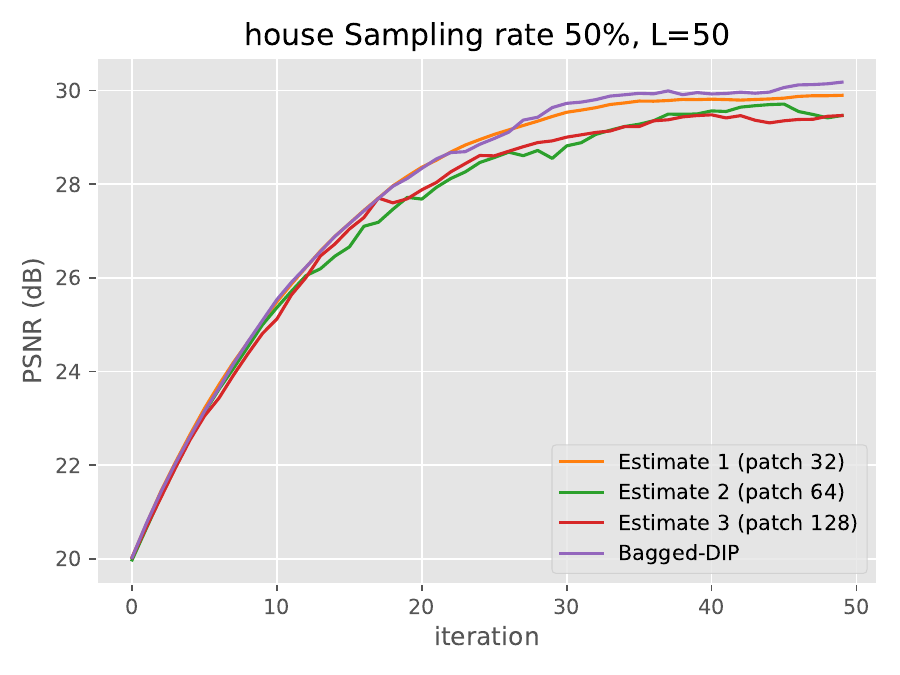}
\includegraphics[width=0.23\textwidth]{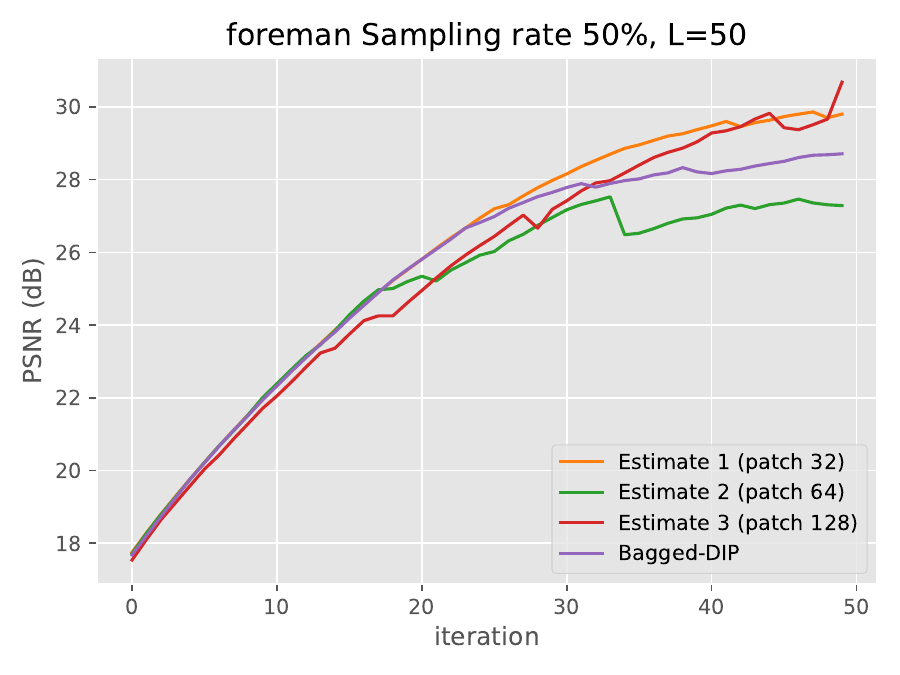}
\includegraphics[width=0.23\textwidth]{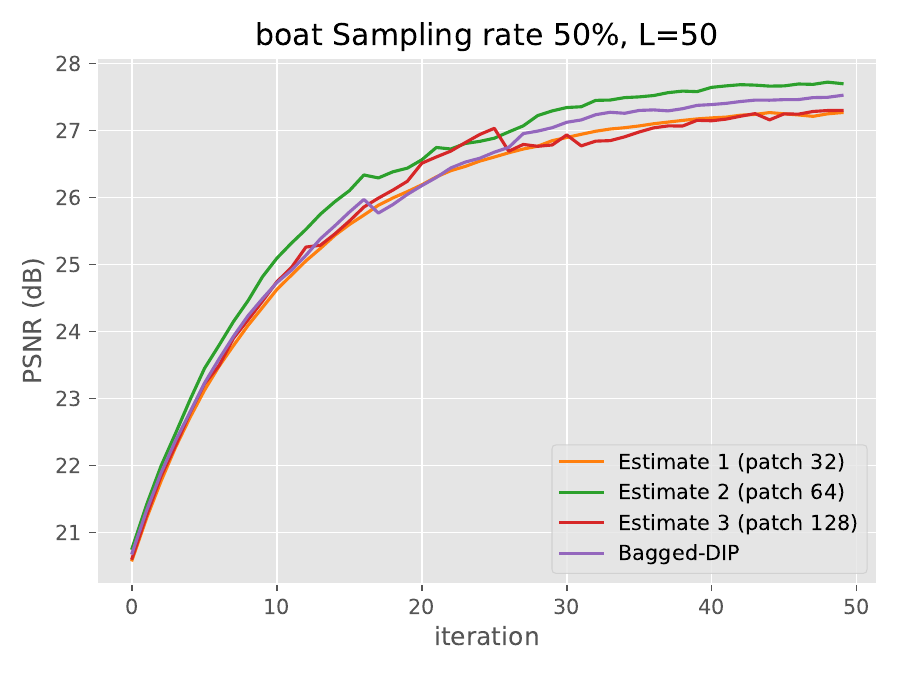}
\includegraphics[width=0.23\textwidth]{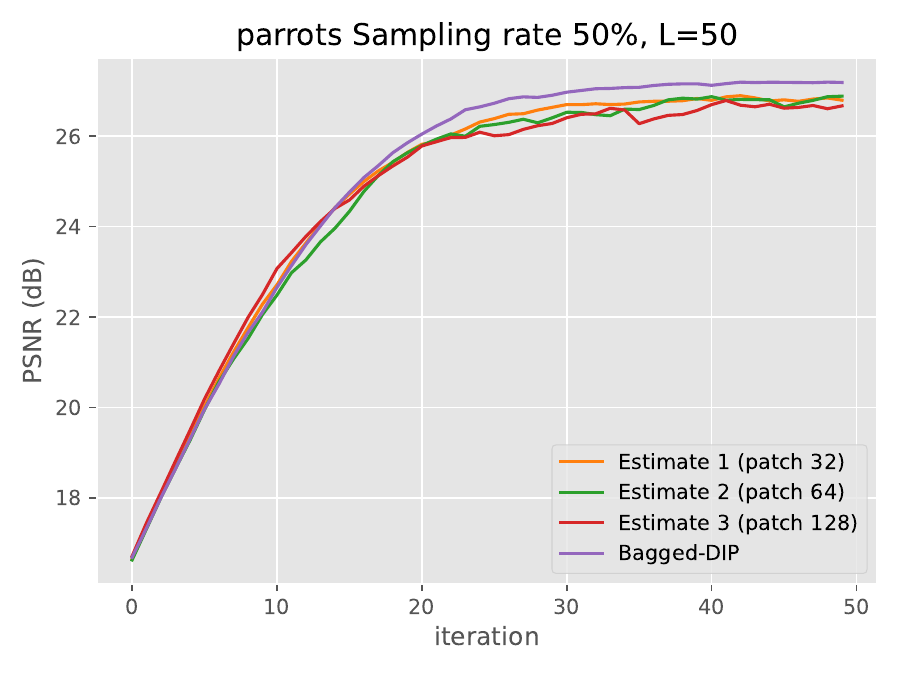}
\includegraphics[width=0.23\textwidth]{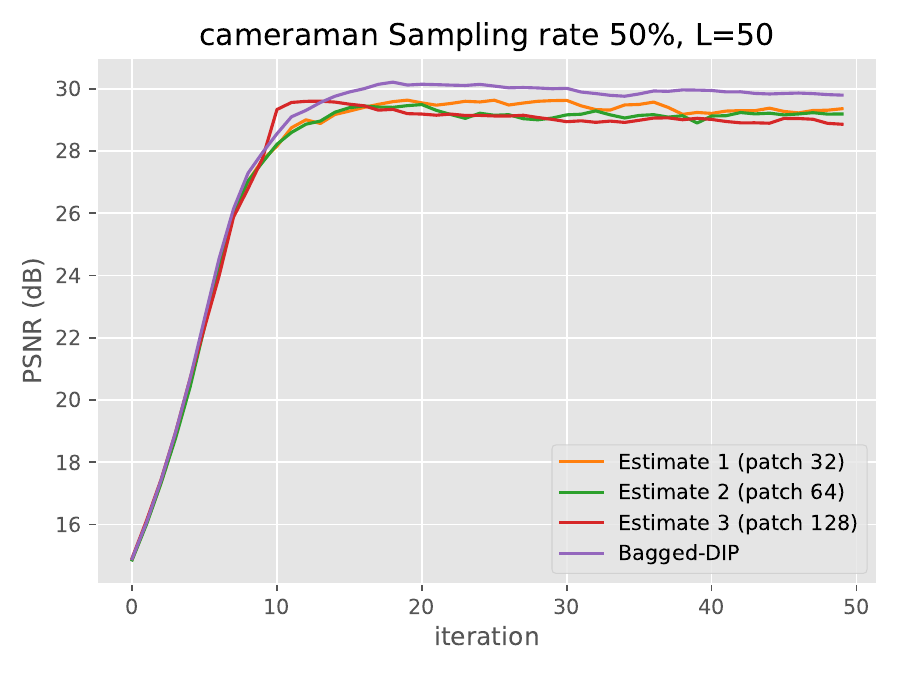}
\includegraphics[width=0.23\textwidth]{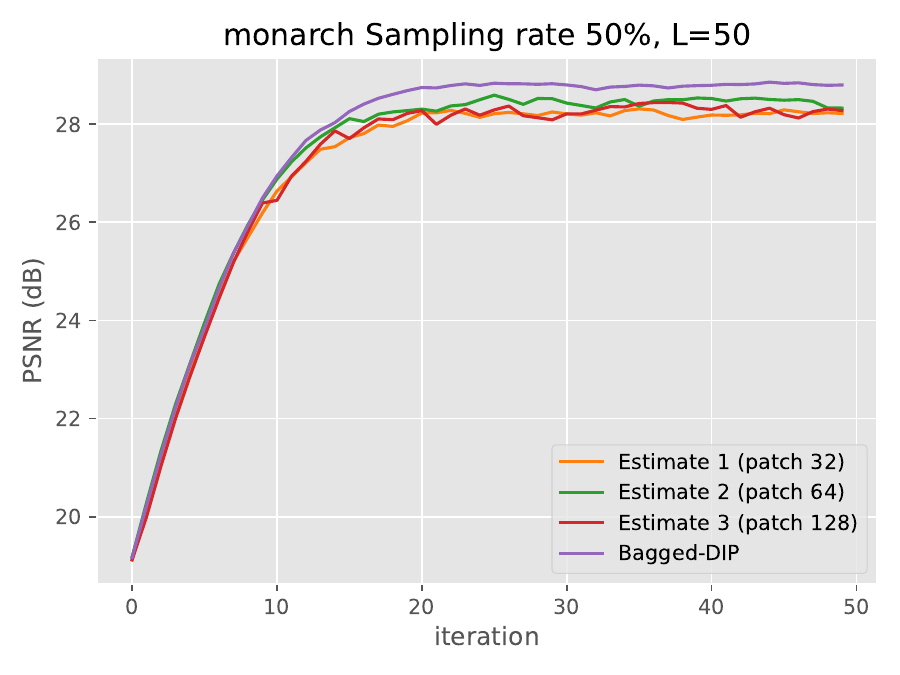}
\caption{Comparison of Bagged-DIPs and three sophisticated DIP estimates on 8 images, $m/n=0.5$, $L=50$.}
\label{fig:baselines_S50_L50}
\end{figure}

\section{Time cost of PGD algorithm.}\label{sec:initialization:comp}
We provide the timing details of training the Bagged-DIPs-based PGD algorithm in Table~\ref{tab:PGD_time}. The time cost of each iteration in PGD is affected by $m/n$, and the iterations needed for training Bagged-DIPs. The experiments are performed on Nvidia RTX 6000 GPUs, and we record the time it uses accordingly for different tasks.

\begin{table}[ht]
    \centering
    \scriptsize
\begin{tabular}{ccccc}\hline
    &\textbf{Bagged-DIPs training iterations} & \textbf{12.5\%} & \textbf{25\%} & \textbf{50\%} \\\hline
    &200, 300, 400 & $\sim$ 65  & $\sim$ 75  & $\sim$ 105 \\
    &400, 600, 800 & $\sim$ 115  & $\sim$ 125  & $\sim$ 155 \\
    &1k, 2k, 4k & $\sim$ 330  & $\sim$ 340  & $\sim$ 370 \\\hline
\end{tabular}
\caption{Time (in seconds) required for each iteration of the PGD with different sampling rate and iterations for training Bagged-DIPs.}
\label{tab:PGD_time}
\end{table}

We also find that, compared with intialize $\mathbf{x}_0$ with fixed values, using initialization $\mathbf{x}_0 = \frac{1}{L} \sum^L_{l=1} |\Bar{A}^T \mathbf{y}_l|$ helps improve the convergence rate. But the final reconstructed performance does not depend on the initialization methods we compare. The effect of input intialization in PGD algorithm is shown in Figure~\ref{fig:init_compare}.

\begin{figure}[ht]
\centering
\includegraphics[width=0.3\textwidth]{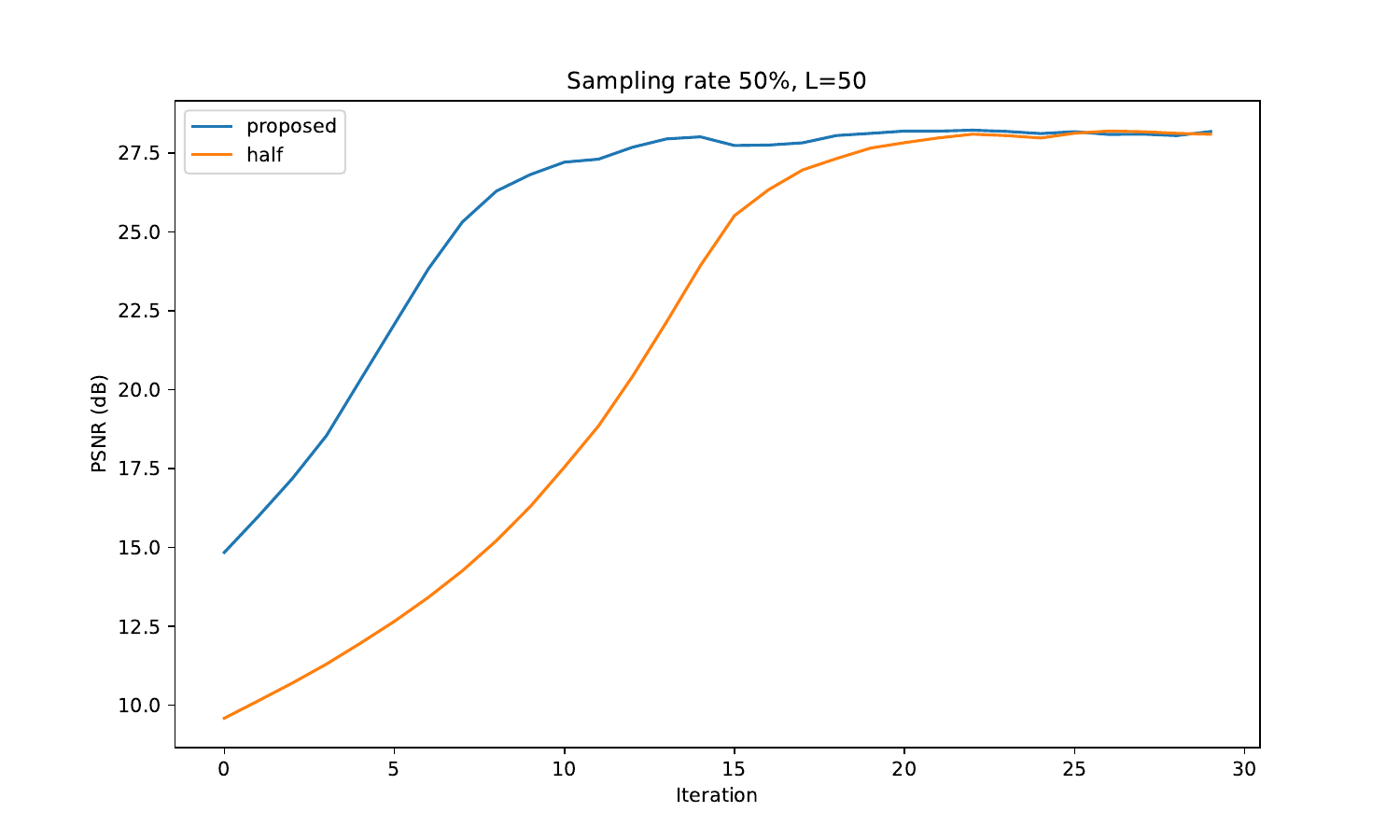}
\vspace{-10pt}
\caption{Comparison between two initializations: (1) proposed: our inititalization method; (2) half: a vector whose elements are all $0.5$.}
\label{fig:init_compare}
\end{figure}

\section{Qualitative results of Bagged-DIPs-based PGD}\label{sec:qualitative_results}
We show the qualitative results of the Bagged-DIPs-based PGD algorithm in Figure~\ref{fig:qualitative_vis}. The images shown are reconstructed from $L=25, 50, 100$ looks of $12.5, 25, 50 \%$ downsampled complex-valued measurements. Row 1-3 are $m/n=0.125$ with $L=25,50,100$ respectively, row 4-6 are $m/n=0.25$ with $L=25,50,100$ respectively, row 7-9 are $m/n=0.5$ with $L=25,50,100$ respectively. We also report the PSNR/SSIM under each reconstructed image. 

An evident advantage observed in the reconstructed images is that utilizing bagging estimates with various patch sizes helps alleviate the blocking issue that may arise when relying solely on a single patch size (e.g., patch size = 32). This is because boundaries in smaller patch sizes do not necessarily align with boundaries in larger patch sizes, and aggregating estimates from different patch sizes through averaging can effectively mitigate the blocking issue.

\begin{figure}[ht]
\centering
\includegraphics[width=0.45\textwidth]{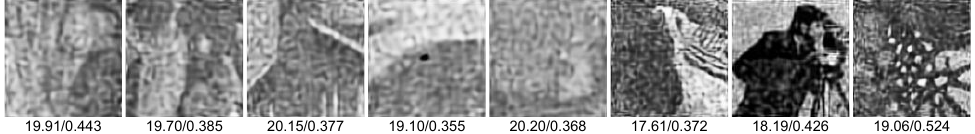}\\
\includegraphics[width=0.45\textwidth]{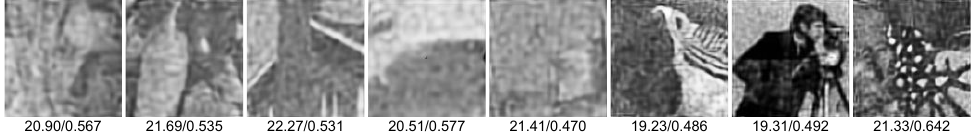}\\
\includegraphics[width=0.45\textwidth]{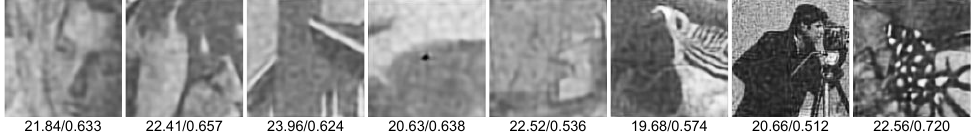}\\
\includegraphics[width=0.45\textwidth]{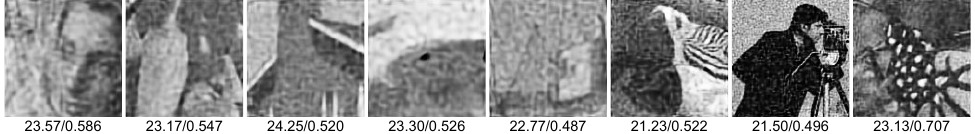}\\
\includegraphics[width=0.45\textwidth]{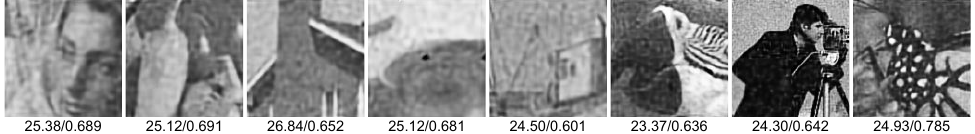}\\
\includegraphics[width=0.45\textwidth]{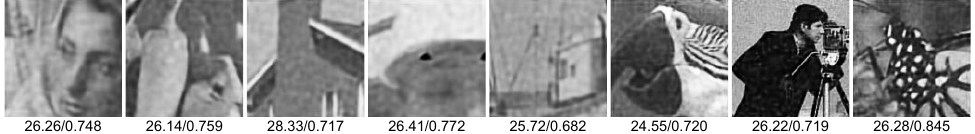}\\
\includegraphics[width=0.45\textwidth]{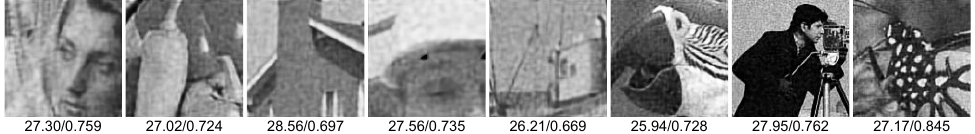}\\
\includegraphics[width=0.45\textwidth]{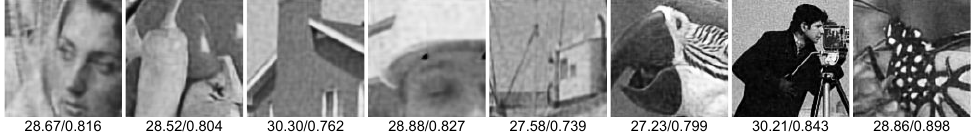}\\
\includegraphics[width=0.45\textwidth]{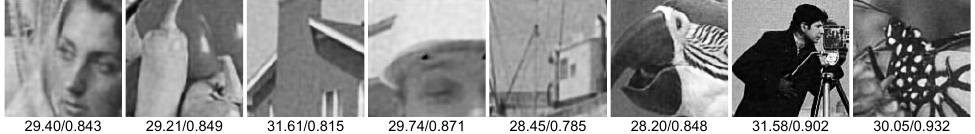}
\caption{Reconstructed images from $L=25, 50, 100$ looks of $12.5, 25, 50 \%$ downsampled complex-valued measurements. Row 1-3 are $m/n=0.125$ with $L=25,50,100$ respectively, row 4-6 are $m/n=0.25$ with $L=25,50,100$ respectively, row 7-9 are $m/n=0.5$ with $L=25,50,100$ respectively}
\label{fig:qualitative_vis}
\end{figure}


\end{document}